\def \graphicswidth {21pc}
\newtheorem{theorem}{Theorem}
\newtheorem{proposition}{Proposition}
\newtheorem{lemma}{Lemma}
\theoremstyle{definition}
\newtheorem{definition}{Definition}
\theoremstyle{remark}
\newtheorem*{remark}{Remark}
\newcommand{\Cub}{C_{\mathrm{ub}}}
\DeclareMathOperator*{\argmax}{\arg\!\max}
\newcounter{MYtempeqncnt}
\begin{document}
\allowdisplaybreaks

\title{
Capacity of Remotely Powered Communication
}

\author{Dor Shaviv, Ayfer \"{O}zg\"{u}r, and Haim H. Permuter
\thanks{
D.~Shaviv and A.~\"Ozg\"ur were supported by a Robert Bosch Stanford Graduate Fellowship, by the National Science Foundation (NSF) under grant CCF-1618278 and the Center for Science of Information (CSoI), an NSF Science and Technology Center, under grant agreement CCF-0939370.
H.~H.~Permuter was supported by the European Research Council under the European Union's Seventh Framework Programme (FP7/2007-2013) / ERC grant agreement n°337752 and by Israeli Science foundation (ISF).
This work was presented in part at the 2016 IEEE International Symposium on Information Theory (ISIT)~\cite{RPCISIT2016}.}
\thanks{D.~Shaviv and A.~\"{O}zg\"{u}r are with the Department of Electrical Engineering, Stanford University, Stanford, CA 94305, USA (e-mail: shaviv@stanford.edu; aozgur@stanford.edu).
H.~H.~Permuter is with the Department of Electrical and Computer Engineering, Ben-Gurion University of the Negev, Beer-Sheva 84105, Israel (e-mail: haimp@bgu.ac.il).}
}

\maketitle

\begin{abstract}
Motivated by recent developments in wireless power transfer, we study communication with a remotely powered transmitter. We propose an information-theoretic model where a charger can dynamically decide on how much power to transfer to the transmitter based on its side information regarding the communication, while the transmitter needs to dynamically adapt its coding strategy to its instantaneous energy state, which in turn depends on the actions previously taken by the charger. We characterize the capacity as an $n$-letter mutual information rate under various levels of side information available at the charger. When the charger is finely tunable to different energy levels, referred to as a ``precision charger'', we show that these expressions reduce to single-letter form and there is a simple and intuitive joint charging and coding scheme achieving capacity. The precision charger scenario is motivated by the observation that in practice the transferred energy can be controlled by simply changing the amplitude of the beamformed signal. When the charger does not have sufficient precision, for example when it is restricted to use a few discrete energy levels, we show that the computation of the $n$-letter capacity can be cast as a Markov decision process if the channel is noiseless. This allows us to numerically compute the capacity for specific cases and obtain insights on the corresponding optimal policy, or even to obtain closed form analytical solutions by solving the corresponding Bellman equations, as we demonstrate through examples. Our findings provide some surprising insights on how side information at the charger can be used to increase the overall capacity of the system.
\end{abstract}

\begin{IEEEkeywords}
Energy harvesting, wireless energy transfer, channel capacity, actions, Markov decision process, infinite horizon, average reward.
\end{IEEEkeywords}

\section{Introduction}
\label{sec:introduction}

Advancements in radio frequency (RF) power transfer over the recent decades have enabled wireless power transfer over longer distances (see \cite{huang2014enabling} and references therein).\footnote{Power can also be transferred by other modalities such as coherent optical radiation 
but such techniques are currently less common as compared to RF power transfer.}
Combined with synergistic recent developments in wireless communication, such as massive MIMO, small cells and millimeter wave communication, RF power transfer is expected to be one of the dominant modes for powering  Internet of Things (IoT)-type wireless devices in the near future. For example, Fig.~\ref{fig:illustrations} illustrates two topologies considered for indoor IoT applications. In Fig.~\ref{fig:illustrations}-\subref{subfig:receiver_charges}, wireless sensors distributed around a house communicate to a central sink node, which gathers all the information and serves as a gateway to the cloud. While the sink node has access to traditional power, the wireless sensors themselves do not have any traditional batteries. They  harvest the RF energy over the downlink channel in a small rechargeable battery, which allows them to transmit over the uplink. Eliminating the traditional battery at the sensor nodes is desirable for a number of reasons. First, it enables the sensors to operate in a maintenance-free fashion, which in turn enables more flexible deployment models and applications where sensors can be embedded in structures or put in hard to reach places. Second, it allows a significant decrease in the size and cost of these sensors, which is especially important if such IoT applications are to scale to massively large numbers.  

\begin{figure}
\centering
\captionsetup[subfigure]{width=9.5pc}
\vspace{-1em}
\subfloat[]{
\includegraphics[width=9pc]{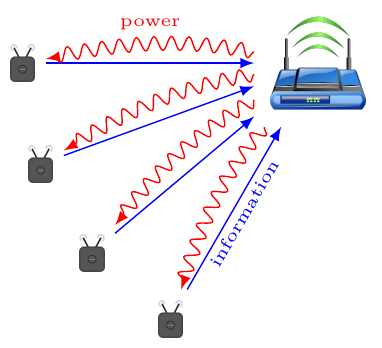}
\label{subfig:receiver_charges}
}
\subfloat[]{
\includegraphics[width=9pc]{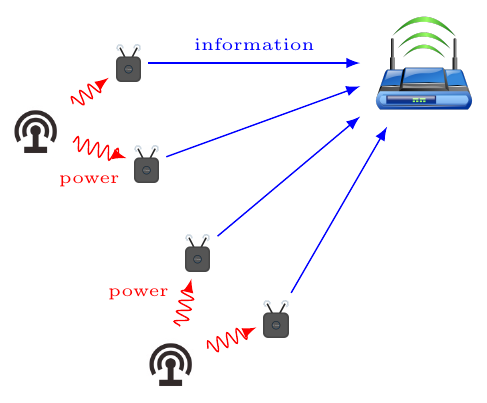}
\label{subfig:power_beacons}
}
\caption{(a) Access point powers nodes over the downlink channel which transmit over the uplink.
(b) Nodes are powered by dedicated power beacons.}
\label{fig:illustrations}
\end{figure}

In some cases, the separation between the sensors and the sink node can be too large to enable efficient wireless power transfer over the downlink channel with existing technologies. An alternative topology considered in this case is to deploy a few dedicated power beacons that have access to traditional power to wirelessly charge nearby sensor nodes while they communicate to the central sink node. 
See Fig.~\ref{fig:illustrations}-\subref{subfig:power_beacons}. Unlike access points or base stations, power beacons do not require any backhaul links, and therefore their low cost can allow dense deployments.  A similar setting arises in biomedical sensing applications where tiny wireless nodes inside the human body can be powered remotely by a wireless charger carried outside  the body.

This paper attempts to model and study such remotely powered communication systems from an information-theoretic perspective. We observe that while there are some straightforward ways to design such systems---for example, a constant level of power can be continuously transferred to the sensors, or the rechargeable batteries of the sensors can be replenished periodically, for example, by beamforming to different spatial clusters of sensors at different times, and these are the approaches currently taken in system implementations~\cite{Arb1journal}---in most cases the charger (being either the sink node in Fig.~\ref{fig:illustrations}-\subref{subfig:receiver_charges} or the power beacons in Fig.~\ref{fig:illustrations}-\subref{subfig:power_beacons}) can have access to instantaneous side information regarding the communication and it can use this side information to transfer power more intelligently to the sensors by operating in a dynamic fashion. For example, in the configuration in Fig.~\ref{fig:illustrations}-\subref{subfig:receiver_charges}, the charger is the sink node, i.e. the receiver itself, and therefore it can potentially utilize  its causal observations of the channel output to make better charging decisions.\footnote{We assume that the sink node does not have any  information of its own to communicate to the sensors and therefore communication is one-way, from the sensors to the sink node. For this reason, we will often refer to the sink node as the receiver and the sensor nodes as the transmitters in the sequel. In practice, the downlink channel is sometimes used to broadcast control and synchronization information. In the case of Fig.~\ref{fig:illustrations}-\subref{subfig:receiver_charges}, this would mean that the downlink channel is used for simultaneous energy and information transfer. This is not the case we consider here.} In Fig.~\ref{fig:illustrations}-\subref{subfig:power_beacons} on the other hand, the proximity of the charger (i.e. the power beacon) to the sensor nodes, i.e. the transmitters, can allow it to almost noiselessly observe the transmitters' inputs to the channel. These settings essentially give rise to an interactive system where the charger can dynamically decide on how much power to transfer to a transmitter based on its side information regarding the communication, while the transmitter needs to dynamically adapt its coding strategy to its instantaneous energy state, which in turn depends on the actions taken by the charger.

From a more fundamental perspective, this new way of powering wireless devices introduces a new paradigm in communications. Traditionally, the encoder and the energy source are co-located on the same device; codewords can be designed ahead of time and  the transmitter can transmit the desired codeword by simply drawing the necessary energy from the energy source readily available on the device. In this new paradigm, however, the encoder is physically separated from its energy source. This introduces novel constraints on the communication and requires both the encoder/transmitter and the energy source/charger to operate in a dynamic fashion. 

The contribution of the current paper is to develop an information-theoretic model for such remotely powered communication systems and $n$-letter expressions for their capacity under various assumptions on the side information available to the charger (Section~\ref{sec:capacity}).
We then proceed to explicitly compute these $n$-letter expressions in two important special cases.
In the first case, the charger is finely tunable to different energy levels, hence it is referred to as a \emph{precision charger}. This case is motivated by the observation that in practice it can be easy to finely control the amount of transferred energy by simply changing the amplitude of the beamformed signal. We show that the $n$-letter expressions reduce to single-letter capacity formulas in this case, and a simple capacity-achieving scheme is provided (Section~\ref{sec:precision_charger}).
The second special case is the noiseless channel. Note that although the capacity of noiseless channels is trivial when the channel is memoryless, this is not the case when memory is present. 
In our model, the input constraint has memory which depends not only on the input, but also on the energy arrivals determined by the charger.
In fact, one of the cases we consider turns out to be an extension of the \emph{constrained coding} problem, initially studied by Shannon in his 1948 paper~\cite{Shannon1948}, and with vast literature on the subject since (see~\cite{MarcusSiegelWolf1992,MarcusRothSiegel2001} for a good introduction). A bulk of the literature in this research line focuses on noiseless channels, while the \emph{noisy} capacity of even the most simple constrained coding systems, such as a binary erasure channel with no consecutive ones, is still open, with only asymptotic results known~\cite{li2016asymptotics}. 
In Section~\ref{sec:noiseless_channel}, we show that the $n$-letter capacity in the noiseless case can be cast as a Markov decision process (MDP), which enables leveraging tools from dynamic programming to efficiently compute it. By solving the Bellman equation either analytically or numerically using the value iteration algorithm, one can find the capacity along with the optimal input distributions, which provide insights into the capacity-achieving strategy. In Section~\ref{sec:example}, we solve the Bellman equation for a specific example, which also provides some surprising insights into the overall capacity gain with different levels of side information at the charger.


\paragraph*{Related Work} Previous work has considered an information-theoretic approach to communication with wireless devices that can harvest their energy from the natural resources in their environment~\cite{OzelUlukus2012,Tutuncuogluetal2013,JogAnantharam2014,
MaoHassibi2013,DongFarniaOzgur2015,ShavivNguyenOzgur2016,
ShavivOzgur2015,ShavivOzgurPermuter2015}. In that case, the energy arrival process is dictated by the outside world and is assumed to follow some given stochastic model, while in our case this process is controlled by the charger. Another related line of work is~\cite{varshney2008transporting,grover2010shannon,tandon2015subblock} which focuses on simultaneous information and energy transfer from the transmitter to the receiver. The setting considered there is the standard point-to-point channel and the emphasis is on designing transmission strategies that are simultaneously good for conveying information and energy. Note that this is different from our model; in our case the charger's signal need not carry any information and the transmitter is remotely powered, while in \cite{varshney2008transporting,grover2010shannon,tandon2015subblock} it is the transmitter charging the receiver.
In a related model~\cite{popovski2013interactive}, two nodes simultaneously transfer energy and information to each other in an interactive fashion.
Finally, there has been significant amount of work in the recent communication theory literature (see \cite{shaviv2015universally,review} and references therein), that focus on resource allocation for maximizing end-to-end throughput in networks where nodes can share both energy and information.
However, throughput optimization does not capture the coding aspect of communication, and this framework is substantially different than the information-theoretic framework.
From a technical perspective, our problem resembles a constrained coding problem where the constraint can be partially controlled.
Reference~\cite{fouladgar2014constrained} suggests using constrained codes for simultaneous energy and information transfer.
Our model is  reminiscent of \cite{
asnani2014feed},
where the encoder and/or the decoder can take \emph{actions} which influence the availability of feedback to the transmitter.
In our case, these actions are taken by the charger (which is not the transmitter nor the receiver), and represent energy transferred to the transmitter, thereby affecting the state of the system, although not directly controlling it.

\section{System Model}
\label{sec:system_model}

We begin by introducing the notation used throughout the paper. Let uppercase, lowercase, and calligraphic letters denote random variables, specific realizations of RVs, and alphabets, respectively.
For two jointly distributed RVs $(X,Y)$, let $p(x)$, $p(x,y)$, and $p(y|x)$, respectively denote the marginal of $X$, the joint distribution of $(X,Y)$, and the conditional distribution of $Y$ given $X$.
We will sometimes use the notation $y(x)$ to mean that $y$ is some deterministic function of~$x$.
Let $\mathbb{E}[\,\cdot\,]$ denote expectation.
For $m\leq n$, $X_m^n=(X_m,X_{m+1},\ldots,X_{n-1},X_n)$, and $X^n=X_1^n$.
Additionally, when the length is clear from the context, we sometimes denote vectors by boldface letters, e.g. $\mathbf{x}\in\mathcal{X}^n$.
All logarithms are to base 2. 

\begin{figure}
\centering
\begin{tikzpicture}
	\def \arlen {1cm};
	\def \blockwidth {0cm}
	\def \blockheight {0.7cm}

	\node[draw,rectangle,minimum width=\blockwidth,
		minimum height=\blockheight] 
		(Tx) at (0,0) {Transmitter};
		
	\node[draw,rectangle,right]
		(Channel) at ($(Tx.east)+(0.75,0)$) {$P_{Y|X}$};
	
	\node[draw,rectangle,right,minimum width=\blockwidth,
		minimum height=\blockheight]
		(Rx) at ($(Channel.east)+(0.75,0)$) {Receiver};
	
	\node[draw,rectangle,above,minimum width=0.7cm,minimum height=1cm] (Battery) at ($(Tx.north)+(0,0.2)$) {};
	\foreach \y in {0.2,0.4,0.6,0.8}
	{
		\draw[-] ($(Battery.south west)+(0,\y)$) -- 
			($(Battery.south east)+(0,\y)$);
	}
	\node[left] at ($(Battery.west)$) {$\bar{B}$};
	
	\node[draw,rectangle,above,minimum width=\blockwidth,
		minimum height=\blockheight]
		(Charger) at ($(Battery.north)+(0,1)$)
		{Charger};
	
	
	\node (M) at ($(Tx.west)-(\arlen,0)$) {$M$};
	
	\tikzstyle{every path}=[->]
	\draw (Battery) -- (Tx);
	\draw (Tx) -- node[below] {$X_t$} (Channel);
	\draw (Channel) -- node[below] {$Y_t$} (Rx);
	\draw (Charger) -- node[left] {$E_t$} (Battery);
	\draw (M) -- (Tx);
	\draw (Rx) -- node[right,pos=1] {$\hat{M}$} 
		($(Rx.east)+(0.5,0)$);
	
	\tikzstyle{every path}=[->,dashed];
	
	\draw (M) |- (Charger);
	
	
	\coordinate (ChannelRx) at 
		($(Channel.east) !.5! (Rx.west)$);
		
	\coordinate (TxChannel) at 
		($(Tx.east) !.5! (Channel.west)$);
	
	
	
	\coordinate (ChargerXConnection) at
		($(Charger.east)-(0,0.15)$);
	
	\coordinate (ChargerYConnection) at
		($(Charger.east)+(0,0.15)$);
	
	\draw (TxChannel) |- 
		(ChargerXConnection);	
		
	\draw (ChannelRx) |- 
		(ChargerYConnection);	
	
	
	
\end{tikzpicture}
\caption{Model of energy harvesting communication with a charger.
The transmitter is equipped with a battery of size $\bar{B}$, and is communicating to the receiver over a memoryless channel. A third terminal, the \emph{charger}, is charging the transmitter's battery via the energy process $E_t$.
The charger may observe some side information, as indicated by the dashed lines, such as the message; the channel input $X_t$; the channel output $Y_t$; or no side information at all.}
\label{fig:channel_model}
\end{figure}
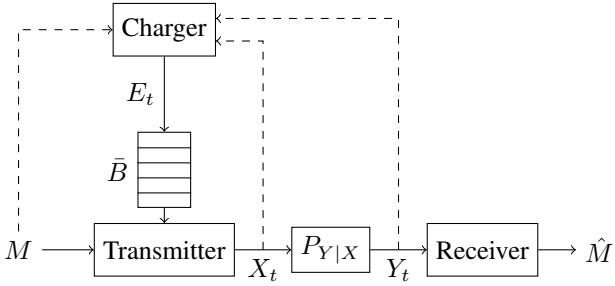

The model is depicted in Fig.~\ref{fig:channel_model}.
The physical channel is a discrete memoryless channel (DMC), with input $X_t\in\mathcal{X}$ and output $Y_t\in\mathcal{Y}$ at time $t$, and transition probability $p(y|x)$.
Additionally, the channel has an associated cost function $\phi:\mathcal{X}\to\mathbb{R}_+$, called the \emph{energy cost function}, denoting the amount of energy used for transmission by each symbol. 
The transmitter has a battery with finite capacity $\bar{B}$,
and this battery is charged by an \emph{energy arrival} $E_t\in\mathcal{E}\subset\mathbb{R}_+$ at each time slot $t$.

Let the amount of energy in the battery at the beginning of time slot $t$ be denoted by $B_t$. We employ a store-and-use model, in which the incoming energy is first stored in the battery before it can be used by the transmitter.
Hence, upon arrival of energy $E_t$, the amount of available energy for transmission is $\min\{B_t+E_t,\bar{B}\}$.
The transmitter observes the battery level $B_t$ and the energy arrival $E_t$, and outputs the input symbol $X_t$, where the symbol energy $\phi(X_t)$ is constrained by the available energy.
At the beginning of the next time slot $t+1$, the amount of energy in the battery is  whatever remains after using $\phi(X_t)$ for transmission.
The input energy constraint and the evolution of energy in the battery can be therefore described via the following \emph{energy constraints}:
\begin{align}
	\phi(X_t)&\leq\min\{B_t+E_t,\bar{B}\}, 
		\label{eq:EH_constraint}\\*
	B_t&=\min\{B_{t-1}+E_{t-1},\bar{B}\}-\phi(X_{t-1}).
		\label{eq:EH_battery}
\end{align}
Without loss of generality, we assume that $B_1=\bar{B}$.\footnote{This is essentially the commonly used store-and-use model in the energy harvesting literature (see for example~\cite{ShavivNguyenOzgur2016}), however we slightly change the notation as it turns out to be more convenient in the development of the MDP formulations later in the paper. In \cite{ShavivNguyenOzgur2016}, $B_t$ signifies the amount of energy in the battery \emph{after} being charged by $E_t$. 
}

We make the following assumptions:
\begin{enumerate}
\item
The symbol energies and arrival energies are non-negative: $\phi(x)\geq0$ for all $x\in\mathcal{X}$ and $e\geq0$ for all $e\in\mathcal{E}$.
\item 
There is at least one symbol $x\in\mathcal{X}$ such that $\phi(x)=0$; we call this the zero symbol and denote it by $x = 0$.
This way, the transmitter will always be able to transmit, even if the battery is completely empty and no energy arrives.
\item
Without loss of generality, we can assume $\phi(x)\leq\bar{B}$ for all $x\in\mathcal{X}$. This is because any $x$ with $\phi(x)>\bar{B}$ can never be transmitted, therefore we can remove it from $\mathcal{X}$ without changing the system.
\item
Similarly, $e\leq\bar{B}$ for all $e\in\mathcal{E}$, i.e. $\mathcal{E}\subseteq[0,\bar{B}]$. This is because any $e\in\mathcal{E}$ which is greater than $\bar{B}$ can be replaced by $e=\bar{B}$ without changing the system.
\item
There is at least one strictly positive $e\in\mathcal{E}$; otherwise, this is a degenerate case in which only symbols with zero energy are allowed.
\end{enumerate}

Based on the physical layout of the system, specifically the location of the charger, it may observe different side information.
This is illustrated in Fig. \ref{fig:channel_model} by the dashed arrows.
At the end of time slot $t-1$, the charger, based on its observations, decides on the energy $E_t$ that will be applied at the beginning of time $t$.
Since we want to account for the energy efficiency of the communication system, we impose an average cost constraint on the energy sequence $E_t$ :
\begin{equation}
\frac{1}{n}\sum_{t=1}^{n}E_t\leq\Gamma,
\label{eq:cost_constraint}
\end{equation}
where $\Gamma\geq0$ is a given cost constraint. Note that even though the charger may not be power-limited, such constraints are often imposed by regulatory bodies.

We define an $(R,n)$ code as a set of messages $\mathcal{M}=\{1,\ldots,2^{nR}\}$, transmitter encoding functions $f_{t}^{\mathrm{Tx}}$, charger encoding functions $f_t^{\mathrm{C}}$, and a decoding function $f^{\mathrm{Rx}}$.
The transmitter encoding functions are given by:
\begin{align}
	f_t^{\mathrm{Tx}}&:\mathcal{M}\times \mathcal{E}^t\to\mathcal{X}\qquad
		,t=1,\ldots,n.\label{eq:channel_encoding}
\end{align}
To transmit message $m\in\mathcal{M}$ at time $t=1,\ldots,n$, the transmitter sets $x_t=f_t^{\mathrm{Tx}}(m,e^t)$, where $e^t$ are the observed energy arrivals up to time $t$.
The battery state $b_t$ is a deterministic function of $(x^{t-1},e^{t-1})$, therefore also of $(m,e^{t-1})$.
The functions $f_t^{\mathrm{Tx}}$ must satisfy the energy constraint~\eqref{eq:EH_constraint} for every possible energy arrivals sequence $e^n$:
\[\phi\big(f_t^{\mathrm{Tx}}(m,e^t)\big)\leq \min\{
	b_t(m,e^{t-1})+e_t,\bar{B}\}.\]

The charger encoding functions $f_t^{\mathcal{C}}$ depend on the side information available to the charger. 
In Appendix~\ref{sec:strategies_capacity_and_special_cases} we provide capacity expressions for the general case where the side information can be an arbitrary function of the channel input and output, however in the main body of the paper we focus on several simpler cases motivated by different settings of practical interest:
\subsubsection{Generic Charger}
The charger does not observe any side information.
The charger encoding functions in this case are simply 
\begin{equation}
f_t^{\mathrm{C}}:\emptyset\to\mathcal{E},
\end{equation}
or in other words, the charger uses a predetermined \emph{fixed} (non-random) charging sequence $e^n$.
Denote the capacity of this system by $C_{\emptyset}$.

\subsubsection{Receiver Charges Transmitter}
This case studies the scenario where the receiver itself charges the transmitter, for example via wireless energy transfer. See Fig. \ref{fig:illustrations}-\subref{subfig:receiver_charges}.
At the end of time $t$, the charger observes $Y^{t-1}$ and decides on the amount of energy that will arrive at the transmitter's battery at the beginning of time $t$.
Hence the charger encoding functions are 
\begin{equation}
f_t^{\mathrm{C}}:\mathcal{Y}^{t-1}\to\mathcal{E}.
\end{equation}
The capacity in this case is denoted by $C_Y$.

\subsubsection{Charger Adjacent to Transmitter}
Suppose the charger is situated at a much closer distance to the transmitter than the actual receiver, as in Fig. \ref{fig:illustrations}-\subref{subfig:power_beacons}. We can model this case by assuming that the charger noiselessly observes the transmitter's inputs to the channel. The charger then has strictly causal observations of the input, therefore the charger encoding functions are 
\begin{equation}
f_t^{\mathrm{C}}:\mathcal{X}^{t-1}\to\mathcal{E}.
\end{equation}
We denote the capacity in this case by $C_X$.

\subsubsection{Fully Cognitive Charger}
In this case, we assume that the charger knows the message to be transmitted ahead of time. Equivalently, the transmitter can be thought of as charging itself using a remote power source. Note that this does not correspond to the conventional average power constraint, since the transmitter still has a finite battery and must satisfy the battery constraints. We assume the charger has full knowledge of the message to be transmitted, hence the charger can decide on its charging sequence ahead of time,
\begin{equation}
f^{\mathrm{C}}:\mathcal{M}\to\mathcal{E}^n.
\end{equation}
We denote the capacity in this case by $C_{M}$.

The charger outputs 
$E_t=f_t^{\mathrm{C}}(\cdot)$, where the appropriate input variable is considered for each of the cases described above.
The energy sequence must satisfy the cost constraint~\eqref{eq:cost_constraint}:
$\frac{1}{n}\sum_{t=1}^{n}f_t^{\mathrm{C}}(\,\cdot\,)\leq\Gamma$.
Note that this must hold for \emph{every} realization of the side information observed by the charger.

Finally, the receiver decoding function is
\begin{equation}
g:\mathcal{Y}^n\to\mathcal{M},
\label{eq:decoding}
\end{equation}
and the receiver sets $\hat{M}=g(Y^n)$.

The probability of error is
\[
P_e=2^{-nR}\sum_{m\in\mathcal{M}}\Pr(\hat{M}\neq m\ |\ m\text{ was transmitted}).\]
A rate $R$ is achievable if there exists a sequence of $(R,n)$ codes such that $P_e\to0$ as $n\to\infty$.
The capacity $C$ is the supremum of all achievable rates.

\section{Capacity}
\label{sec:capacity}

In this section we will state $n$-letter capacity expressions for each of the cases mentioned in the previous section.
To this end, define the following set:
\begin{IEEEeqnarray}{RLll}
\mathcal{A}_n(\Gamma)=\Big\{
&x^n\in\mathcal{X}^n,\ e^n\in\mathcal{E}^n:\nonumber\\*
&b_1=\bar{B},\nonumber\\*
&b_{t+1}=\min\{b_t+e_t,\bar{B}\}-\phi(x_t),\nonumber\\*
&\hspace{85pt}t=1,\ldots,n-1,\nonumber\\*
&\phi(x_t)\leq\min\{b_t+e_t,\bar{B}\},\nonumber\\*
&\hfill t=1,\ldots,n,\nonumber\\*
&\sum_{t=1}^{n}e_t\leq n\Gamma
&&\Big\}.\quad
\label{eq:A_def}
\end{IEEEeqnarray}
This is the set of all transmitter-charger codeword pairs $(x^n,e^n)$ that satisfy the energy and cost constraints \eqref{eq:EH_constraint}--\eqref{eq:cost_constraint}.
Hence we must have
\begin{equation}\label{eq:inputconst}
(X^n,E^n)\in\mathcal{A}_n(\Gamma)\quad\text{a.s.}
\end{equation}
Note that this condition defines a set of allowed input distributions.

Using this condition, we are now ready to give the expressions for capacity.
\begin{theorem}\label{thm:capacity}
The capacity of each of the cases defined in Section \ref{sec:system_model} is given by:
\begin{align}
C_\emptyset(\Gamma)&=\lim_{n\to\infty}\frac{1}{n}
	\max_{\substack{p(x^n),\ e^n:\\ (X^n,e^n)\in\mathcal{A}_n(\Gamma)\text{ a.s.}}}
	I(X^n;Y^n),\label{eq:C0}\\
C_Y(\Gamma)&=\lim_{n\to\infty}\frac{1}{n}
	\max_{\substack{p(x^n\|e^n),\ \{e_t(y^{t-1})\}_{t=1}^{n}:\\
		(X^n,e^n(Y^{n-1}))\in\mathcal{A}_n(\Gamma)\text{ a.s.}}}
	I(X^n\to Y^n),\label{eq:CY}
\intertext{
where $I(X^n\to Y^n)=\sum_{t=1}^{n}I(X^t;Y_t|Y^{t-1})$ is directed information and the maximum is over all functions $\{e_t(y^{t-1})\}_{t=1}^{n}$ and causally conditioned input distributions $p(x^n\|e^n)=\prod_{t=1}^{n}p(x_t|x^{t-1},e^t)$,
}
C_X(\Gamma)&=\lim_{n\to\infty}\frac{1}{n}
	\max_{\substack{p(x^n),\ \{e_t(x^{t-1})\}_{t=1}^{n}:\\
		(X^n,e^n(X^{n-1}))\in\mathcal{A}_n(\Gamma)\text{ a.s.}}}
	I(X^n;Y^n),\label{eq:CX}\\
C_M(\Gamma)&=\lim_{n\to\infty}\frac{1}{n}
	\max_{\substack{p(x^n,e^n):\\ 
	(X^n,E^n)\in\mathcal{A}_n(\Gamma)\text{ a.s.}}}
	I(X^n;Y^n).\label{eq:CM}
\end{align}
\end{theorem}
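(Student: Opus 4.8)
The plan is to prove each of the four expressions by a matching achievability and converse argument, exploiting the structural fact that three of the four cases ($C_\emptyset$, $C_X$, $C_M$) admit a Markov chain $M \to X^n \to Y^n$, while the receiver-charging case $C_Y$ introduces a feedback path through the energy arrivals and therefore calls for directed information. First I would record the observation that drives both directions: in the generic, adjacent-to-transmitter, and fully-cognitive cases the energy sequence is a deterministic function of $M$ or of $(M, X^{t-1})$, so $X^n$ is ultimately a deterministic function of the message, whereas in the receiver-charging case $X_t$ is a function of $(M, Y^{t-1})$ because $E_t = e_t(Y^{t-1})$ feeds past outputs back to the encoder.

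For the converse I would start from Fano's inequality, $H(M \mid Y^n) \le n\epsilon_n$ with $\epsilon_n \to 0$, and write $nR \le I(M; Y^n) + n\epsilon_n$. In the three non-feedback cases the data-processing inequality along $M \to X^n \to Y^n$ yields $I(M;Y^n) \le I(X^n;Y^n)$, and I would then verify that the distribution induced by a good code is feasible, i.e. that $(X^n, e^n) \in \mathcal{A}_n(\Gamma)$ almost surely, so the bound is attained inside the feasible set of the respective maximization (with the energy entering as a function of $X^{t-1}$ for $C_X$ and as a joint law $p(x^n,e^n)$ for $C_M$). For $C_Y$ I would instead expand $I(M;Y^n) = \sum_{t=1}^n I(M;Y_t \mid Y^{t-1})$; using that $X^t$ is a function of $(M, Y^{t-1})$ together with the channel memorylessness $p(y_t\mid x^t, y^{t-1}) = p(y_t\mid x_t)$, each summand equals $I(X^t;Y_t\mid Y^{t-1})$, so $I(M;Y^n) = I(X^n \to Y^n)$, and the induced input law has the causally-conditioned form $p(x^n \| e^n) = \prod_t p(x_t \mid x^{t-1}, e^t)$ with $e_t = e_t(y^{t-1})$.

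For achievability I would use a random-coding argument with codewords drawn from a maximizing distribution that is, by definition, supported on $\mathcal{A}_n(\Gamma)$; because the constraint is encoded in the support, every drawn codeword automatically satisfies the energy and cost constraints, and joint-typicality decoding over the memoryless channel gives vanishing error at any rate below $\frac1n I(X^n;Y^n)$ (respectively below $\frac1n I(X^n \to Y^n)$ in the feedback case, where a directed-information feedback scheme is used and the transmitter reconstructs the charger's output-dependent decisions through the observed energy arrivals $e^t$). To pass from finite $n$ to the $\lim_n$, I would argue that the normalized maximum converges by a Fekete-type super-additivity argument obtained by concatenating feasible codeword pairs, handling the battery carry-over across blocks by noting that starting a block with a full battery $B_1 = \bar B$ is the least restrictive initialization.

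The hard part will be the achievability under the per-codeword almost-sure constraint together with the feedback in $C_Y$: I expect the main obstacle to be showing that the constraint-satisfying ensemble is rich enough to attain the mutual-information value without losing rate---in particular, reconciling the average cost constraint $\frac1n\sum_t e_t \le \Gamma$, which must hold for \emph{every} realization of the side information, with a random ensemble, and, in the $C_Y$ case, keeping the transmitter and charger synchronized through the energy arrivals so that the directed-information achievability scheme respects the battery dynamics \eqref{eq:EH_constraint}--\eqref{eq:EH_battery} on every sample path.
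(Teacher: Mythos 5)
Your converse is essentially the paper's: Fano plus data processing along $M\to X^n\to Y^n$ in the three cases where $X^n$ is a deterministic function of $M$, and the expansion $I(M;Y^n)=\sum_t I(M;Y_t\mid Y^{t-1})=\sum_t I(X^t;Y_t\mid Y^{t-1})$ for $C_Y$ (the paper routes this through Shannon strategies $U^n$, but the identities are the same). That part is sound.

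The achievability, however, has a genuine gap at exactly the point you flag as ``the hard part.'' Feasibility does \emph{not} concatenate: the set $\mathcal{A}_n(\Gamma)$ is defined with initial battery $b_1=\bar{B}$, and a codeword pair designed under that assumption may leave the battery at any level $b<\bar{B}$ at time $n$; a second codeword designed for a full battery can then violate \eqref{eq:EH_constraint} on the very first symbol of the next block. Your remark that a full battery is the ``least restrictive initialization'' cuts the wrong way --- it is precisely because the codebook is tailored to the least restrictive start that it can fail from any other start --- so the Fekete-type super-additivity argument does not go through as stated. The paper's fix is concrete: interleave each length-$n$ block with $\ell=\lceil\bar{B}/e_0\rceil$ ``silent'' slots in which the transmitter sends the zero symbol and the charger sends a fixed $e_0>0$, guaranteeing $B=\bar{B}$ at every block boundary; the rate loss $\tfrac{n}{n+\ell}$ and the cost overshoot $\tfrac{n\Gamma+\ell e_0}{n+\ell}$ both vanish as $n\to\infty$ (the latter absorbed by continuity of $C(\Gamma)$). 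Relatedly, your treatment of $C_Y$ leaves the codeword structure unspecified: since $E_t$ depends on the random $Y^{t-1}$, the transmitter's codewords must be causal \emph{functions} $u_t:\mathcal{E}^t\to\mathcal{X}$ (Shannon strategies) rather than sequences, and the charger a single fixed strategy $v^n$; with that reduction each block is one use of a memoryless super-channel from $U^n$ to $Y^n$, ordinary joint-typicality decoding achieves $\tfrac{1}{n+\ell}I(U^n;Y^n)$, and the chain $I(U^n;Y^n)=I(X^n\to Y^n)$ recovers the directed-information form --- no separate ``directed-information feedback scheme'' is needed, and the per-realization cost constraint is automatic because $v^n$ is deterministic and chosen feasible.
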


The theorem characterizes the capacity of the channel as a maximum mutual information rate between the input and the output over all input distributions and charging functions that are consistent with each other in the sense of \eqref{eq:inputconst}. For example, in the case of $C_\emptyset(\Gamma)$, the charger needs to fix a charging sequence $e^n$ ahead of time, and the allowable input distributions can assign positive probability to sequences $x^n$ which are consistent with $e^n$ in the sense that they can be transmitted under $e^n$. In the case of $C_Y(\Gamma)$, note that the causally conditioned input distribution and the charging functions chosen together with the channel transition probabilities induce a joint distribution on $(x^n, e^n)$. This joint distribution is constrained to assign positive probability to only the $(x^n, e^n)$ pairs that are consistent with each other, again in the sense that $x^n$ can be transmitted under $e^n$.

The complete proof is deferred to Appendix~\ref{sec:strategies_capacity_and_special_cases}, where we also provide general capacity expressions when the side information at the charger is an arbitrary function of the input and output. We provide here an outline of the proof:
Achievability follows by coding over blocks of length $n$. In each block we use a random code, generated from the capacity-achieving distribution, which uses only side information available in the current block and ignores side information from previous blocks. This is a legitimate scheme as long as the battery is fully-charged at the beginning of each block. For this purpose, the transmission blocks are interleaved with ``silent times'' of some fixed duration $\ell$, in which the transmitter remains silent and the charger transmits a fixed energy symbol $e_0>0$. By appropriately choosing $\ell$ and $e_0$, we can ensure that the battery will be completely recharged at the beginning of each block.
The average cost of this scheme is at most $\frac{n\Gamma+\ell e_0}{n+\ell}$.
By taking $n\to\infty$, we approach the rates given in Theorem~\ref{thm:capacity} while the average energy cost approaches $\Gamma$.
The converse follows easily from Fano's inequality.
Note that by standard operational arguments, the capacity $C(\Gamma)$ in each of the four cases is non-decreasing, concave, and continuous in $\Gamma$.

As a benchmark, we provide a simple upper bound on capacity, which is valid in all scenarios of charger cognition.
Regardless of the information available at the charger, the total amount of energy it can provide to the transmitter over $n$ channel uses cannot exceed $n\Gamma$.
The transmitter, in turn, is restricted by the same total energy constraint:
$\frac{1}{n}\sum_{t=1}^{n}\phi(X_t)\leq\Gamma$. Hence, the capacity of the system can not exceed the capacity of this channel under a simple average transmit energy constraint. This is made precise in the following proposition.
\begin{proposition}
\label{prop:average_power_upper_bound}
The capacity of the energy harvesting channel with a charger is upper bounded by:
\begin{equation}
\label{eq:Cub}
\Cub(\Gamma)=\max_{\substack{p(x):\\ \mathbb{E}[\phi(X)]\leq\Gamma}}I(X;Y).
\end{equation}
\end{proposition}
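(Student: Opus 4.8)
The plan is to establish this bound as a converse-type argument that applies \emph{uniformly} to all four side-information scenarios, by exploiting the fact that every one of them is constrained to the same feasible set $\mathcal{A}_n(\Gamma)$ defined in \eqref{eq:A_def}. The single idea driving the proof is that the battery dynamics force the \emph{transmit} energy to be (almost) average-power-limited by $\Gamma$, so no charging strategy, however informed, can outperform a memoryless channel operating under a plain average cost constraint. I would therefore bound the $n$-letter expressions already supplied by Theorem~\ref{thm:capacity}, rather than redo an operational converse from scratch.

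First I would prove the key deterministic inequality: for every $(x^n,e^n)\in\mathcal{A}_n(\Gamma)$,
\[
\sum_{t=1}^{n}\phi(x_t)\ \le\ \bar{B}+\sum_{t=1}^{n}e_t\ \le\ \bar{B}+n\Gamma .
\]
This follows by telescoping the battery recursion. Writing \eqref{eq:EH_battery} as $\phi(x_t)=\min\{b_t+e_t,\bar{B}\}-b_{t+1}\le (b_t+e_t)-b_{t+1}$ (dropping the $\min$), summing over $t=1,\dots,n$, and using $b_1=\bar{B}$ together with $b_{n+1}\ge 0$ yields the first inequality; the cost budget $\sum_t e_t\le n\Gamma$ from \eqref{eq:A_def} gives the second. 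The essential observation is that this inequality references only membership in $\mathcal{A}_n(\Gamma)$ and hence holds \emph{regardless} of what side information the charger observes; this is exactly why the resulting bound is simultaneously valid for $C_\emptyset$, $C_Y$, $C_X$, and $C_M$.

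Next I would feed this into the objectives of Theorem~\ref{thm:capacity}. Since \eqref{eq:inputconst} forces $(X^n,E^n)\in\mathcal{A}_n(\Gamma)$ almost surely, taking expectations gives $\frac{1}{n}\sum_{t=1}^{n}\mathbb{E}[\phi(X_t)]\le \Gamma+\bar{B}/n$ for every admissible input distribution. For the mutual-information objective, memorylessness of $p(y|x)$ gives $I(X^n;Y^n)\le\sum_{t=1}^{n}I(X_t;Y_t)$; the same manipulation---bounding $H(Y_t\mid Y^{t-1})$ by $H(Y_t)$ and using $H(Y_t\mid X^t,Y^{t-1})=H(Y_t\mid X_t)$---shows $I(X^n\to Y^n)=\sum_t I(X^t;Y_t\mid Y^{t-1})\le\sum_{t=1}^{n}I(X_t;Y_t)$ as well, so the directed-information expression for $C_Y$ is handled by the same bound. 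By the definition of $\Cub$ in \eqref{eq:Cub}, each term obeys $I(X_t;Y_t)\le \Cub\big(\mathbb{E}[\phi(X_t)]\big)$.

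Finally I would close the argument using the structural properties of $\Cub(\cdot)$, which, being a standard capacity--cost function, is non-decreasing, concave, and continuous. Jensen's inequality across the time indices gives $\frac{1}{n}\sum_t \Cub(\mathbb{E}[\phi(X_t)])\le \Cub\big(\frac{1}{n}\sum_t \mathbb{E}[\phi(X_t)]\big)$, and monotonicity together with the average-energy bound yields $\Cub\big(\frac{1}{n}\sum_t\mathbb{E}[\phi(X_t)]\big)\le \Cub(\Gamma+\bar{B}/n)$. Dividing the information bound by $n$ and letting $n\to\infty$, continuity of $\Cub$ collapses $\Cub(\Gamma+\bar{B}/n)$ to $\Cub(\Gamma)$, establishing the proposition in all four cases. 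The one genuinely delicate point is this limiting step: the initially full battery ($b_1=\bar{B}$) contributes an unavoidable slack of $\bar{B}$ to the total-energy budget, so the clean constraint $\frac{1}{n}\sum_t\phi(X_t)\le\Gamma$ is attained only asymptotically, and one must lean on continuity of the capacity--cost function to discharge the $\bar{B}/n$ correction rather than compare pointwise at $\Gamma$.
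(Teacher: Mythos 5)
Your proof is correct, and its mathematical core coincides with the paper's: the same telescoped battery inequality $\sum_{t}\phi(X_t)\le \bar{B}+\sum_t E_t\le \bar{B}+n\Gamma$, the same single-letterization $ \le\sum_t I(X_t;Y_t)\le\sum_t \Cub(\mathbb{E}[\phi(X_t)])$, and the same use of concavity, monotonicity, and continuity of $\Cub$ to absorb the $\bar{B}/n$ slack. The one genuine difference is the wrapper. The paper proves the bound operationally: it fixes an arbitrary $(R,n)$ code and runs Fano's inequality, $nR\le I(M;Y^n)+n\epsilon_n\le\sum_t I(M,E^t;Y_t|Y^{t-1})+n\epsilon_n\le\sum_t I(X_t;Y_t|Y^{t-1})+n\epsilon_n$, where the data-processing step through $(M,E^t)\to X_t$ handles all four side-information scenarios (including the feedback-like $C_Y$ case) in one stroke, with no reliance on Theorem~\ref{thm:capacity}. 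You instead bound the $n$-letter capacity formulas of Theorem~\ref{thm:capacity} directly, which requires you to treat the directed-information expression for $C_Y$ separately (your bound $I(X^t;Y_t|Y^{t-1})\le I(X_t;Y_t)$, using $H(Y_t|X^t,Y^{t-1})=H(Y_t|X_t)$ for the memoryless channel, is the right one) and to verify that $I(X^n;Y^n)\le\sum_t I(X_t;Y_t)$ is legitimate in the remaining cases because the input there does not depend on past outputs. Your route is valid and arguably more self-contained given that Theorem~\ref{thm:capacity} is already established, but it inherits a dependency on that theorem that the paper's operational argument avoids; the paper's version is the more robust one if one wants the upper bound to stand independently of the multi-letter characterization.
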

The proof is straightforward and will be deferred to Appendix~\ref{sec:average_power_upper_bound_proof}. Note that it is also easy to observe the following ordering between the capacities:
\begin{align*}
C_\emptyset(\Gamma)\leq C_X(\Gamma)\leq C_M(\Gamma)\leq\Cub(\Gamma),\\
C_\emptyset(\Gamma)\leq C_Y(\Gamma)\leq\Cub(\Gamma);
\end{align*}
while there may not be any strict ordering between $C_Y(\Gamma)$, and $C_X(\Gamma) $ and $C_M(\Gamma)$. Note that the capacity $C_{\text{EH}}(\Gamma)$ of a channel that harvests energy from the natural resources in its environment, modeled as a random process with mean $\Gamma$ as in~\cite{OzelUlukus2012,Tutuncuogluetal2013,JogAnantharam2014,
MaoHassibi2013,DongFarniaOzgur2015,ShavivNguyenOzgur2016,
ShavivOzgur2015,ShavivOzgurPermuter2015}, would be even smaller, i.e.,
\[
C_{\text{EH}}(\Gamma) \leq C_\emptyset(\Gamma),
\]
since even with no side information, the charger can emulate the random energy harvesting process.

The capacity expressions provided here are all multi-letter expressions and include an infinite limit, hence, in general, are hard to compute explicitly.
In the sequel, we consider a number of interesting special cases where capacity can be computed.
Specifically, Section~\ref{sec:precision_charger} shows that when the charger has sufficient precision,
the following capacities are all equal: $C_X(\Gamma)=C_M(\Gamma)=\Cub(\Gamma)$. Section \ref{sec:noiseless_channel} shows that, for the noiseless channel, these expressions can be formulated as MDPs, which can then be efficiently computed using dynamic programming.
In Section \ref{sec:example} we derive closed-form capacity formulas for a specific example by using these MDP formulations.

\section{Precision Charger}
\label{sec:precision_charger}

In this section, we consider the case when the charger is finely tunable to different energy levels, which is referred to as a \emph{precision charger}. This case is motivated by the observation that in practice the amount of transferred energy is controlled by the amplitude of the beamformed signal, which can be changed in a continuous fashion. However, in certain cases it is desirable to restrict the power of this beamformed signal to certain regimes, for example due to device limitations or non-linearities in the efficiency of the underlying circuit. These constraints can be modeled by restricting the energy alphabet $\mathcal{E}$, a case  we address in the following sections.
\begin{definition}\label{def:precision_charger}
A channel with a precision charger is a channel in which
$\phi(\mathcal{X})\subseteq\mathcal{E}$,
i.e. for every input symbol $x\in\mathcal{X}$ there exists an energy symbol $e\in\mathcal{E}$ such that $e=\phi(x)$.
\end{definition}
For example, the simple channel with binary input alphabet $\mathcal{X}=\{0,1\}$, cost function $\phi(x)=x$, binary energy alphabet $\mathcal{E}=\{0,1\}$, and an arbitrary output alphabet $\mathcal{Y}$, is a channel with a precision charger.
To take this example a step further, assume for simplicity that the channel is noiseless, i.e. $Y_t=X_t$.
The simplest way to power the transmitter is to charge it with $e_t=1$ every time slot, which will ensure that the transmitter's battery is always full, and therefore the encoder can always transmit its desired symbol, 0 or 1.
This will incur an average energy cost of $\Gamma=1$.
However, note that this is wasteful if the charger has side information regarding the transmission. Note here that even though the charger transfers energy at rate $\Gamma=1$, the encoder will only utilize 1/2 units per time slot on average to communicate at the maximal rate of~1. If the charger is fully cognitive ($C_M$), and therefore knows the corresponding codeword to be transmitted by the encoder, it could charge the encoder only when it intends to transfer a 1, therefore the unconstrained capacity could be achieved with only $\Gamma=1/2$.
Now consider the case where the charger causally observes the input ($C_X$) and is not aware of the message to be transmitted. It is interesting to observe that even in this case, with significantly weaker side information at the charger, the same performance can be achieved: given causal observations of the channel input, the charger can infer the state of the transmitter's battery, therefore only charge the encoder after it transmits a 1, i.e., when its battery is empty. 

The observations above may not hold in general, and in Section~\ref{sec:example} we will provide a concrete example of such a case.
It turns out, however, that Definition~\ref{def:precision_charger} provides a sufficient condition for the above observations to hold. Moreover, in this case the optimal strategy turns out to be a simple extension of the above strategy. The transmitter uses a codebook that is designed to satisfy a simple average energy constraint. It is the charger that ensures that the resultant codewords are always transmittable. At each time slot it simply transfers the exact  amount of energy that was used by the transmitter in the previous time slot, which ensures that the transmitter's battery is full at all times. Note that this strategy can be implemented with both side information $X$ and $M$. We state our result in the following theorem.
\begin{theorem}\label{thm:precision_charger}
For an energy harvesting channel with a precision charger (as in Definition~\ref{def:precision_charger}) the following holds:
\[
C_X(\Gamma)=C_M(\Gamma)=C_{\mathrm{ub}}(\Gamma).
\]
\end{theorem}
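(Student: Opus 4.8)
The plan is to prove the single inequality $C_X(\Gamma)\ge\Cub(\Gamma)$ and let the already-established ordering $C_X(\Gamma)\le C_M(\Gamma)\le\Cub(\Gamma)$ close the argument: together these squeeze $C_X(\Gamma)=C_M(\Gamma)=\Cub(\Gamma)$. Thus the whole theorem reduces to showing that the rate $\Cub(\Gamma)$ is achievable already in the weaker of the two scenarios, namely when the charger observes only the channel input strictly causally.

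For achievability I would fix an input distribution $p^*(x)$ attaining the maximum in \eqref{eq:Cub} and let the transmitter use an ordinary cost-constrained point-to-point codebook generated from $p^*$, ignoring its energy observations entirely (i.e. $f_t^{\mathrm{Tx}}(m,e^t)=x_t(m)$). The charger, observing $X^{t-1}$, would simply refund the energy just spent: set $E_1=0$ and $E_t=\phi(X_{t-1})$ for $t=2,\dots,n$. This is the one place where Definition~\ref{def:precision_charger} is essential: since $\phi(\mathcal{X})\subseteq\mathcal{E}$, the value $\phi(X_{t-1})$ is a legal charging symbol. A short induction on \eqref{eq:EH_battery}, starting from $B_1=\bar{B}$, then shows that the available energy $\min\{B_t+E_t,\bar{B}\}$ equals $\bar{B}$ at every $t$: once $B_{t+1}=\bar{B}-\phi(X_t)$, the refund $E_{t+1}=\phi(X_t)$ restores a full battery. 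Hence \eqref{eq:EH_constraint} holds automatically (using the normalization $\phi(x)\le\bar{B}$), and the transmitter can always realize its intended codeword symbol, so the effective channel collapses to the bare DMC $p(y|x)$.

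It remains to check the cost constraint and identify the achievable rate. Because the refund lags the expenditure by one slot, $\frac1n\sum_{t=1}^n E_t=\frac1n\sum_{t=1}^{n-1}\phi(X_t)\le\frac1n\sum_{t=1}^{n}\phi(X_t)$, so the charger constraint \eqref{eq:cost_constraint}, and with it the a.s. membership $(X^n,e^n(X^{n-1}))\in\mathcal{A}_n(\Gamma)$ required in \eqref{eq:CX}, is implied by the per-codeword transmit-energy bound $\frac1n\sum_t\phi(x_t)\le\Gamma$. Once this bound is enforced, the system behaves exactly as a DMC with a maximal (per-codeword) input-cost constraint, whose capacity is the standard capacity--cost function $\Cub(\Gamma)$; any rate below $\Cub(\Gamma)$ is then achievable in the $C_X$ scenario, giving $C_X(\Gamma)\ge\Cub(\Gamma)$.

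The main obstacle is exactly this last reduction: the optimizing $p^*$ in \eqref{eq:Cub} only guarantees the \emph{average} cost $\mathbb{E}[\phi(X)]\le\Gamma$, whereas our charging rule (and the a.s. constraint in \eqref{eq:CX}) demand a \emph{per-codeword} bound that must hold for every codeword in the book. I would handle this in the usual way: generate the codewords i.i.d.\ from a $p^*$ tuned to mean $\Gamma-\delta$, expurgate the vanishing fraction of codewords whose empirical cost exceeds $\Gamma$ (a law-of-large-numbers/concentration estimate), and observe that the surviving code still has rate approaching $I_{p^*}(X;Y)=\Cub(\Gamma-\delta)$. Letting $\delta\downarrow0$ and invoking the concavity and continuity of the capacity--cost function $\Cub(\cdot)$ recovers $\Cub(\Gamma)$, which completes the squeeze and establishes $C_X(\Gamma)=C_M(\Gamma)=\Cub(\Gamma)$.
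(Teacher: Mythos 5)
Your proposal is correct and follows essentially the same route as the paper: the squeeze via $C_X\le C_M\le \Cub$ plus the refund strategy $E_t=\phi(X_{t-1})$ that keeps $B_t+E_t=\bar B$, with the cost bound $\sum_t E_t=\sum_{t=1}^{n-1}\phi(X_t)\le n\Gamma$. The only cosmetic difference is that the paper starts from a code already satisfying the per-codeword cost constraint and invokes the standard capacity--cost theorem, whereas you re-derive that step via expurgation; both are fine.
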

Although this paper deals with finite alphabets, Theorem~\ref{thm:precision_charger} can easily be extended to continuous alphabets.
Specifically, as an important example of a channel with a precision charger, consider the additive white Gaussian noise (AWGN) channel with the energy cost function $\phi(x)=x^2$. The input alphabet is the interval $\mathcal{X}=[-\sqrt{\bar{B}},\sqrt{\bar{B}}]$ (recall $\phi(x)\leq\bar{B}$), the output is $Y_t=X_t+N_t$, where $N_t\sim\mathcal{N}(0,1)$, and the energy alphabet is the interval $\mathcal{E}=[0,\bar{B}]$. The condition of Definition \ref{def:precision_charger} holds, and we have
\[
C_X(\Gamma)=C_M(\Gamma)=\Cub(\Gamma)
=\max_{\substack{|X|\leq\sqrt{\bar{B}}\\
	\mathbb{E}X^2\leq\Gamma}}
I(X;Y).
\]
This is the capacity of the Gaussian channel with an amplitude  and an average power constraint, found by Smith~\cite{Smith1971}.

\begin{proof}[Proof of Theorem~\ref{thm:precision_charger}]
First, it is clear that any code for $C_X$ can be applied also when the charger is fully cognitive: at time $t$, by knowing $M$ and $E^{t-1}$, the charger can deduce $X^{t-1}$. Along with Proposition~\ref{prop:average_power_upper_bound}, we have
$C_X(\Gamma)\leq C_M(\Gamma)\leq C_{\mathrm{ub}}(\Gamma)$.
Therefore, it is enough to show $\Cub(\Gamma)\leq C_X(\Gamma)$.

For this purpose, consider the conventional DMC $p(y|x)$ with an average input cost constraint $\sum_{t=1}^{n}\phi(X_t)\leq n\Gamma$.
The capacity of this channel is well-known to be 
(see e.g. \cite[Theorem~3.2]{ElGamalKim2011}):
\[\Cub(\Gamma)=\max_{p(x):\, \mathbb{E}[\phi(X)]\leq\Gamma}I(X;Y).\]
We will show that any code for this channel can also be applied for the channel in Definition~\ref{def:precision_charger} when the charger has causal observations of the input.

An $(R,n)$ code for the channel with average input cost constraint consists of a set of codewords $x^n(m)$, $m=1,\ldots,2^{nR}$, such that $\sum_{t=1}^{n}\phi(x_t(m))\leq n\Gamma$ for each $m$, and a decoding function $\hat{m}(y^n)$.
Consider the following $(R,n)$ code for the energy harvesting channel with a precision charger:
\begin{align}
f_t^{\mathrm{Tx}}(m,e^t) &= x_t(m),\nonumber\\
f_t^{\mathrm{C}}(x^{t-1}) &= 
	\begin{cases}
		0&,t=1,\\
		\phi(x_{t-1})&,t=2,\ldots,n,
	\end{cases}
	\label{eq:precision_charger_v_def}\\
f^{\mathrm{Rx}}(y^n) &= \hat{m}(y^n).\nonumber
\end{align}
Note that the symbol $e=\phi(x_{t-1})$ must exist in $\mathcal{E}$ by Definition~\ref{def:precision_charger}, and accordingly the energy symbol $e=0$ must exist by the existence of a zero input symbol $x=0$ (see Section~\ref{sec:system_model}).
Under this scheme, the charger simply recharges the battery every time slot, by charging exactly the amount of energy that was used by the transmitter.

Clearly, since the underlying physical channel is the same,
the probability of error will be the same as for the channel with average input cost constraint.
Therefore we only need to verify that this code is admissible, i.e. it satisfies the energy and cost constraints \eqref{eq:EH_constraint}--\eqref{eq:cost_constraint}.
First, observe that our scheme guarantees that $B_t+E_t=\bar{B}$ for $t=1,\ldots,n$. This is obvious for $t=1$ by our assumption that $B_1=\bar{B}$ and since $e_1=0$.
For $t>1$:
\begin{align*}
B_t+E_t&=\min\{B_{t-1}+E_{t-1},\bar{B}\}-\phi(X_{t-1})+E_t\\
&=\min\{B_{t-1}+E_{t-1},\bar{B}\}\\
&=\bar{B}.
\end{align*}
By our assumption that $\phi(x)\leq\bar{B}$ for all $x\in\mathcal{X}$, this implies that the energy constraints \eqref{eq:EH_constraint} and \eqref{eq:EH_battery} are always satisfied.

We are left with the average energy cost constraint~\eqref{eq:cost_constraint}, which, according to our construction \eqref{eq:precision_charger_v_def}, is given by:
\begin{align*}
\sum_{t=1}^{n}E_t=
\sum_{t=1}^{n-1}\phi(X_{t})
\leq n\Gamma,
\end{align*}
since the codewords must satisfy the average input cost constraint of the original $(R,n)$ code.
We deduce that $\Cub(\Gamma)\leq C_X(\Gamma)$, which concludes the proof.
\end{proof}

\section{Noiseless Channel}
\label{sec:noiseless_channel}

In this section we consider another special case of interest -- the noiseless communication channel, namely $Y_t=X_t$, which can serve as an approximation for high SNR scenarios. In particular, we show that  in this case the computation of the $n$-letter capacity formulas in Theorem~\ref{thm:capacity} can be cast as a Markov decision process which enables the use of dynamic programming techniques to numerically compute the capacity. The result of Theorem~\ref{thm:precision_charger} may lead one to wonder if the average power upper bound in \eqref{eq:Cub} can always be achieved with side information regarding the message, maybe even the channel input.  In Section \ref{sec:example}, we illustrate that by solving the Bellman equations for the Markov decision processes developed in this section, we can obtain closed-form capacity expressions for specific channels, which reveal that in general $C_X(\Gamma)<C_M(\Gamma)<\Cub(\Gamma)$.

We begin with some preliminaries, 
including a Lagrange multipliers formulation of the capacity expressions in Section~\ref{subsec:noiseless_capacities},
and a brief overview of Markov decision processes in Section~\ref{subsec:MDPs}.
We then continue to an MDP formulation for 
$C_\emptyset$, $C_X$, and $C_M$, in Sections~\ref{subsec:universal_charger_noiseless}, \ref{subsec:charger_adj_Tx_noiseless},
and~\ref{subsec:charger_knows_M_noiseless}, respectively.
Note that $C_X=C_Y$ for a noiseless channel.

\subsection{Capacity Expressions and Lagrange Multipliers}
\label{subsec:noiseless_capacities}

First, we claim that the constraint $\frac{1}{n}\sum_{t=1}^{n}E_t\leq\Gamma\text{ a.s.}$ in \eqref{eq:C0}--\eqref{eq:CM} can be relaxed to hold in expectation, i.e. $\frac{1}{n}\sum_{t=1}^{n}\mathbb{E}[E_t]\leq\Gamma$.
\begin{lemma}
\label{lemma:relaxed_power_constraint}
For the noiseless channel, the capacities of each of the cases defined in Section~\ref{sec:system_model} are given by
\begin{align}
C_\emptyset(\Gamma)&=\lim_{n\to\infty}\frac{1}{n}
	\max_{\substack{p(x^n),\ e^n:\\ (X^n,e^n)\in\mathcal{A}'_n\text{ a.s.}\\
	\sum_{t=1}^{n}e_t\leq n\Gamma}}
	H(X^n),\label{eq:C0_relaxed}\\
C_X(\Gamma)&=\lim_{n\to\infty}\frac{1}{n}
	\max_{\substack{p(x^n),\ \{e_t(x^{t-1})\}_{t=1}^{n}:\\
		(X^n,e^n(X^{n-1}))\in\mathcal{A}'_n\text{ a.s.}\\
	\sum_{t=1}^{n}\mathbb{E}[E_t]\leq n\Gamma}}
	H(X^n),\label{eq:CX_relaxed}\\
C_M(\Gamma)&=\lim_{n\to\infty}\frac{1}{n}
	\max_{\substack{p(x^n,e^n):\\ 
	(X^n,E^n)\in\mathcal{A}'_n\text{ a.s.}\\
	\sum_{t=1}^{n}\mathbb{E}[E_t]\leq n\Gamma}}
	H(X^n).\label{eq:CM_relaxed}
\end{align}
where $\mathcal{A}'_n$ denotes all feasible $(x^n,e^n)$ pairs without an average energy cost constraint:
\begin{IEEEeqnarray}{RLll}
\mathcal{A}'_n=\Big\{
&x^n\in\mathcal{X}^n,\ e^n\in\mathcal{E}^n:\nonumber\\*
&b_1=\bar{B},\nonumber\\*
&b_{t+1}=\min\{b_t+e_t,\bar{B}\}-\phi(x_t),\nonumber\\*
&\hfill t=1,\ldots,n-1,\nonumber\\*
&\phi(x_t)\leq\min\{b_t+e_t,\bar{B}\},\quad
	t=1,\ldots,n
	&\Big\}.
\label{eq:Anprime_def}
\end{IEEEeqnarray}
\end{lemma}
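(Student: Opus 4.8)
The plan is to first reduce every objective in \eqref{eq:C0}--\eqref{eq:CM} to $H(X^n)$, and then to prove the two resulting inclusions separately. For a noiseless channel $Y^n=X^n$, so $I(X^n;Y^n)=H(X^n)-H(X^n\mid X^n)=H(X^n)$, and the directed information collapses as well, $I(X^n\to Y^n)=\sum_t I(X^t;X_t\mid X^{t-1})=\sum_t H(X_t\mid X^{t-1})=H(X^n)$, which incidentally gives $C_Y=C_X$. Hence the only substantive content of the lemma is that, for $C_X$ and $C_M$, the almost-sure cost constraint built into $\mathcal{A}_n(\Gamma)$ may be replaced by the expectation constraint $\sum_t\mathbb{E}[E_t]\le n\Gamma$, with $\mathcal{A}'_n$ collecting the remaining battery/energy feasibility conditions. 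For $C_\emptyset$ the charging sequence $e^n$ is deterministic, so the two constraints coincide and the claim is immediate.

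Write $C(\Gamma)$ for the almost-sure-constrained capacity (the quantity in Theorem~\ref{thm:capacity}) and $\tilde C(\Gamma)$ for the right-hand side of \eqref{eq:CX_relaxed} or \eqref{eq:CM_relaxed}. The inequality $C(\Gamma)\le\tilde C(\Gamma)$ is immediate, since any point feasible for the almost-sure problem also satisfies the expectation constraint and the two problems share the objective $H(X^n)$, so relaxing can only enlarge the feasible set. The work is in the reverse inequality $\tilde C(\Gamma)\le C(\Gamma)$, and this is where I expect the main difficulty.

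For the reverse direction I would fix $\epsilon>0$, take a block length $n$ together with an expectation-feasible point (an input law, and in the $C_X$ case charging functions $\{e_t(x^{t-1})\}$) achieving $\tfrac1n H(X^n)\ge\tilde C(\Gamma)-\epsilon$, and convert it into an almost-surely feasible point on a long super-block. I would form a super-block of length $N=k(n+\ell)$ built from $k$ independent copies of this policy, each preceded by a recharge window of constant length $\ell$ in which the transmitter sends the zero symbol and the charger fully refills the battery. This guarantees that every data copy begins with $b=\bar B$, so the copies are i.i.d.\ and the concatenation is a valid point of $\mathcal{A}'_N$ (for $C_X$ the per-copy functions remain deterministic functions of the past inputs; for $C_M$ the joint law simply factorizes). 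The total energy $S_N$ is then a sum of $k$ i.i.d.\ terms, each in $[0,n\bar B]$ and of mean $\le n\Gamma$, plus a deterministic recharge cost, so its per-symbol average has mean at most $\bar\Gamma:=\frac{n\Gamma+\ell\bar B}{n+\ell}$. Conditioning the entire law on the event $G=\{S_N\le N\Gamma'\}$ with $\Gamma'>\bar\Gamma$ yields a law that is almost-surely feasible at budget $\Gamma'$: conditioning only deletes realizations, so $\mathcal{A}'_N$-feasibility is preserved and the cost bound holds on every surviving realization, making the conditioned law a legitimate point of the almost-sure problem \eqref{eq:CX}/\eqref{eq:CM} at level $\Gamma'$.

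It then remains to control the cost of this surgery. By the law of large numbers $\Pr(G^c)=\eta_k\to0$ as $k\to\infty$, since $\Gamma'>\bar\Gamma\ge\tfrac1N\mathbb{E}[S_N]$, and the standard estimate $H(X^N)\le H(X^N\mid G)+\Pr(G^c)\,N\log|\mathcal X|+1$ shows the conditioning costs at most $\eta_k\log|\mathcal X|+\tfrac1N$ per symbol. Discarding the zero-entropy recharge windows, the conditioned law achieves per-symbol entropy at least $\frac{n}{n+\ell}\bigl(\tilde C(\Gamma)-\epsilon\bigr)-\eta_k\log|\mathcal X|-\tfrac1N$, so $C(\Gamma')\ge\frac{n}{n+\ell}\bigl(\tilde C(\Gamma)-\epsilon\bigr)$ after letting $k\to\infty$. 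Letting $n\to\infty$ (hence $\ell/n\to0$ and $\bar\Gamma\to\Gamma$, so $\Gamma'$ may be taken arbitrarily close to $\Gamma$) and $\epsilon\to0$, and invoking the continuity of $C(\cdot)$ in $\Gamma$ noted after Theorem~\ref{thm:capacity}, yields $C(\Gamma)\ge\tilde C(\Gamma)$. The crux is exactly this conversion of an expectation constraint into an almost-sure one: a single block provides no concentration because the energies $E_t$ within it may be arbitrarily dependent, so the independence manufactured by the recharge-separated concatenation—together with the entropy-loss bookkeeping for the conditioning and the continuity step that absorbs the residual budget slack—is the heart of the proof.
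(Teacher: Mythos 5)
Your proof is correct, and its skeleton --- reduce the mutual informations to $H(X^n)$ via noiselessness, note the easy inclusion, then concatenate $k$ i.i.d.\ copies of an expectation-feasible block separated by recharge windows and invoke the law of large numbers on the per-block energies --- is exactly the paper's (Appendix on the expected average power constraint). Where you genuinely diverge is in how the rare budget-violating realizations are disposed of. The paper stays operational: it builds a random codebook, has the transmitter substitute the all-zero codeword whenever the message's charging sequence would exceed $N(\Gamma+\epsilon)$, and absorbs the violation event into the probability of error via $\Pr(\mathcal{E})\le\Pr(\mathcal{E}')+\Pr(\mathcal{E}_0)$, thereby re-proving achievability of the relaxed rate from scratch. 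You instead stay entirely at the level of the $n$-letter formulas: you condition the super-block law on the good event $G$, observe that conditioning only deletes atoms (so $\mathcal{A}'_N$-feasibility and, in the $C_X$ case, the deterministic charging functions survive intact, and the surviving support satisfies the almost-sure budget at $\Gamma'$), and pay for the surgery with the standard $H(X^N)\le H(X^N\mid G)+\Pr(G^c)N\log|\mathcal{X}|+1$ bound. Your route treats Theorem~\ref{thm:capacity} as a black box and avoids redoing any typicality/decoding analysis, at the price of the small residual slack $\Gamma'>\Gamma$ that you then remove by the continuity of $C(\cdot)$ --- a step the paper also needs, in the guise of the $\Gamma+\epsilon$ budget in its code construction. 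One minor point worth making explicit: passing from the values achieved along the subsequence $N=k(n+\ell)$ to a lower bound on $C(\Gamma')$ uses the fact that the limit defining the almost-sure-constrained capacity exists (established in Theorem~\ref{thm:capacity}), so the subsequence suffices.
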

Note that \eqref{eq:C0_relaxed} is merely a change of notation from \eqref{eq:C0}.
The proof of \eqref{eq:CX_relaxed} and \eqref{eq:CM_relaxed} is given in Appendix~\ref{sec:expected_average_power_constraint}.
We note that a similar result can be shown also for the noisy channel, however this will not be needed here.


Next, we use the method of Lagrange multipliers to convert the constrained optimization problems \eqref{eq:C0_relaxed}--\eqref{eq:CM_relaxed} to unconstrained ones.
At the same time, we show that we can switch the order of limit and maximization in the resulting formulas. This will be necessary in order to formulate the problems as infinite horizon average reward MDPs.
In the following,
we denote infinite sequences $\{(x_t,e_t)\}_{t=1}^{\infty}$ by boldface letters $(\mathbf{x},\mathbf{e})$, and we say $(\mathbf{x},\mathbf{e})\in\mathcal{A}'_\infty$ if the subsequences $(x^n,e^n)\in\mathcal{A}'_n$ for every $n\geq1$.
For $\rho\geq0$, let\small
\begin{align}
J_\emptyset(\rho)&=\sup_{\substack{\{p(x_t|x^{t-1}),e_t\}_{t=1}^{\infty}:\\ (\mathbf{X},\mathbf{e})\in\mathcal{A}'_\infty\text{ a.s.}}}
\liminf_{n\to\infty}\frac{1}{n}
\big(H(X^n)-\rho\sum_{t=1}^{n}e_t\big),
\label{eq:J0_def}\\
J_X(\rho)&=\sup_{\substack{\{p(x_t|x^{t-1}),e_t(x^{t-1})\}_{t=1}^{\infty}:\\ (\mathbf{X},\mathbf{E})\in\mathcal{A}'_{\infty}\text{ a.s.}}}
\liminf_{n\to\infty}\frac{1}{n}
\big(H(X^n)-\rho\sum_{t=1}^{n}\mathbb{E}[E_t]\big),
\label{eq:JX_def}\\
J_M(\rho)&=\sup_{\substack{\{p(x_t,e_t|x^{t-1},e^{t-1})\}_{t=1}^{\infty}:\\ (\mathbf{X},\mathbf{E})\in\mathcal{A}'_{\infty}\text{ a.s.}}}
\liminf_{n\to\infty}\frac{1}{n}
\big(H(X^n)-\rho\sum_{t=1}^{n}\mathbb{E}[E_t]\big).
\label{eq:JM_def}
\end{align}\normalsize
For any $\epsilon>0$,
let the processes $(\mathbf{X}_\emptyset,\mathbf{E}_\emptyset)$, $(\mathbf{X}_X,\mathbf{E}_X)$, and $(\mathbf{X}_M,\mathbf{E}_M)$, approach the supremum in~\eqref{eq:J0_def}, \eqref{eq:JX_def}, and \eqref{eq:JM_def}, respectively, up to $\epsilon$.
Let
\begin{align*}
\underline{\Gamma}_\emptyset&=\liminf_{n\to\infty}\frac{1}{n}\sum_{t=1}^{n}\mathbb{E}[E_{\emptyset,t}],\\
\overline{\Gamma}_\emptyset&=\limsup_{n\to\infty}\frac{1}{n}\sum_{t=1}^{n}\mathbb{E}[E_{\emptyset,t}],
\end{align*}
and similar definitions for $\underline{\Gamma}_X$, $\overline{\Gamma}_X$, $\underline{\Gamma}_M$, and $\overline{\Gamma}_M$.
For $0\leq\alpha\leq1$, define $\Gamma_\emptyset^\alpha=\alpha\underline{\Gamma}_\emptyset+(1-\alpha)\overline{\Gamma}_\emptyset$, and similarly $\Gamma_X^\alpha$ and $\Gamma_M^\alpha$.
\begin{lemma}[Lagrange multipliers]
\label{lemma:Lagrange_multipliers}
For any $0\leq\alpha\leq 1$, the capacities \eqref{eq:C0_relaxed}--\eqref{eq:CM_relaxed}
are bounded by
\begin{IEEEeqnarray}{rCcCl}
J_{\emptyset}(\rho)-\epsilon&\leq&
C_{\emptyset}(\Gamma_\emptyset^\alpha)-\rho\Gamma_\emptyset^\alpha&\leq& J_{\emptyset}(\rho),
\label{eq:C0_bounds}\\
J_X(\rho)-\epsilon&\leq&
C_X(\Gamma_X^\alpha)-\rho\Gamma_X^\alpha&\leq&
J_X(\rho),
\label{eq:CX_bounds}\\
J_M(\rho)-\epsilon&\leq&
C_M(\Gamma_M^\alpha)-\rho\Gamma_M^\alpha&\leq&
J_M(\rho).
\label{eq:CM_bounds}
\end{IEEEeqnarray}
If the supremum in \eqref{eq:J0_def}--\eqref{eq:JM_def} is a maximum (i.e. $\epsilon=0$), then \eqref{eq:C0_bounds}--\eqref{eq:CM_bounds} become equalities and capacity is achieved by the sequence of marginals $\{p(x^n,e^n)\}_{n=1}^{\infty}$ of $(\mathbf{X}_\emptyset,\mathbf{E}_\emptyset)$, $(\mathbf{X}_X,\mathbf{E}_X)$, and $(\mathbf{X}_M,\mathbf{E}_M)$, respectively.
\end{lemma}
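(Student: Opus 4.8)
The plan is to prove the two inequalities in each of \eqref{eq:C0_bounds}--\eqref{eq:CM_bounds} separately, handling the three cases $\emptyset$, $X$, $M$ uniformly, since they differ only in the admissible class of charging rules and in whether the per-block energy is scored by $\sum_t e_t$ or $\sum_t\mathbb{E}[E_t]$, and every step below is insensitive to this. Writing $C(\Gamma)$, $J(\rho)$ for any one of the three pairs, I would introduce the finite-horizon quantities
\[
C^{(n)}(\Gamma)=\frac1n\!\!\!\max_{\substack{(X^n,e^n)\in\mathcal{A}'_n\\ \frac1n\sum_t\mathbb{E}[E_t]\le\Gamma}}\!\!\!H(X^n),\qquad
J^{(n)}(\rho)=\frac1n\max_{(X^n,e^n)\in\mathcal{A}'_n}\Big(H(X^n)-\rho\sum_t\mathbb{E}[E_t]\Big),
\]
so that Lemma~\ref{lemma:relaxed_power_constraint} reads $C(\Gamma)=\lim_{n}C^{(n)}(\Gamma)$. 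The upper bound begins with finite-horizon weak duality: for any feasible length-$n$ policy, $H(X^n)=\big(H(X^n)-\rho\sum_t\mathbb{E}[E_t]\big)+\rho\sum_t\mathbb{E}[E_t]\le\big(H(X^n)-\rho\sum_t\mathbb{E}[E_t]\big)+\rho n\Gamma$; maximizing and then dropping the cost constraint on the right-hand side gives $C^{(n)}(\Gamma)\le J^{(n)}(\rho)+\rho\Gamma$.

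It then remains, for the upper bound, to exchange the limit and the maximization, i.e. to show $\limsup_n J^{(n)}(\rho)\le J(\rho)$; this is the crux. I would take the optimizer $\pi_n^\ast$ of $J^{(n)}(\rho)$ and build an infinite-horizon policy $\Pi_n\in\mathcal{A}'_\infty$ by concatenating independent copies of $\pi_n^\ast$, separated by ``recharge'' windows of fixed length $\ell=\lceil\bar{B}/e_0\rceil$ in which the transmitter sends the zero symbol and the charger delivers a fixed $e_0>0$, $e_0\in\mathcal{E}$ (which exists by the model assumptions), exactly as in the achievability argument for Theorem~\ref{thm:capacity}. Each recharge window resets the battery to $\bar{B}$, so every copy of $\pi_n^\ast$ again starts from a full battery and the concatenation indeed lies in $\mathcal{A}'_\infty$; because the copies are independent and the recharge windows deterministic, entropy adds across blocks while the windows contribute none. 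Since $\Pi_n$ is periodic, its $\liminf$ Lagrangian average equals the per-period value $\frac{nJ^{(n)}(\rho)-\rho\ell e_0}{n+\ell}$, whence $J(\rho)\ge\frac{nJ^{(n)}(\rho)-\rho\ell e_0}{n+\ell}$. As $0\le J^{(n)}(\rho)\le\log|\mathcal{X}|$ and $\ell$ is fixed, the right-hand side is $J^{(n)}(\rho)-o(1)$, so letting $n\to\infty$ yields $\limsup_n J^{(n)}(\rho)\le J(\rho)$. Combined with weak duality, $C(\Gamma)-\rho\Gamma=\lim_n C^{(n)}(\Gamma)-\rho\Gamma\le\limsup_n J^{(n)}(\rho)\le J(\rho)$ for every $\Gamma$, and in particular at $\Gamma=\Gamma^\alpha$.

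For the lower bound I would reuse the block-repetition device, now with the $\epsilon$-optimal infinite policy $(\mathbf{X},\mathbf{E})$, to certify two points of the capacity curve. Choosing a subsequence $n_k$ with $\frac1{n_k}\sum_t\mathbb{E}[E_t]\to\underline{\Gamma}$ and repeating the length-$n_k$ prefix of $(\mathbf{X},\mathbf{E})$ with recharge windows gives a feasible scheme whose energy rate tends to $\underline{\Gamma}$ and whose entropy rate is $\tfrac{n_k}{n_k+\ell}\cdot\frac1{n_k}H(X^{n_k})$. Since $\liminf_n\frac1n\big(H(X^n)-\rho\sum_t\mathbb{E}[E_t]\big)\ge J(\rho)-\epsilon$ over all $n$ and $\frac1{n_k}\sum_t\mathbb{E}[E_t]\to\underline{\Gamma}$, the $\liminf$ of these entropy rates is at least $J(\rho)-\epsilon+\rho\underline{\Gamma}$, and continuity of $C(\cdot)$ gives $C(\underline{\Gamma})\ge J(\rho)-\epsilon+\rho\underline{\Gamma}$; the identical argument along a subsequence realizing $\overline{\Gamma}$ gives $C(\overline{\Gamma})\ge J(\rho)-\epsilon+\rho\overline{\Gamma}$. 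Concavity of $C(\cdot)$ (Section~\ref{sec:capacity}) then interpolates these two endpoints, $C(\Gamma^\alpha)\ge\alpha C(\underline{\Gamma})+(1-\alpha)C(\overline{\Gamma})\ge J(\rho)-\epsilon+\rho\Gamma^\alpha$, which is the claimed lower bound for every $\alpha\in[0,1]$. The concluding statement follows by setting $\epsilon=0$: the bounds collapse to equalities, and the certifying scheme is built from the (now optimal) marginals of $(\mathbf{X},\mathbf{E})$.

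The main obstacle is precisely this exchange of limit and maximization executed through block repetition. One must simultaneously guarantee feasibility across block boundaries (handled by the recharge windows that reset the battery), additivity of entropy (handled by fresh, independent randomness per block with deterministic windows), and the correct behavior of $\liminf$ and $\limsup$ along the chosen subsequences. A pleasant simplification is that the already-established concavity of $C(\cdot)$ lets the two extreme energy rates $\underline{\Gamma}$ and $\overline{\Gamma}$ be combined into the single interpolated level $\Gamma^\alpha$, so no separate time-sharing construction is needed.
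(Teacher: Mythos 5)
Your proposal is correct and follows essentially the same route as the paper's Appendix D: the recharge-window block-repetition device to pass between finite- and infinite-horizon quantities, subsequence extraction at $\underline{\Gamma}$ and $\overline{\Gamma}$, and concavity of $C(\cdot)$ to interpolate to $\Gamma^\alpha$. The only cosmetic differences are that you package the upper bound as finite-horizon weak duality plus convergence of the finite-horizon Lagrangians (the paper block-repeats the constrained optimizer directly, and proves the Lagrangian convergence separately in Appendix F), and for the lower bound you re-block the prefix of the $\epsilon$-optimal process rather than feeding its $n$-marginals directly into $C^{(n)}(\Gamma_n)$; both variants are sound.
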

This is based on ideas that are common practice in the optimization literature, e.g.~\cite[Theorem 28.1]{rockafellar1970convex}.
See Appendix~\ref{sec:Lagrange_multipliers_proof} for the proof.

In the next section, we will give a brief description of the theory of Markov decision processes, and provide some tools that will help solve optimization problems of the form of {\eqref{eq:J0_def}--\eqref{eq:JM_def}}.

\subsection{Markov Decision Processes}
\label{subsec:MDPs}

Next, we bring a brief description of Markov decision processes.
An MDP is defined by a tuple $(\mathcal{S}$, $\mathcal{U}$, $\mathcal{W}$, $f$, $P_w$, $g)$.
The system evolves according to $s_{t+1}=f(s_t,u_t,w_t)$, $t=1,2,\ldots$
The \emph{state} $s_t$ takes values in a Borel space $\mathcal{S}$, called the \emph{state space}. We assume the initial state $s_1$ is fixed.
The \emph{action} $u_t$ takes values in a Borel space $\mathcal{U}(s_t)$, called the \emph{action space}, which may depend on $s_t$.
The \emph{disturbance} $w_t$ takes values in a measurable space $\mathcal{W}$, and is distributed according to $P_w(\cdot|s_t,u_t)$.

The history $h_t=(s_1,w_1,w_2,\ldots,w_{t-1})$ consists of the information available to a \emph{controller} at time $t$, which chooses the action $u_t$.
The action is chosen by a mapping of histories to actions: $u_t=\mu_t(h_t)$.
The collection of such mappings $\pi=\{\mu_1,\mu_2,\ldots\}$ is called a \emph{policy}.
Note that given a policy $\pi$ and the history $h_t$, one can deduce all past actions $u^{t}$ and states $s^t$.
A policy is called \emph{stationary} if there is a function $m:\mathcal{S}\to\mathcal{U}$ s.t. $\mu_t(h_t)=m(s_t)$, i.e. the action at time $t$ depends only on the current state.
The reward is a bounded function $g:\mathcal{S}\times\mathcal{U}\to\mathbb{R}$.

The optimal \emph{finite-horizon} expected reward for $n$ stages and initial state $s_1=s$ is:
\begin{equation}
J_n(s)=\sup_{\mu_1,\ldots,\mu_n}\sum_{t=1}^{n}\mathbb{E}[g(S_t,\mu_t(H_t))|S_1=s].\label{eq:finite_horizon_def}
\end{equation}
The Bellman principle of optimality~\cite[Ch.~1.3]{Bertsekas2001vol1} states that $J_n(s)$ can be computed recursively as follows:
\begin{equation}
J_{n+1}(s)=\sup_{u\in\mathcal{U}(s)}\left\{g(s,u)
+\mathbb{E}[J_n(f(s,u,W))]\right\},
\label{eq:value_iteration_def}
\end{equation}
where the expectation is w.r.t. the conditional distribution $P_w(\,\cdot\,|s,u)$.
This recursive algorithm is called \emph{value iteration}.
This is a remarkable property of MDPs: it enables computation of $J_n(s)$ with complexity which is \emph{linear} in $n$, and not exponential as initially suggested by~\eqref{eq:finite_horizon_def}.

More important for our problem is the \emph{infinite-horizon average expected reward},
which we wish to maximize over all policies:
\begin{equation}
J(s)=\sup_{\pi}\liminf_{n\to\infty}\frac{1}{n}\sum_{t=1}^{n}
\mathbb{E}\left[g(S_t,\mu_t(H_t))|S_1=s\right]
\label{eq:average_reward_def}
\end{equation}
Note that if one can show that $J(s)=\lim_{n\to\infty}\frac{1}{n}J_n(s)$ then the value iteration algorithm can be used to numerically compute $J(s)$ in an efficient manner.
Moreover, the following theorem provides a mechanism for finding the optimal infinite-horizon average reward, along with the optimal policy that achieves it.
\begin{theorem}[{Bellman Equation \cite[Theorem~6.1]{arapostathis1993discrete}}]
\label{thm:Bellman}
If there exist a scalar $\lambda\in\mathbb{R}$ and a bounded function $h:\mathcal{S}\to\mathbb{R}$ that satisfy:
\begin{equation}
\lambda+h(s)
=\sup_{u\in\mathcal{U}(s)}\Big\{
g(s,u)+\mathbb{E}\big[h(f(s,u,W))\big]\Big\}
\quad\forall s\in\mathcal{S},
\label{eq:Bellman}
\end{equation}
where the expectation is w.r.t. the conditional distribution $P_w(\cdot|s,u)$,
then the optimal average reward does not depend on the initial state $s$ and it is given by $J=\lambda$.
Furthermore, if there exists a function $u^\star(s)$ that attains the supremum in~\eqref{eq:Bellman}, then there exists an optimal stationary policy in \eqref{eq:average_reward_def} which is given by $u^\star(s)$, i.e. $\mu_t(H_t)=u^\star(b_t)$.
\end{theorem}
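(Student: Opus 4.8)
The plan is to verify the two assertions of the theorem separately: first that $\lambda$ upper bounds the infinite-horizon average reward in \eqref{eq:average_reward_def} for \emph{every} policy, and second that the stationary policy induced by $u^\star$ attains $\lambda$. Both directions rest on a single telescoping identity extracted from the Bellman equation \eqref{eq:Bellman}, exploiting the fact that $h$ is bounded.

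First I would establish the upper bound. Fix an arbitrary policy $\pi=\{\mu_1,\mu_2,\ldots\}$ and initial state $s$, and let $(S_t,U_t)$ be the induced state--action process with $U_t=\mu_t(H_t)$ and $S_1=s$. Because \eqref{eq:Bellman} holds with a supremum, for every state $s'$ and every admissible action $u\in\mathcal{U}(s')$ we have the pointwise inequality $g(s',u)+\mathbb{E}[h(f(s',u,W))]\leq\lambda+h(s')$, where $W\sim P_w(\cdot|s',u)$. Evaluating this at $(S_t,U_t)$ and invoking the Markov structure $\mathbb{E}[h(S_{t+1})\mid H_t,U_t]=\mathbb{E}[h(f(S_t,U_t,W))]$ together with the tower property gives $\mathbb{E}[g(S_t,U_t)]+\mathbb{E}[h(S_{t+1})]\leq\lambda+\mathbb{E}[h(S_t)]$. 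Summing over $t=1,\ldots,n$, the $h$-terms telescope and leave $\sum_{t=1}^{n}\mathbb{E}[g(S_t,U_t)]\leq n\lambda+h(s)-\mathbb{E}[h(S_{n+1})]$. Dividing by $n$ and taking $\liminf_{n\to\infty}$, boundedness of $h$ makes the boundary terms vanish, so the average reward of $\pi$ is at most $\lambda$. Since $\pi$ was arbitrary, $J(s)\leq\lambda$ for every $s$.

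Next I would prove achievability using the maximizing selector $u^\star$. Running the stationary policy $\mu_t(H_t)=u^\star(S_t)$ turns each of the inequalities above into an equality, since $u^\star$ attains the supremum in \eqref{eq:Bellman}. The identical telescoping then yields $\frac{1}{n}\sum_{t=1}^{n}\mathbb{E}[g(S_t,u^\star(S_t))]=\lambda+\frac{1}{n}\big(h(s)-\mathbb{E}[h(S_{n+1})]\big)$, and boundedness of $h$ shows the right-hand side converges to $\lambda$ as $n\to\infty$. Hence this stationary policy achieves average reward exactly $\lambda$ from every initial state, giving $J(s)\geq\lambda$. Combined with the upper bound, $J(s)=\lambda$ independently of $s$, and $u^\star$ is optimal in \eqref{eq:average_reward_def}.

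The main obstacle I anticipate is making the telescoping rigorous for \emph{history-dependent} policies in the first part: the one-step inequality is applied along sample paths, so I would carefully justify the conditional-expectation identity $\mathbb{E}[h(S_{t+1})\mid H_t,U_t]=\int h\big(f(S_t,U_t,w)\big)\,P_w(dw\mid S_t,U_t)$ and confirm that $h(S_t)$ is integrable at every $t$, which is immediate from the assumed boundedness of $h$. A secondary technical point is measurability: for the stationary policy to define a valid process on the Borel state space $\mathcal{S}$, the selector $u^\star$ must be measurable. Since the theorem hypothesizes its existence, I would only need to invoke a measurable-selection argument if additional regularity of $g$, $f$, and $P_w$ were required to guarantee that the maximizer can be chosen measurably.
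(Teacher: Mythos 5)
The paper does not prove this statement at all---it is quoted verbatim as Theorem~6.1 of the cited reference on average-reward MDPs---so there is no in-paper argument to compare against. Your telescoping verification argument (one-step inequality from the supremum form of \eqref{eq:Bellman}, tower property, cancellation of the bounded $h$-terms after dividing by $n$, and equality along the stationary policy induced by the maximizing selector) is the standard and correct proof of this result, and you have flagged the only genuine technical caveats, namely integrability of $h(S_t)$ (immediate from boundedness) and measurability of $u^\star$ so that the stationary policy is well defined on the Borel state space.
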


In what follows, we will show how each of the capacity expressions $C_\emptyset(\Gamma)$, $C_X(\Gamma)$, and $C_M(\Gamma)$, can be cast as an MDP.

\subsection{Generic Charger}
\label{subsec:universal_charger_noiseless}

Let $\mathcal{B}$ be the set of all possible states for the battery that can be reached by some finite transmitter and charger sequences $(x^n,e^n)$ under the assumption $B_1=\bar{B}$.
We will assume that $\mathcal{B}$ is finite.
This is true, for example, when all problem parameters are rational numbers:
Let $\phi(\mathcal{X})=\{\phi(x):\ x\in\mathcal{X}\}$.
If $\phi(\mathcal{X})$ and $\mathcal{E}$ contain only rational numbers, as well as $\bar{B}$ is a rational number, then we can scale all quantities of interest $(\phi(X),E,\bar{B},\Gamma)$ by a large enough integer without changing the problem.
When the above parameters are all integers, clearly $\mathcal{B}\subseteq\{0,1,\ldots,\bar{B}\}$, hence $\mathcal{B}$ is finite and has at most $\bar{B}+1$ elements.

By Lemma~\ref{lemma:Lagrange_multipliers}, we will attempt to solve the following, for all $\rho\geq0$:
\begin{equation}
J_\emptyset(\rho)=\sup_{\substack{\{p(x_t|x^{t-1}),\ e_t\}_{t=1}^{\infty}:\\
(\mathbf{X},\mathbf{e})\in\mathcal{A}'_{\infty}\text{ a.s.}}}\liminf_{n\to\infty}\frac{1}{n}\big(H(X^n)-\rho\sum_{t=1}^{n}e_t\big).
\label{eq:J0_entropy}
\end{equation}
For a fixed $e^n\in\mathcal{E}^n$ and $b\in\mathcal{B}$, denote by $N_n(b)$ the number of sequences $x^n$ for which $(x^n,e^n)\in\mathcal{A}'_n$ and $B_{n+1}=b$, where $\mathcal{A}'_n$ was defined in~\eqref{eq:Anprime_def}.
That is, $N_n(b)$ is the total number of sequences that satisfy the energy constraints and end in battery state $b$.
Denote the $|\mathcal{B}|$-dimensional row vector 
$\mathbf{N}_n=[N_n(b):\ b\in\mathcal{B}]$.
Under this notation, we have:
\begin{align*}
\max_{\substack{p(x^n):\\ (X^n,e^n)\in\mathcal{A}'_n\text{ a.s.}}}H(X^n)
&=\log\Big(\sum_{b\in\mathcal{B}}N_n(b)\Big)\\*
&=\log(\mathbf{N}_n\cdot\mathbf{1}),
\end{align*}
where $\mathbf{1}$ is a $|\mathcal{B}|$-dimensional column vector of 1's.
This suggests that we should be able to replace the maximal $H(X^n)$ with $\log \mathbf{N}_n\cdot\mathbf{1}$.
Indeed, it is shown in Appendix~\ref{sec:generic_charger_num_of_seq} that \eqref{eq:J0_entropy} is equivalent to
\begin{equation}
J_\emptyset(\rho)=\sup_{\{e_t\}_{t=1}^{\infty}}
\liminf_{n\to\infty}\frac{1}{n}\big(\log(\mathbf{N}_n\cdot\mathbf{1})-\rho\sum_{t=1}^{n}e_t\big).
\label{eq:J0}
\end{equation}

To evaluate this expression, we propose a way to recursively compute $\mathbf{N}_n$ from $e^n$ and $\mathbf{N}_{n-1}$.
To this end, with each channel specified by $(\mathcal{X},\mathcal{E},\phi(\mathcal{X}),\bar{B})$ we associate
a set of $|\mathcal{E}|$ labeled graphs, denoted by $G_e$ for each $e\in\mathcal{E}$.
Each graph has $|\mathcal{B}|$ vertices, one for each state $b\in\mathcal{B}$.
In graph $G_e$, we draw an edge labeled $x$ from node $b$ to node $b'$ if $x\in\mathcal{X}$ and $b'=\min\{b+e,\bar{B}\}-\phi(x)$.
An example of such a set of graphs is plotted in Fig.~\ref{fig:battery_state_graphs} for a channel with $\mathcal{X}=\mathcal{E}=\{0,1,2\}$, $\phi(x)=x$, and $\bar{B}=2$.
\begin{figure}
\centering
\def \nodedistance {1.5cm}
\subfloat[Graph $G_0$.]{
\begin{tikzpicture}[->,>=stealth',shorten >=1pt,semithick,font=\tiny]
\tikzstyle{every state}=[draw=none,fill=blue!40]
\node[state] (b2) {$b=2$};
\node[state] (b1) at (15:\nodedistance) {$b=1$};
\node[state] (b0) at (75:\nodedistance) {$b=0$};

\path
(b0) edge [in=90,out=120,loop] node [above] {0} ()
(b1) edge [in=-30,out=0,loop] node [right] {0} ()
	 edge [bend right] node [above right] {1} (b0)
(b2) edge [in=-150,out=-120,loop] node [below] {0} (b2)
	 edge [bend left] node [left] {2} (b0)
	 edge [bend right] node [below] {1} (b1);
\end{tikzpicture}
\label{subfig:G0}
}
\subfloat[Graph $G_1$.]{
\begin{tikzpicture}[->,>=stealth',shorten >=1pt,semithick,font=\tiny]
\tikzstyle{every state}=[draw=none,fill=blue!40]
\node[state] (b2) {$b=2$};
\node[state] (b1) at (15:\nodedistance) {$b=1$};
\node[state] (b0) at (75:\nodedistance) {$b=0$};

\path
(b0) edge [in=90,out=120,loop] node [above] {1} ()
	 edge node [below left] {0} (b1)
(b1) edge [in=-30,out=0,loop] node [right] {1} ()
     edge [bend right] node [above right] {2} (b0)
     edge node [above] {0} (b2)
(b2) edge [in=-150,out=-120,loop] node [below] {0} (b2)
	 edge [bend left] node [left] {2} (b0)
	 edge [bend right] node [below] {1} (b1);
\end{tikzpicture}
\label{subfig:G1}
}

\subfloat[Graph $G_2$.]{
\begin{tikzpicture}[->,>=stealth',shorten >=1pt,semithick,font=\tiny]
\tikzstyle{every state}=[draw=none,fill=blue!40]
\node[state] (b2) {$b=2$};
\node[state] (b1) at (15:\nodedistance) {$b=1$};
\node[state] (b0) at (75:\nodedistance) {$b=0$};

\path
(b0) edge [in=90,out=120,loop] node [above] {2} ()
	 edge node [below left] {1} (b1)
	 edge node [right] {0} (b2)
(b1) edge [in=-30,out=0,loop] node [right] {1} ()
     edge [bend right] node [above right] {2} (b0)	     	 edge node [above] {0} (b2)
(b2) edge [in=-150,out=-120,loop] node [below] {0} (b2)
	 edge [bend left] node [left] {2} (b0)
	 edge [bend right] node [below] {1} (b1);
\end{tikzpicture}
\label{subfig:G2}
}
\caption{Battery state graphs for the noiseless channel with alphabets $\mathcal{X}=\mathcal{E}=\{0,1,2\}$ and $\bar{B}=2$. Each graph represents a different value of $e_t$, the states represent the battery level $b_t$, and the edge labels represent $x_t$.
The output state is $b_{t+1}$ as given by~\eqref{eq:EH_battery}.
}
\label{fig:battery_state_graphs}
\end{figure}
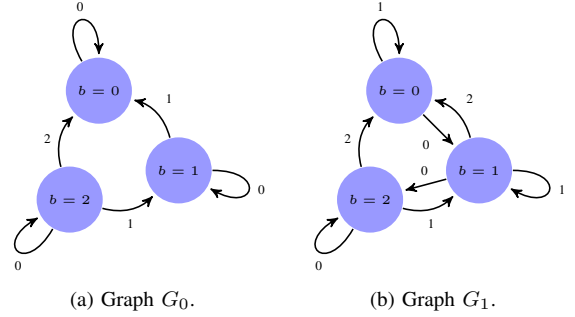
If at time $t$ the battery state is $b_t=b$ and the charging energy is $e_t=e$, then all possible values of $x_t$ are represented by the outgoing edges of state $b$ in $G_e$. By following the edge labeled $x$, we will obtain the battery state $b_{t+1}$ if $x_{t}=x$ was transmitted.

Given $e_t$ and $\mathbf{N}_{t-1}$, we can compute $N_t(b')$ for any $b'\in\mathcal{B}$ by summing up $N_{t-1}(b)$ for all edges going into state $b'$ in graph $G_{e_t}$.
Continuing with our example, if $e_t=0$ we can compute $N_t(0)$ from Fig.~\ref{fig:battery_state_graphs}-\subref{subfig:G0}:
\begin{align*}
N_t(0)&=N_{t-1}(0)+N_{t-1}(1)+N_{t-1}(2),
\end{align*}
since there is an edge going into node $b=0$ from every other node in the graph.

In general, we can write $\mathbf{N}_t$ in vector notation as \begin{equation}
\mathbf{N}_t=\mathbf{N}_{t-1}\cdot\mathbf{A}_{e_t},
\label{eq:Nt_vector_relation}
\end{equation}
where $\mathbf{A}_e$ is the adjacency matrix of graph $G_e$, defined by $(A_e)_{b,b'}=1$ if there is an edge going from $b$ into $b'$ and $0$ otherwise.
For example, the adjacency matrix of the graph in Fig.~\ref{fig:battery_state_graphs}-\subref{subfig:G0} is
\begin{equation*}
\mathbf{A}_0=\begin{bmatrix}
1&0&0\\
1&1&0\\
1&1&1
\end{bmatrix}.
\end{equation*}

\begin{table}
\renewcommand{\arraystretch}{1.5}
\caption{MDP Formulation of Generic Charger Capacity}
\label{tab:universal_charger_general_DP}
\centering
\begin{tabular}{|l|l|}
\hline
state
&$\mathbf{s}_t=\frac{\mathbf{N}_{t-1}}{\mathbf{N}_{t-1}\cdot\mathbf{1}}$\\
\hline
state space& $\mathcal{S}$, the probability simplex in $\mathbb{R}^{|\mathcal{B}|}$\\
\hline
action&$e_t$ (energy at time $t$)\\
\hline
action space&$\mathcal{E}$ (energy alphabet)\\
\hline
reward&$g(\mathbf{s},e)=\log(\mathbf{s}\mathbf{A}_{e}\mathbf{1})-\rho e$\\
\hline
state dynamics&
$\mathbf{s}_{t+1}=f(\mathbf{s}_t,e_t)
=\frac{\mathbf{s}_t\mathbf{A}_{e_t}}{\mathbf{s}_t\mathbf{A}_{e_t}\mathbf{1}}$\\
\hline
\end{tabular}
\end{table}

This recursive relation suggests we can optimize sequentially over $e_t$, i.e. given the optimal $e_{t-1}$ and $\mathbf{N}_{t-1}$, we can choose $e_t$ that maximizes $\log(\mathbf{N}_t\cdot\mathbf{1})-\rho\sum_{t=1}^{n}e_t$ in \eqref{eq:J0}.
More precisely, define
\[
r_t\triangleq\log\frac{\mathbf{N}_t\cdot\mathbf{1}}
{\mathbf{N}_{t-1}\cdot\mathbf{1}}.
\]
Then under the convention $\mathbf{N}_0=[0\ 0\ \cdots\ 0\ 1]$ (which conforms to our assumption that $B_1=\bar{B}$),
we have the telescoping series
$\log(\mathbf{N}_n\cdot\mathbf{1})=\sum_{t=1}^{n} r_t$.
Defining $g_t=r_t-\rho e_t$, eq.~\eqref{eq:J0} can be equivalently expressed in terms of $g_t$ as:
\[
J_\emptyset(\rho)=\sup_{\{e_t\}_{t=1}^{\infty}}
\liminf_{n\to\infty}\frac{1}{n}\sum_{t=1}^{n}g_t.
\]
We will show that this is an infinite-horizon average-reward Markov decision process.
Define the state vector:
\begin{equation}
\mathbf{s}_t\triangleq\frac{\mathbf{N}_{t-1}}{\mathbf{N}_{t-1}\cdot\mathbf{1}}.
\label{eq:st_def}
\end{equation}
Observe that $\mathbf{s}_t\cdot\mathbf{1}=1$ and $s_t(b)\geq 0$ for all $b\in\mathcal{B}$, so $\mathbf{s}_t$ is in fact a probability distribution on the set $\mathcal{B}$.
Let $e_t$ be the action at time $t$.
The reward at time $t$ is given as a function of $\mathbf{s}_t$ and $e_t$:
\begin{align*}
g_t&=\log\frac{\mathbf{N}_t\cdot\mathbf{1}}{\mathbf{N}_{t-1}\cdot\mathbf{1}}-\rho e_t\\
&\overset{\text{(i)}}{=}\log\frac{\mathbf{N}_{t-1}\cdot\mathbf{A}_{e_t}\cdot\mathbf{1}}
{\mathbf{N}_{t-1}\cdot\mathbf{1}}-\rho e_t\\
&\overset{\text{(ii)}}{=}\log(\mathbf{s}_{t}\mathbf{A}_{e_t}\mathbf{1})-\rho e_t,
\end{align*}
where (i) is due to \eqref{eq:Nt_vector_relation} and (ii) is by \eqref{eq:st_def}.
The state at time $t+1$ is given by:
\begin{align*}
\mathbf{s}_{t+1}&=\frac{\mathbf{N}_{t}}{\mathbf{N}_{t}\mathbf{1}}\\
&=\frac{\mathbf{N}_{t-1}\mathbf{A}_{e_t}}{\mathbf{N}_{t-1}\mathbf{A}_{e_t}\mathbf{1}}\\
&=\frac{\mathbf{s}_t\mathbf{A}_{e_t}}{\mathbf{s}_{t}\mathbf{A}_{e_t}\mathbf{1}}.
\end{align*}
Note that the state evolves in a deterministic fashion, as a function of the action $e_t$ and the state $\mathbf{s}_t$.
We summarize the components of this \emph{deterministic} infinite-horizon average-reward Markov decision process in Table~\ref{tab:universal_charger_general_DP}.

Having an MDP formulation for $J_\emptyset(\rho)$ enables us to efficiently compute it using value iteration.
We show in Appendix~\ref{sec:Lagrangian_converges} that $J_\emptyset(\rho)$ in~\eqref{eq:J0_entropy} can be written as:
\begin{align*}
J_\emptyset(\rho)&=\lim_{n\to\infty}\frac{1}{n}
\max_{\substack{p(x^n),e^n:\\ (X^n,e^n)\in\mathcal{A}'_n\text{ a.s.}}}
\big(H(X^n)-\rho\sum_{t=1}^{n}e_t\big)\\*
&=\lim_{n\to\infty}\frac{1}{n}\max_{e^n}
\big(\log(\mathbf{N}_n\cdot\mathbf{1})-\rho\sum_{t=1}^{n}e_t\big).
\end{align*}
This is exactly the limit of the $n$-stage finite-horizon rewards, normalized by $n$.
As mentioned in Section~\ref{subsec:MDPs},
this suggests that the value iteration algorithm will always converge to $J_\emptyset(\rho)$ when the number of iterations goes to infinity.
Lemma~\ref{lemma:Lagrange_multipliers} guarantees that if our numeric solution is close to $J_\emptyset(\rho)$ up to some $\epsilon$, the same approximation bound holds for the resulting $C_\emptyset(\Gamma)$.

Alternatively, we can attempt to apply Theorem~\ref{thm:Bellman} to obtain an analytic solution to $J_\emptyset(\rho)$.
Solving the Bellman equation in this case consists of finding $\lambda\in\mathbb{R}$ and $h:\mathcal{S}\to\mathbb{R}$ such that:
\begin{equation}
\lambda+h(\mathbf{s})
=\max_{e\in\mathcal{E}}\big\{
g(\mathbf{s},e)+h(f(\mathbf{s},e))\big\}
\qquad\forall s\in\mathcal{S}.
\label{eq:Bellman_universal_charger}
\end{equation}
Note that the equation does not involve an expectation since the problem is deterministic.
The optimal reward will be given by $J_\emptyset(\rho)=\lambda$.
Moreover, if $e^\star(\mathbf{s})$ is the optimal charging policy (i.e. it attains the maximum in \eqref{eq:Bellman_universal_charger}), 
we can find $e_t=e^\star(\mathbf{s}_t)$ for every $t\geq1$, where $\mathbf{s}_t$ can be computed recursively from $\mathbf{s}_1$.
The capacity $C_\emptyset(\Gamma)$ can be found by applying Lemma~\ref{lemma:Lagrange_multipliers}.

\begin{figure}
\centering
\subfloat[$(0,1)$-{RLL} constrained system.]{
\def \nodedistance {3cm}
\begin{tikzpicture}[->,>=stealth',shorten >=1pt,semithick,font=\footnotesize]
\tikzstyle{every state}=[draw]
\node[state] (s0) at (0,0) {$0$};
\node[state] (s1) at (\nodedistance,0) {$1$};

\path
(s0) edge [bend left] node [above] {1} (s1)
(s1) edge [bend left] node [below] {0} (s0)
	 edge [out=15,in=-15,loop] node [right] {1} ();
\end{tikzpicture}
\label{subfig:(0,1)-RLL}
}

\subfloat[Unconstrained system.]{
\def \nodedistance {3cm}
\begin{tikzpicture}[->,>=stealth',shorten >=1pt,semithick,font=\footnotesize]
\tikzstyle{every state}=[draw]
\node[state] (s0) at (0,0) {$0$};
\node[state] (s1) at (\nodedistance,0) {$1$};

\path
(s0) edge [bend left] node [above] {1} (s1)
	 edge [out=165,in=195,loop] node [left] {0} ()
(s1) edge [bend left] node [below] {0} (s0)
	 edge [out=15,in=-15,loop] node [right] {1} ();
\end{tikzpicture}
\label{subfig:no_constraints}
}
\caption{Graph presentation of (a) $(0,1)$-{RLL} constrained system and unconstrained system.}
\label{fig:constrained_systems_graphs}
\end{figure}
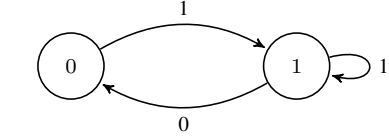
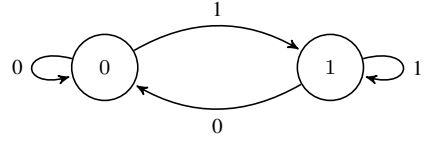

\begin{figure*}[!b]
\vspace*{4pt}
\hrulefill
\normalsize
\setcounter{MYtempeqncnt}{\value{equation}}
\setcounter{equation}{37}
\begin{align}
J&=\sup_{\pi}\liminf_{n\to\infty}\frac{1}{n}\sum_{t=1}^{n}\mathbb{E}[g(\mu_t(H_t))]\nonumber\\
&=\sup_{\substack{\{p(x_t|x^{t-1}),\ e_t(x^{t-1})\}_{t=1}^{\infty}:\nonumber\\ (\mathbf{X},\mathbf{E})\in\mathcal{A}'_\infty}}\liminf_{n\to\infty}\frac{1}{n}\sum_{t=1}^{n}\big(H(X_t|X^{t-1})-\rho\mathbb{E}[e_t(X^{t-1})]\big)\nonumber\\
&=\sup_{\substack{\{p(x_t|x^{t-1}),\ e_t(x^{t-1})\}_{t=1}^{\infty}:\nonumber\\ (\mathbf{X},\mathbf{E})\in\mathcal{A}'_\infty}}\liminf_{n\to\infty}\frac{1}{n}\big(H(X^n)-\rho\sum_{t=1}^{n}\mathbb{E}[E_t]\big)\nonumber\\
&=J_X(\rho).\label{eq:J_is_MDP}
\end{align}
\setcounter{equation}{\value{MYtempeqncnt}}
\end{figure*}

\subsubsection{Controlled Constrained Coding}
\label{subsubsec:controlled_constrained_coding}

The methodology developed above can be utilized for a new framework, which we call \emph{controlled constrained coding}.
In traditional constrained coding, we are interested in only the sequences that satisfy a certain constraint.
With applications in both data storage (magnetic recording, optical recording) and communications (bandwidth limits, DC bias elimination, synchronization),
there is vast literature on the subject (see e.g.~\cite{MarcusSiegelWolf1992,MarcusRothSiegel2001}).
For example, in the $(d,k)$-runlength-limited (RLL) constraint for binary sequences, any run of 0's between consecutive 1's must have length at least $d$ and no more than $k$.

A \emph{constrained system} can normally be presented by a labeled directed graph.
When coding over alphabet $\mathcal{X}$, each state (vertex) can have at most $|\mathcal{X}|$ outgoing edges, which are labeled according to different input symbols.
A sequence of symbols satisfying the constraint can be produced by walking along different edges of the graph.
For example, the $(0,1)$-{RLL} constrained system can be presented by the graph in Fig.~\ref{fig:constrained_systems_graphs}-\subref{subfig:(0,1)-RLL}.

A \emph{controlled} constrained system is presented by a set of graphs, all sharing the same set of states, where each graph has a specified cost.
More precisely, a controlled constrained system consists of a set of states $\mathcal{S}$, a set of costs $\mathcal{E}\subset\mathbb{R}_+$, and a collection of labeled directed graphs $\{G_e\}_{e\in\mathcal{E}}$ such that each graph's vertices are given by the set $\mathcal{S}$.
An input sequence $x^n$ in this controlled constrained system can be generated by walking along the edges of \emph{any} of the graphs $G_e$, however, a cost $e$ is paid when utilizing the graph $G_e$. Our goal is to choose a \emph{fixed} sequence of graphs $e^n$, such that the number of possible input sequences $x^n$ is maximized, but the average cost does not exceed some constraint $\Gamma\geq0$.
For a fixed graph sequence $e^n$, an admissible input sequence $x^n$ in this controlled constrained system can be generated as follows. Suppose we start from some state $s_1\in\mathcal{S}$. The first symbol $x_1$ is obtained by walking on an edge going out of state $s_1$ in graph $G_{e_1}$, and suppose the edge leads to state $s_2\in\mathcal{S}$. Then $x_2$ can be obtained by walking on an outgoing edge from $s_2$ in graph $G_{e_2}$, and so on.
Clearly, the dynamic programming approach developed previously can be used in this setting in exactly the same manner.

This framework may have applications in data storage.
Suppose we wish to design a storage device. We have at hand a set of different materials (or implementation technologies), and we can choose any of them to construct our storage device. Some materials may be better than others, and may impose different constraints on the recorded sequence. In order to minimize costs, we may choose to combine several different materials, in such a way that we balance the use of ``good and expensive'' materials and ``bad and cheap'' materials.

%

\subsection{Charger Adjacent to Transmitter}
\label{subsec:charger_adj_Tx_noiseless}


We again use Lemma~\ref{lemma:Lagrange_multipliers}, and consider the following optimization problem for $\rho\geq0$:\small
\begin{equation}
J_X(\rho)=\sup_{\substack{\{p(x_t|x^{t-1}),e_t(x^{t-1})\}_{t=1}^{\infty}:\\ (\mathbf{X},\mathbf{E})\in\mathcal{A}'_{\infty}\text{ a.s.}}}
\liminf_{n\to\infty}\frac{1}{n}
\big(H(X^n)-\rho\sum_{t=1}^{n}\mathbb{E}[E_t]\big).
\label{eq:JX}
\end{equation}\normalsize
Consider the following MDP.
Let the state be $b_t$, the battery state of the transmitter.
The state space is $\mathcal{B}$, defined in Section~\ref{subsec:universal_charger_noiseless}.
The action $u_t$ is a pair $(e_t,p(x_t))$: an energy symbol $e_t\in\mathcal{E}$ and a probability distribution over $\mathcal{X}$.
Given $b_t$ and $e_t$, eq. \eqref{eq:EH_constraint} determines if $x_t$ is admissible. Accordingly, the action space is
\[
\mathcal{U}(b)\triangleq\big\{
e,\ p(x):\ 
\phi(X)\leq \min\{b+e,\bar{B}\}\text{ a.s.}\big\}.
\]
The disturbance is the channel input $x_t$, the distribution of which is dictated by the action $p(x_t)$.
Hence, the disturbance depends only on the action at time $t$.
The stage reward is a function of the action, given by:
\[
g(u_t)=H(X_t)-\rho e_t.
\]
Finally, the state evolves as in equation~\eqref{eq:EH_battery}:
\[
b_{t+1}=f(b_t,u_t,x_t)=\min\{b_t+e_t,\bar{B}\}-\phi(x_t).
\]
Observe that since the initial state $b_t=\bar{B}$ is fixed, the history is $h_t=x^{t-1}$.
Note that a policy $\pi$ consists of mappings $\mu_t:\mathcal{X}^{t-1}\to\mathcal{U}(b_t)$, which specifies an action $u_t$ for each $x^{t-1}$.
Hence, this defines a function $e_t(x^{t-1})$ and a conditional probability distribution $p(x_t|x^{t-1})$.
\addtocounter{equation}{1}
The average expected reward for this MDP is given by~\eqref{eq:J_is_MDP} at the bottom of the page, where it is shown to be equivalent to $J_X(\rho)$.
This MDP formulation is summarized in Table~\ref{tab:charger_adjacent_Tx_general_DP}.
\begin{table*}
\renewcommand{\arraystretch}{1.5}
\caption{MDP Formulation of Capacity with Charger Adjacent to Transmitter}
\label{tab:charger_adjacent_Tx_general_DP}
\centering
\begin{tabular}{|l|l|}
\hline
state
&$b_t$, the state of the transmitter's battery\\
\hline
state space& $\mathcal{B}$, the set of possible battery states\\
\hline
action&$u_t=(e_t,p(x_t))$, the energy applied by the charger and the input distribution applied by the transmitter\\
\hline
action space&$\mathcal{U}(b)=\big\{e,\ p(x):\ \phi(X)\leq\min\{{b+e},\bar{B}\}\text{ a.s.}\big\}$\\
\hline
reward&$g(u)=g(e,p(x))=H(X)-\rho e$\\
\hline
disturbance&$x_t$, the channel input\\
\hline
disturbance distribution&$p(x_t)$, determined by $u_t$\\
\hline
state dynamics&
$b_{t+1}=f(b_t,u_t,x_t)=\min\{{b_t+e_t},\bar{B}\}-\phi(x_t)$\\
\hline
\end{tabular}
\end{table*}

As in the previous section, by Appendix~\ref{sec:Lagrangian_converges} we can equivalently write~\eqref{eq:JX} as:
\[
J_X(\rho)=\lim_{n\to\infty}\frac{1}{n}\max_{\substack{p(x^n),\{e_t(x^{t-1})\}_{t=1}^{n}:\\ (X^n,E^n)\in\mathcal{A}'_n\text{ a.s.}}}
\big(H(X^n)-\rho\sum_{t=1}^{n}\mathbb{E}[E_t]\big),
\]
which implies the value iteration algorithm will converge to the desired limit, with similar approximation guarantees as in the previous section.
Additionally, by Theorem \ref{thm:Bellman},
if there exist a scalar $\lambda\in\mathbb{R}$ and a vector $h\in\mathbb{R}^{|\mathcal{B}|}$ that satisfy the following Bellman equation:
\begin{equation}
\lambda+h(b)
=\max_{u\in\mathcal{U}(b)}\Big\{
g(u)+\mathbb{E}\big[h(f(b,u,X))\big]\Big\}
\qquad\forall b\in\mathcal{B},
\label{eq:DP_Bellman_CX_general}
\end{equation}
then $J_X(\rho)=\lambda$.
Furthermore, if $u^\star(b)=(e^\star(b),p^\star(x|b))$ attains the maximum in~\eqref{eq:DP_Bellman_CX_general}, then the optimal policy is stationary and is given by $u^\star(b)$, i.e. $e^\star_t(x^{t-1})=e^\star(b_t)$ and $p^\star(x_t|x^{t-1})=p^\star(x_t|b_t)$.

\subsection{Fully Cognitive Charger}
\label{subsec:charger_knows_M_noiseless}

The final case we study is when the charger observes the message.
Recall the capacity expression \eqref{eq:CM_relaxed}.
Before applying Lemma~\ref{lemma:Lagrange_multipliers}, we show that we can restrict the set of input distributions $p(x^n,e^n)=\prod_{t=1}^{n}p(x_t,e_t|x^{t-1},e^{t-1})$ over which to maximize.
The battery state $b_t$ is a deterministic function of $(x^{t-1},e^{t-1})$, hence
\[
p(x_t,e_t|x^{t-1},e^{t-1})=p(x_t,e_t|x^{t-1},e^{t-1},b_t).
\]
We claim that it is in fact enough to choose only marginal distributions of the form $\{p(x_t,e_t|x^{t-1},b_t)\}_{t=1}^{n}$.
In other words, we can restrict the optimization domain to consist of only input distributions that depend on the past energy arrivals $e^{t-1}$ only through the battery state $b_t$.
More precisely, we state the following lemma:
\begin{lemma}
The capacity of the noiseless channel with a fully cognitive charger can be written as:
\begin{equation}
C_M(\Gamma)=\lim_{n\to\infty}\frac{1}{n}
\max_{\substack{\{p(x_t,e_t|x^{t-1},b_t)\}_{t=1}^{n}:\\
	(X^n,E^n)\in\mathcal{A}'_n\text{ a.s.}\\
	\sum_{t=1}^{n}\mathbb{E}[E_t]\leq n\Gamma}}
	H(X^n),
\end{equation}
where it is understood that the input distribution is given by
\[
p(x_t,e_t|x^{t-1},e^{t-1})=p(x_t,e_t|x^{t-1},b_t),
\quad t=1,\ldots,n,
\]
where $b_t$ is a deterministic function of $x^{t-1},e^{t-1}$, given by the battery evolution equation~\eqref{eq:EH_battery}, i.e. $b_{t+1}=\min\{b_t+e_t,\bar{B}\}-\phi(x_t)$, $b_1=\bar{B}$.
\end{lemma}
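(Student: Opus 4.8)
The plan is to prove the two inclusions separately. Any distribution of the reduced form $\{p(x_t,e_t\mid x^{t-1},b_t)\}_{t=1}^{n}$ is a special case of a general causal factorization $\prod_t p(x_t,e_t\mid x^{t-1},e^{t-1})$ — namely one whose dependence on $e^{t-1}$ happens to factor through the deterministic battery state $b_t$ — so the maximum over the reduced class is trivially at most the unrestricted maximum in \eqref{eq:CM_relaxed}. The entire content of the lemma is therefore the reverse inequality: I would show that every feasible general joint $p(x^n,e^n)$ can be replaced by a reduced distribution $q$ that achieves the same value $H(X^n)$ while preserving both feasibility (support in $\mathcal{A}'_n$) and the average cost budget.

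To build $q$ from a given feasible $p$, I would define the reduced kernels by conditioning $p$ on the battery state,
\[
q(x_t,e_t\mid x^{t-1},b_t)\triangleq\Pr_{p}\!\big(X_t=x_t,\,E_t=e_t \,\big|\, X^{t-1}=x^{t-1},\,B_t=b_t\big),
\]
for every $(x^{t-1},b_t)$ of positive $p$-probability, and arbitrarily (say, concentrated on the zero symbol) otherwise. This is a legitimate reduced distribution by construction, and since $b_t$ is a deterministic function of $(x^{t-1},e^{t-1})$ through \eqref{eq:EH_battery}, conditioning on $(X^{t-1},B_t)$ is merely a coarsening of conditioning on $(X^{t-1},E^{t-1})$.

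The heart of the argument is an induction showing that $q$ and $p$ induce the same joint law of $(X^t,B_{t+1})$ for every $t$. Writing $\beta(b,x,e)=\min\{b+e,\bar B\}-\phi(x)$ for the battery update, the inductive step expands $\Pr_q(X^t=x^t,B_{t+1}=b')$ as a sum over the pairs $(b,e_t)$ with $\beta(b,x_t,e_t)=b'$, substitutes the induction hypothesis $\Pr_q(X^{t-1},B_t)=\Pr_p(X^{t-1},B_t)$ together with the definition of the kernel, and recombines the terms into $\Pr_p(X^t=x^t,B_{t+1}=b')$, using that these $(b,e_t)$ pairs exactly partition the event $\{X^t=x^t,B_{t+1}=b'\}$ under $p$. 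In particular the two laws share the same marginal $p(x^n)$, hence the same objective $H(X^n)$ (which depends on the joint only through $p(x^n)$), and the same per-stage joints $\Pr(X^{t-1},B_t,X_t,E_t)$, so that $\mathbb{E}_q[E_t]=\mathbb{E}_p[E_t]$ and the constraint $\sum_t\mathbb{E}[E_t]\le n\Gamma$ is inherited. Feasibility transfers as well: the kernel $q(\cdot\mid x^{t-1},b_t)$ is supported on exactly the $(x_t,e_t)$ with $\phi(x_t)\le\min\{b_t+e_t,\bar B\}$ that $p$ already uses, so $q$ places mass only on $\mathcal{A}'_n$.

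I expect the main obstacle to be the bookkeeping in this induction — specifically, verifying that matching the joint law of $(X^t,B_{t+1})$ at each step suffices to match $H(X^n)$, even though the reduced kernels still retain full dependence on $X^{t-1}$ and we are only collapsing the $E^{t-1}$-dependence into $b_t$. The structural fact that makes this work, which I would state explicitly, is that both the admissibility test in \eqref{eq:EH_constraint} and the state update in \eqref{eq:EH_battery} depend on the past only through $b_t$, so that $B_t$ is a sufficient statistic of $(X^{t-1},E^{t-1})$ for everything governing the future evolution and the constraints, while the entropy objective sees the past only through $p(x^n)$, which the construction preserves exactly.
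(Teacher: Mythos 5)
Your proposal is correct and follows essentially the same route as the paper's proof: define the reduced kernels by conditioning the original joint on $(X^{t-1},B_t)$, then induct to show the two processes induce the same joint law (the paper tracks $(X^t,E_t,B_t)$ where you track $(X^t,B_{t+1})$, an equivalent bookkeeping), from which equality of $H(X^n)$, of $\sum_t\mathbb{E}[E_t]$, and feasibility all follow. The only cosmetic difference is that the paper applies the construction to the maximizer of the unrestricted problem rather than to an arbitrary feasible distribution.
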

\begin{proof}
We will show that for every $n\geq1$:
\begin{equation}
\max_{\substack{p(x^n,e^n):\\ (X^n,E^n)\in\mathcal{A}'_n\text{ a.s.}\\ \sum_{t=1}^{n}\mathbb{E}[E_t]\leq n\Gamma}}
H(X^n)
=\max_{\substack{\{p(x_t,e_t|x^{t-1},b_t)\}_{t=1}^{n}:\\
	(X^n,E^n)\in\mathcal{A}'_n\text{ a.s.}\\
	\sum_{t=1}^{n}\mathbb{E}[E_t]\leq n\Gamma}}
H(X^n).
\label{eq:CM_optimal_input}
\end{equation}
We will show inequalities in both directions.
Clearly, LHS $\geq$ RHS,
since the maximization domain in the RHS is a subset of the one in the LHS.
Hence, we only need to show the following:
\begin{equation}
\max_{\substack{p(x^n,e^n):\\ (X^n,E^n)\in\mathcal{A}'_n\text{ a.s.}\\ \sum_{t=1}^{n}\mathbb{E}[E_t]\leq n\Gamma}}
H(X^n)
\leq
\max_{\substack{\{p(x_t,e_t|x^{t-1},b_t)\}_{t=1}^{n}:\\
	(X^n,E^n)\in\mathcal{A}'_n\text{ a.s.}\\
	\sum_{t=1}^{n}\mathbb{E}[E_t]\leq n\Gamma}}
H(X^n).
\label{eq:enough_to_max_marginals}
\end{equation}
Let $p^\star(x^n,e^n)$ be the maximizer of the LHS, and let $(X^{\star n},E^{\star n})\sim p^\star(x^n,e^n)$.
Let $\{p^\star(x_t,e_t|x^{t-1},b_t)\}_{t=1}^{n}$ be the corresponding set of marginals. Specifically, they are obtained from $p^\star(x^n,e^n)$ as follows:
\[
p^\star(x_t,e_t|x^{t-1},b_t)
=\frac{\sum_{e^{t-1}}p^\star(x^t,e^t)1(b_t|x^{t-1},e^{t-1})}
	{\sum_{e^{t-1}}p^\star(x^{t-1},e^{t-1})1(b_t|x^{t-1},e^{t-1})},
\]
where $1(b_t|x^{t-1},e^{t-1})$ is an indicator function, which equals 1 if $b_t$ is given by the battery evolution relation as before, and 0 otherwise.

Let $(\tilde{X}^n,\tilde{E}^n)\sim\tilde{p}(x^n,e^n)$, where
\[
\tilde{p}(x_t,e_t|x^{t-1},e^{t-1})=p^\star(x_t,e_t|x^{t-1},b_t).
\]
First, observe that $(\tilde{X}^n,\tilde{E}^n)\in\mathcal{A}'_n$ a.s.:
This follows because $(X^{\star n},E^{\star n})\in\mathcal{A}'_n$ a.s., which implies 
$p^\star(x_t,e_t|x^{t-1},e^{t-1})=0$
for all $x^t,e^t$ such that $\phi(x_t)>\min\{b_t+e_t,\bar{B}\}$.
Then $\tilde{p}(x_t,e_t|x^{t-1},e^{t-1})=p^\star(x_t,e_t|x^{t-1},b_t)=0$ for all such $x^t,e^t$.

Next, we will show by induction that 
\begin{equation}
\{\tilde{p}(x^t,e_t,b_t)\}_{t=1}^{n}=\{p^\star(x^t,e_t,b_t)\}_{t=1}^{n}.
\label{eq:ptilde_eq_pstar}
\end{equation}
This is true for $t=1$:
\begin{align*}
\tilde{p}(x_1,e_1,b_1)
&=\tilde{p}(b_1)\tilde{p}(x_1,e_1|b_1)\\
&=1\{b_1=\bar{B}\}\cdot p^\star(x_1,e_1|b_1)\\
&=p^\star(x_1,e_1,b_1).
\end{align*}
Next, for $t>1$, assume 
\[\tilde{p}(x^{t-1},e_{t-1},b_{t-1})=p^\star(x^{t-1},e_{t-1},b_{t-1}).\]
Since $b_t$ is a deterministic function of $x_{t-1},e_{t-1},b_{t-1}$, this implies
\[\tilde{p}(x^{t-1},e_{t-1},b_{t-1},b_{t})=p^\star(x^{t-1},e_{t-1},b_{t-1},b_{t}),\]
which in turn implies $\tilde{p}(x^{t-1},b_{t})=p^\star(x^{t-1},b_{t})$.
Hence,
\begin{align*}
\tilde{p}(x^{t},e_t,b_t)
&=\tilde{p}(x^{t-1},b_t)\tilde{p}(x_t,e_t|x^{t-1},b_t)\\
&\overset{\text{($*$)}}{=}p^\star(x^{t-1},b_t)p^\star(x_t,e_t|x^{t-1},b_t)\\
&=p^\star(x^t,e_t,b_t),
\end{align*}
where ($*$) is by definition of $\tilde{p}$.

Finally, from~\eqref{eq:ptilde_eq_pstar}, it is clear that
\[\sum_{t=1}^{n}\mathbb{E}[\tilde{E}_t]=\sum_{t=1}^{n}\mathbb{E}[E^{\star}_n]\leq n\Gamma\]
and $H(\tilde{X}^n)=H(X^{\star n})$, which yields~\eqref{eq:enough_to_max_marginals}.
\end{proof}

%

In light of Lemma~\ref{lemma:Lagrange_multipliers}, we consider the following optimization problem, for $\rho\geq0$:
\begin{align}
J_M(\rho)&=\hspace{-1em}\sup_{\substack{\{p(x_t,e_t|x^{t-1},b_t)\}_{t=1}^{\infty}:\\ (\mathbf{X},\mathbf{E})\in\mathcal{A}'_\infty\text{ a.s.}}}\liminf_{n\to\infty}\frac{1}{n} \big(H(X^n)-\rho\sum_{t=1}^{n}\mathbb{E}[E_t]\big).
\label{eq:JM}
\end{align}

We will now formulate this problem as an MDP.
We will start from the disturbance $w_t$, which we take to be the channel input $x_t$.
The history is then $h_t=w^{t-1}=x^{t-1}$. Hence, the state, reward, and action, will be implicitly a function of $x^{t-1}$.
The state of the system $\mathbf{s}_t$ will be a vector in the probability simplex $\mathbb{R}^{|\mathcal{B}|}$ similarly to Section~\ref{subsec:universal_charger_noiseless}, where $\mathcal{B}$ is the set of battery states.
Specifically, the state is given by the probability distribution $s_t(b_t)=p(b_t|x^{t-1})$, where $b_t$ is the battery state of the transmitter.
The initial state is $s_1(b)=1\{b=\bar{B}\}$.

The action $\mathbf{u}_t$ is a stochastic matrix representing a conditional probability distribution $p(x,e|b)$, that is, $\mathbf{u}_t\in\mathbb{R}^{|\mathcal{B}|\times|\mathcal{X}|\cdot|\mathcal{E}|}$.
The action space $\mathcal{U}$ is the set of all such conditional probability distributions such that the energy constraint~\eqref{eq:EH_battery} holds, namely
$p(x,e|b)=0$ for all $(x,e,b)$ that satisfy $\phi(x)>\min\{b+e,\bar{B}\}$.
The action is chosen by a mapping $\mu_t:\mathcal{X}^{t-1}\to\mathcal{U}$, which defines a conditional probability distribution $p(x_t,e_t|b_t,x^{t-1})$.
This implies that a policy $\pi$ is exactly a set of probability distributions $\{p(x_t,e_t|b_t,x^{t-1})\}_{t=1}^{\infty}$ that satisfy 
$p(x_t,e_t|b_t,x^{t-1})=0$ if $\phi(x_t)>\min\{b_t+e_t,\bar{B}\}$,
which is equivalent to satisfying the constraint
$(\mathbf{X},\mathbf{E})\in\mathcal{A}'_\infty$ a.s. in~\eqref{eq:JM}.

Before we proceed to define the reward, we will show that the disturbance distribution depends only on the current state and action, and that the state evolves as a function of the current state, action, and disturbance.
Starting with the disturbance, we have:
\begin{align}
p(w_t|w^{t-1})&=p(x_t|x^{t-1})\nonumber\\
&=\sum_{e_t,b_t}p(x_t,e_t,b_t|x^{t-1})\nonumber\\
&=\sum_{e_t,b_t}p(b_t|x^{t-1})p(x_t,e_t|x^{t-1},b_t)\nonumber\\
&=\sum_{e_t,b_t}s_t(b_t)u_t(x_t,e_t|b_t)\nonumber\\
&=\sum_{e,b}s_t(b)u_t(w_t,e|b).
\label{eq:CM_MDP_px}
\end{align}
The state at time $t+1$ is given by a state evolution function $\mathbf{s}_{t+1}=f(\mathbf{s}_t,\mathbf{u}_t,w_t)$, as shown below:
\begin{align}
s_{t+1}(b_{t+1})
&=p(b_{t+1}|x^t)\nonumber\\*
&=\frac{p(b_{t+1},x_t|x^{t-1})}{p(x_t|x^{t-1})}\nonumber\\
&=\frac{\sum_{e_t,b_t}p(x_t,e_t,b_t,b_{t+1}|x^{t-1})}
	{\sum_{e_t,b_t}p(x_t,e_t,b_t|x^{t-1})}\nonumber\\
&=\frac{\sum_{e_t,b_t}p(b_t|x^{t-1})p(x_t,e_t|b_t,x^{t-1})1(b_{t+1}|x_t,e_t,b_t)}
{\sum_{e_t,b_t}p(b_t|x^{t-1})p(x_t,e_t|b_t,x^{t-1})}\nonumber\\
&=\frac{\sum_{e_t,b_t}s_t(b_t)u_t(x_t,e_t|b_t)1(b_{t+1}|x_t,e_t,b_t)}
{\sum_{e_t,b_t}s_t(b_t)u_t(x_t,e_t)}\nonumber\\*
&=\frac{\sum_{e_t,b_t}s_t(b_t)u_t(w_t,e_t|b_t)1(b_{t+1}|w_t,e_t,b_t)}
{\sum_{e_t,b_t}s_t(b_t)u_t(w_t,e_t)},
\label{eq:CM_DP_state_dynamics}
\end{align}
where $1(b_{t+1}|x_t,e_t,b_t)$ indicates that $b_{t+1}$ is a deterministic function of $b_t,x_t,e_t$, given by the battery evolution relation~\eqref{eq:EH_battery}:
\[
1(b_{t+1}|x_t,e_t,b_t)=
1\big\{b_{t+1}=\min\{b_t+e_t,\bar{B}\}-\phi(x_t)\big\}.
\]

Finally, we define the reward at time $t$ as
\[
H(X_t|X^{t-1}=x^{t-1})-\rho\mathbb{E}[E_t|X^{t-1}=x^{t-1}].
\]
Observe that it depends only on the marginal $p(x_t,e_t|x^{t-1})$, which is given by:
\begin{align*}
p(x_t,e_t|x^{t-1})
&=\sum_{b_t}p(b_t|x^{t-1})p(x_t,e_t|b_t,x^{t-1})\\
&=\sum_{b}s_t(b)u_t(x_t,e_t|b)
\end{align*}
Hence we can write the reward as a function of the current state and action:
\begin{align*}
&g(\mathbf{s}_t,\mathbf{u}_t)\\*
&\quad =\sum_{x,e,b}s_t(b)u_t(x,e|b)\big(-\log\sum_{e',b'}s_t(b')u_t(x,e'|b')-\rho e\big).
\end{align*}
We conclude that the formulation defined above indeed satisfies the conditions for an MDP. We summarize this formulation in Table~\ref{tab:charger_knows_M_general_DP}.
\begin{table*}
\renewcommand{\arraystretch}{1.5}
\caption{MDP Formulation of Capacity with Fully Cognitive Charger}
\label{tab:charger_knows_M_general_DP}
\centering
\begin{tabular}{|l|p{11cm}|}
\hline
state
&$\mathbf{s}_t=[s_t(b): b\in\mathcal{B}]$, a probability distribution of the battery state, represents $p(b_t|x^{t-1})$\\
\hline
state space& $\mathcal{S}$, the probability simplex in $\mathbb{R}^{|\mathcal{B}|}$\\
\hline
action&$\mathbf{u}_t=[u_t(x,e|b):(x,e,b)\in\mathcal{X}\times\mathcal{E}\times\mathcal{B}]$, a conditional probability of the channel input and energy given the battery state, represents $p(x_t,e_t|b_t,x^{t-1})$\\
\hline
action space&$\mathcal{U}$, the set of all stochastic matrices $p(x,e|b)$ s.t. $\phi(X)\leq\min\{B+E,\bar{B}\}$ w.p. 1
\\
\hline
reward&$g(\mathbf{s},\mathbf{u})=H(X_t|x^{t-1})-\rho\mathbb{E}[E_t|x^{t-1}]$\\
\hline
disturbance&$w_t$, represents the channel input $x_t$\\
\hline
disturbance distribution&determined by $\mathbf{s}_t$ and $\mathbf{u}_t$ in \eqref{eq:CM_MDP_px}\\
\hline
state dynamics&$\mathbf{s}_{t+1}=f(\mathbf{s}_t,\mathbf{u}_t,w_t)$ given by \eqref{eq:CM_DP_state_dynamics}\\
\hline
\end{tabular}
\end{table*}

What remains to be verified is that this MDP indeed solves our original problem in~\eqref{eq:JM}.
This follows by observing that the average expected reward per stage is given by:
\begin{align*}
J&=\sup_{\pi}\liminf_{n\to\infty}\frac{1}{n}\sum_{t=1}^{n}\mathbb{E}[g(\mathbf{s}_t,\mu_t(H_t))]\\
&\overset{\text{(i)}}{=}\sup_{\substack{\{p(x_t,e_t|x^{t-1},b_t)\}_{t=1}^{\infty}:\\ (\mathbf{X},\mathbf{E})\in\mathcal{A}'_\infty\text{ a.s.}}}\liminf_{n\to\infty}\frac{1}{n}
\sum_{t=1}^{n}\big(H(X_t|X^{t-1})-\rho\mathbb{E}[E_t]\big)\\
&=\sup_{\substack{\{p(x_t,e_t|x^{t-1},b_t)\}_{t=1}^{\infty}:\\ (\mathbf{X},\mathbf{E})\in\mathcal{A}'_\infty\text{ a.s.}}}\liminf_{n\to\infty}\frac{1}{n}
\big(H(X^n)-\rho\sum_{t=1}^{n}\mathbb{E}[E_t]\big)\\
&=J_M(\rho),
\end{align*}
where (i) is because the set of policies $\pi$ is exactly the set of marginals that satisfy the constraints as argued earlier, and because the expectation at time $t$ is over the disturbances distribution $p(w^{t-1})$, which is equivalent to $p(x^{t-1})$.

\begin{remark}
The channel discussed here falls into the category of finite-state channels with feedback discussed in \cite{yang2005feedback,permuter2008capacity}, where we identify the state as $B_t$, the input as $(X_t,E_t)$, and the output as $X_t$.
The treatment here followed the same lines as in these works.
\end{remark}

As in the previous sections,
using the result in Appendix~\ref{sec:Lagrangian_converges}, $J_M(\rho)$ is given also by:
\[
J_M(\rho)=\lim_{n\to\infty}\frac{1}{n}\max_{\substack{\{p(x_t,e_t|x^{t-1},b_t)\}_{t=1}^{n}:\\ (X^n,E^n)\in\mathcal{A}'_n\text{ a.s.}}}
\big(H(X^n)-\rho\sum_{t=1}^{n}\mathbb{E}[E_t]\big),
\]
which implies convergence of the value iteration algorithm.
The Bellman equation (Theorem \ref{thm:Bellman}) is given by:
\begin{equation}
\lambda+h(\mathbf{s})=\max_{u\in\mathcal{U}}\left\{
g(\mathbf{s},\mathbf{u})+\mathbb{E}[h(f(\mathbf{s},\mathbf{u},X))]\right\}
\qquad \forall s\in\mathcal{S},
\label{eq:CM_DP_Bellman}
\end{equation}
where $\lambda\in\mathbb{R}$ and $h:\mathcal{S}\to\mathbb{R}$.


\section{Example}
\label{sec:example}

In this section we will consider a concrete example and compute the capacity for the different cases discussed in Theorem~\ref{thm:capacity}.
The input alphabet is $\mathcal{X}=\{0,1,2\}$, with energy cost function $\phi(x)=x$.
The channel is assumed to be noiseless: $Y_t=X_t$.
The battery capacity is $\bar{B}=2$, and the energy alphabet is $\mathcal{E}=\{0,2\}$ (we do not allow the charger to choose $e=1$).
This models a system in which the charger is not accurate, and can only release large bursts of energy that completely charge the battery.


In Appendix~\ref{sec:example_derivations} we derive closed-form expressions for $C_\emptyset(\Gamma)$, $C_X(\Gamma)$, and $C_M(\Gamma)$.
Specifically, we have the following propositions.
\begin{proposition}\label{prop:example_C0}
The capacity of the channel defined above for the case of a generic charger is given as follows:
For any $0<\Gamma\leq 2$ which satisfies $\Gamma=\tfrac{2}{\ell}$ for some integer $\ell\geq1$, capacity is given by
\begin{align}
C_\emptyset(\Gamma)
&=\frac{\Gamma}{2}\log\frac{(\Gamma+1)(\Gamma+2)}{\Gamma^2}.
\label{eq:example_C0_integer}
\end{align}
For all other values of $0<\Gamma\leq2$, let $\Gamma_1=\frac{2}{\lceil2/\Gamma\rceil}$ and $\Gamma_2=\frac{2}{\lfloor2/\Gamma\rfloor}$, and let $\alpha=\frac{\Gamma_2-\Gamma}{\Gamma_2-\Gamma_1}$. Then
\begin{equation}
C_\emptyset(\Gamma)=\alpha C_\emptyset(\Gamma_1)+(1-\alpha)C_\emptyset(\Gamma_2).
\label{eq:example_C0_Gamma}
\end{equation}
Moreover, $C_\emptyset(0)=0$ and $C_\emptyset(\Gamma)=\log3$ for $\Gamma>2$.
\end{proposition}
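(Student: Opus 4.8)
\emph{The plan is to reduce the computation of $C_\emptyset$ for this channel to a pure counting problem that factorizes because of the structure of the energy alphabet.} Since the channel is noiseless, \eqref{eq:C0_relaxed} gives $C_\emptyset(\Gamma)=\lim_{n}\frac1n\max_{e^n:\,\sum_t e_t\le n\Gamma}\log(\mathbf N_n\cdot\mathbf1)$, because for a fixed charging sequence the uniform distribution over the $x^n$ compatible with $e^n$ attains $H(X^n)=\log(\mathbf N_n\cdot\mathbf 1)$. The key observation is that with $\bar B=2$ and $\mathcal E=\{0,2\}$ the charging adjacency matrix is $\mathbf A_2=\mathbf1\mathbf1^\top$: applying $e_t=2$ makes the battery full regardless of its previous state, since $\min\{b+2,2\}=2$ for every $b\in\mathcal B$. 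Hence $\mathbf N_{t-1}\mathbf A_2=(\mathbf N_{t-1}\cdot\mathbf1)\mathbf1^\top$, so each charge \emph{resets} the count and the recursion $\mathbf N_t=\mathbf N_{t-1}\mathbf A_{e_t}$ telescopes into a product over the inter-charge segments. If the charge times split the block into segments of lengths $\ell_0,\dots,\ell_m$ (so $\sum_j\ell_j=n$ and $m$ is the number of charges), then $\mathbf N_n\cdot\mathbf1=\prod_{j=0}^m N(\ell_j)$, where $N(\ell)=\mathbf1^\top\mathbf A_0^{\,\ell-1}\mathbf1$ counts the sequences transmittable over $\ell$ slots starting from a full battery. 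A direct count gives $N(\ell)=\#\{x^\ell\in\{0,1,2\}^\ell:\sum_t x_t\le2\}=\tfrac{(\ell+1)(\ell+2)}2$.

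Taking logarithms, the normalized reward of a charging sequence is $\frac1n\sum_j\log N(\ell_j)=\sum_j\frac{\ell_j}{n}\cdot\frac1{\ell_j}\log N(\ell_j)$ and its average energy is $\frac{2m}{n}=\sum_j\frac{\ell_j}{n}\cdot\frac2{\ell_j}$ up to an $O(1/n)$ boundary term from the initially full battery. Thus, in the limit, every admissible (energy, rate) pair is a convex combination of the discrete \emph{operating points} $\big\{(2/\ell,\ \tfrac1\ell\log N(\ell)):\ell\ge1\big\}$, with weights $\ell_j/n$ summing to one. This yields both directions at once: charging periodically with period $\ell$ (all segments equal to $\ell$) achieves the point $(2/\ell,\tfrac1\ell\log N(\ell))$ exactly, and time-sharing two such schemes achieves any convex combination, so $C_\emptyset(\Gamma)$ is at least the upper concave envelope of the operating points; conversely, \emph{any} fixed $e^n$ produces a point in their convex hull with energy coordinate at most $\Gamma+O(1/n)$, and since the capacity is the maximum over $e^n$, $C_\emptyset(\Gamma)$ is at most this envelope. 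Hence $C_\emptyset(\Gamma)$ equals the upper concave envelope of $\{(2/\ell,\tfrac1\ell\log N(\ell))\}_{\ell\ge1}$ (the same conclusion one would reach by solving the Bellman equation \eqref{eq:Bellman_universal_charger}, but the counting route is more transparent here).

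It remains to identify this envelope with the claimed closed form. Substituting $\ell=2/\Gamma$ into $\tfrac1\ell\log N(\ell)$ and using $N(\ell)=\tfrac{(\ell+1)(\ell+2)}2$ reproduces, at the integer points $\Gamma=2/\ell$, exactly $\tfrac\Gamma2\log\tfrac{(\Gamma+1)(\Gamma+2)}{\Gamma^2}$, which is \eqref{eq:example_C0_integer}. For the envelope to coincide with the simple two-point interpolation \eqref{eq:example_C0_Gamma} rather than skipping some operating points, I must verify that the points already lie in concave position, i.e. that the chord slope between successive points is monotone in $\Gamma$; this amounts to checking monotonicity of $\ell\mapsto\big(\tfrac1\ell\log N(\ell)-\tfrac1{\ell+1}\log N(\ell+1)\big)\big/\big(\tfrac2\ell-\tfrac2{\ell+1}\big)$, and I expect \textbf{this to be the main obstacle}, since it needs a genuine estimate rather than a structural argument (and it is exactly what makes \eqref{eq:example_C0_integer} itself correct, as every integer point must sit on the envelope). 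Granting it, between $\Gamma_1=2/\lceil2/\Gamma\rceil$ and $\Gamma_2=2/\lfloor2/\Gamma\rfloor$ the envelope is the chord joining the two neighbouring integer-$\ell$ points, and a short computation using $\lceil2/\Gamma\rceil-\lfloor2/\Gamma\rfloor=1$ shows the convex weight equals $\alpha=\tfrac{\Gamma_2-\Gamma}{\Gamma_2-\Gamma_1}$, giving \eqref{eq:example_C0_Gamma}. The boundary cases are immediate: at $\Gamma=0$ only $e=0$ is available, so at most the initial $\bar B=2$ units are ever spent, the number of sequences grows only polynomially in $n$, and $C_\emptyset(0)=0$; for $\Gamma\ge2$ the charger may set $e_t=2$ in every slot, keeping the battery full and permitting all $3^n$ input sequences, so $C_\emptyset(\Gamma)=\log3$, matching \eqref{eq:example_C0_integer} at $\Gamma=2$.
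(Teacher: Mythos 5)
Your route is correct in substance but genuinely different from the paper's. The paper computes $C_\emptyset$ by solving the Bellman equation of the MDP in Table~\ref{tab:universal_charger_general_DP}: it reduces the reachable states to a countable family $\{\mathbf{s}^{(k)}\}$, guesses a periodic solution $(\lambda,\{h_k\})$ for each period $\ell$, and certifies optimality by exhibiting, for each $\ell$, a nonempty interval of Lagrange multipliers $\rho$ on which that policy satisfies the Bellman inequalities; Lemma~\ref{lemma:Lagrange_multipliers} then converts $(\rho,J_\emptyset(\rho))$ into $(\Gamma,C_\emptyset(\Gamma))$. You instead exploit the rank-one structure $\mathbf{A}_2=\mathbf{1}\mathbf{1}^\top$ to factor $\mathbf{N}_n\cdot\mathbf{1}$ over inter-charge segments and reduce everything to the upper concave envelope of the operating points $\big(2/\ell,\ \tfrac1\ell\log\tfrac{(\ell+1)(\ell+2)}{2}\big)$. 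This is more elementary and transparent, but it hinges on the special feature that one energy symbol resets the battery regardless of its state (it would not survive, say, $\mathcal{E}=\{0,1\}$), whereas the Bellman machinery is the general-purpose tool. The two are dual: your chord slope between the $\ell$ and $\ell+1$ points is exactly the switching value of $\rho$ in the paper's~\eqref{eq:generic_charger_example_rho_bounds}.

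The step you explicitly leave open --- that the operating points are in concave position, so no point is skipped by the envelope --- is the only real gap, and it closes in one line rather than requiring a ``genuine estimate.'' Writing $r_\ell=\tfrac1\ell\log\tfrac{(\ell+1)(\ell+2)}{2}$, the slope of the chord joining the $\ell+1$ and $\ell$ points is
\[
m_\ell=\frac{r_\ell-r_{\ell+1}}{2/\ell-2/(\ell+1)}
=\tfrac{1}{2}\log\tfrac{(\ell+1)(\ell+2)}{2}-\tfrac{\ell}{2}\log\tfrac{\ell+3}{\ell+1},
\]
and a direct cancellation gives
\[
m_\ell-m_{\ell-1}=\tfrac{\ell}{2}\log\tfrac{(\ell+1)(\ell+2)}{\ell(\ell+3)}>0,
\]
since $(\ell+1)(\ell+2)=\ell(\ell+3)+2$. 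Hence the chord slopes decrease as $\Gamma$ increases, every point $(2/\ell,r_\ell)$ lies on the envelope, and between consecutive points the envelope is the chord --- exactly \eqref{eq:example_C0_integer} and \eqref{eq:example_C0_Gamma}. (This is the same monotonicity the paper asserts without detail for the right-hand sides of \eqref{eq:rho_lower_bounds_k} and \eqref{eq:rho_upper_bounds_k}.) Two bookkeeping points worth making explicit in your converse: the segment before the first charge carries no energy cost but contributes at most $\tfrac1n\log\tfrac{(n+1)(n+2)}{2}=O(\tfrac{\log n}{n})$ to the rate, so treating it as the degenerate operating point $(0,0)$ is harmless; and you need the envelope to be nondecreasing so that relaxing $\sum_t e_t= n\Gamma$ to $\le n\Gamma$ costs nothing, which holds because $r_\ell$ and $2/\ell$ both decrease in $\ell$.
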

When $\Gamma=\tfrac{2}{\ell}$ for some integer $\ell\geq1$, the optimal charging sequence is simply to charge $e=2$ once every $\ell$ time slots, i.e.
\[
e_t=\begin{cases}
2&,t=1\mod \ell\\
0&,\text{otherwise}
\end{cases}
\]
The transmitter generates its codewords by considering all possible sequences in a frame of length $\ell$ (the time between consecutive energy arrivals) that have cost smaller than or equal to $2$. The possible sequences are those containing exactly one 2 (there are $\ell$ such sequences); containing exactly one 1 ($\ell$ such sequences); containing exactly two 1's ($\binom{\ell}{2}$ such sequences); and the all-zero sequence. Therefore, there is a total of $\frac{\ell^2+3\ell+2}{2}$ sequences for a frame of length $\ell$, giving a rate $R=\frac{1}{\ell}\log\frac{(\ell+1)(\ell+2)}{2}$. For other values of $\Gamma$, time-sharing between the strategies corresponding to the two closest integer values for $\ell$, i.e. $\ell_1=\lceil2/\Gamma\rceil$ and $\ell_2=\lfloor2/\Gamma\rfloor$, is optimal, giving the expression in~\eqref{eq:example_C0_Gamma}.

\begin{proposition}\label{prop:example_CX}
The capacity of the channel defined above when the charger is adjacent to the transmitter is given by the following expression:
\begin{equation}
{C}_X(\Gamma)=\begin{cases}
\lefteqn{\big(1+\tfrac{\Gamma}{2}\big)\log\tfrac{2+\Gamma}{2\Gamma}
	-\big(1-\tfrac{\Gamma}{2}\big)\log\tfrac{2-\Gamma}{2\Gamma}}\\
	&,0\leq\Gamma<2/3\\
\frac{1}{2}\Gamma+1\hspace{70pt}
	&,2/3\leq\Gamma<1\\
\frac{\Gamma}{2}+H_2\left(\frac{\Gamma}{2}\right)
	&,1\leq\Gamma<4/3\\
\log 3
	&,4/3\leq\Gamma
\end{cases}
\label{eq:example_CX_Gamma}
\end{equation}
\end{proposition}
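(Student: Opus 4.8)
The plan is to instantiate the MDP of Section~\ref{subsec:charger_adj_Tx_noiseless} for this channel and to solve its Bellman equation~\eqref{eq:DP_Bellman_CX_general} in closed form for every Lagrange multiplier $\rho\geq0$, then recover $C_X(\Gamma)$ from $J_X(\rho)$ via Lemma~\ref{lemma:Lagrange_multipliers}. Here the state space is $\mathcal{B}=\{0,1,2\}$ and the only energy actions are $e\in\{0,2\}$. First I would prune the action space: at $b=2$ the action $e=2$ is strictly dominated (it leaves the battery full, exactly as $e=0$ does, but pays $2\rho$), so only $e=0$ matters there, while the admissible symbols for each pair $(b,e)$ are those with $\phi(x)=x\leq\min\{b+e,2\}$. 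For fixed $(b,e)$ the inner maximization over $p(x)$ in the Bellman operator is a pure max-entropy problem $\max_p\{H(X)+\sum_x p(x)\,h(b'(x))\}$, whose value is the log-partition $\log\sum_x 2^{h(b'(x))}$, attained at the Gibbs law $p^\star(x)\propto 2^{h(b'(x))}$. This collapses~\eqref{eq:DP_Bellman_CX_general} into three scalar equations in the unknowns $h(0),h(1),h(2),\lambda$.

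Normalizing $h(2)=0$ and writing $a=2^{h(0)}$, $c=2^{h(1)}$, the $b=2$ equation gives $2^{\lambda}=a+c+1$; the $b=0$ equation forces $a=2^{-2\rho}$ whenever the charging branch $e=2$ is active (the alternative $e=0$ pins $\lambda=0$, the degenerate $\Gamma=0$ solution); and the $b=1$ equation bifurcates into the two policies that generate the regimes: charging at $b=1$ (policy P1), which yields $c=a=2^{-2\rho}$, versus not charging at $b=1$ (policy P0), which yields $a=c^2/(1-c)$. The next step is to pin down, by comparing the two candidate values at $b=1$, the range of $\rho$ over which each branch is the true maximizer; I expect a single crossover at $\rho=1/2$, with P1 optimal for $\rho\le 1/2$ and P0 for $\rho\ge 1/2$. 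For each branch I would then solve the scalar system for $\lambda=J_X(\rho)$, compute the induced average energy $\Gamma(\rho)$ either as $-dJ_X/d\rho$ or directly as $2\pi(0)$ from the stationary distribution of the battery chain under the Gibbs policy, and eliminate $\rho$ in $C_X(\Gamma)=J_X(\rho)+\rho\Gamma$. Policy P1 should reproduce the smooth branch $\tfrac{\Gamma}{2}+H_2(\tfrac{\Gamma}{2})$ on $\Gamma\in[1,4/3)$, policy P0 the logarithmic branch on $\Gamma\in(0,2/3]$, and $\rho=0$ the saturated value $\log 3$ for $\Gamma\ge 4/3$.

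The one piece that does not come from a single stationary policy is the linear segment $\tfrac{\Gamma}{2}+1$ on $[2/3,1]$. This arises exactly at the crossover $\rho=1/2$, where P0 and P1 are simultaneously optimal but carry distinct average energies $\Gamma=2/3$ and $\Gamma=1$; every intermediate value is realized by time-sharing the two optimal policies, and Lemma~\ref{lemma:Lagrange_multipliers}, through its $\Gamma^\alpha=\alpha\underline{\Gamma}+(1-\alpha)\overline{\Gamma}$ construction, certifies that $C_X$ is the corresponding linear interpolation $C_X(\Gamma)=J_X(\tfrac12)+\tfrac12\Gamma=1+\tfrac{\Gamma}{2}$. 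I would close by verifying continuity and the matching of slopes at $\Gamma=2/3,1,4/3$ (each equal to the governing $\rho$), confirming that the four pieces glue into one concave curve.

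The main obstacle I anticipate is the global case analysis of the Bellman operator: one must verify that the \emph{guessed} branch actually attains the maximum over $e\in\{0,2\}$ simultaneously at every state for the claimed $\rho$-range, not merely locally, so that the candidate $(\lambda,h)$ is a genuine solution of~\eqref{eq:DP_Bellman_CX_general} and Theorem~\ref{thm:Bellman} applies. The algebra of solving $a=c^2/(1-c)$ jointly with $a=2^{-2\rho}$ and inverting $\Gamma(\rho)$ for policy P0 is the messiest computation, but it is routine once the correct policy structure and the crossover point $\rho=1/2$ are identified.
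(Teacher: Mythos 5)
Your proposal follows essentially the same route as the paper's Appendix G.2: instantiate the Bellman equation \eqref{eq:DP_Bellman_CX_general}, resolve the inner maximization via the max-entropy/Gibbs lemma, identify the two stationary charging policies with the crossover at $\rho=1/2$ (the paper's regions $e^\star(b)=(2,2,0)$ for $\rho<1/2$ and $(2,0,0)$ for $\rho\geq1/2$), recover $\Gamma(\rho)$ from the stationary battery distribution, and obtain the linear middle segment by time-sharing at $\rho=1/2$ via Lemma~\ref{lemma:Lagrange_multipliers}. The only differences are cosmetic (normalizing $h(2)=0$ instead of $h(0)=0$; note also that the shortcut $\Gamma=2\pi(0)$ applies only to the non-charging-at-$b=1$ branch, since under the other policy the charger pays at both $b=0$ and $b=1$), and your flagged obstacle --- verifying the guessed branch maximizes the Bellman operator globally over the claimed $\rho$-range --- is exactly what the paper's case analysis carries out.
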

The optimal charging strategy $e^\star(b)$, as found in Appendix~\ref{sec:example_derivations}, is $e^\star(0)=2$, $e^\star(2)=0$, and
\[
e^\star(1)=\begin{cases}
0&,0<\Gamma\leq2/3\\
2&,1\leq\Gamma\leq 4/3
\end{cases}
\]
i.e. when the battery is empty, one must always charge; when the battery is full, one must never charge; and when the battery level is $b=1$, charging depends on $\Gamma$: we never charge when $\Gamma\leq2/3$ and always charge when $1\leq\Gamma$ in this case.
Note that it is indeed surprising that the charging actions depend only on $b$ and not on the transmitted symbols $x^{t-1}$, and exhibit a dichotomy as a function of $\Gamma$. Note that, as discussed next, even though the transmitter's strategy is the same for a range of $\Gamma$'s, the average cost of the charging sequences will be different for different values of $\Gamma$ (and equal to $\Gamma$), since the emitted charging sequence also depends on the transmitted sequence by the transmitter. Note that $e^\star(1)$ is not specified for $2/3<\Gamma<1$; in that case, time-sharing between the charging strategies corresponding to $\Gamma=2/3$ and $\Gamma=1$ is optimal.

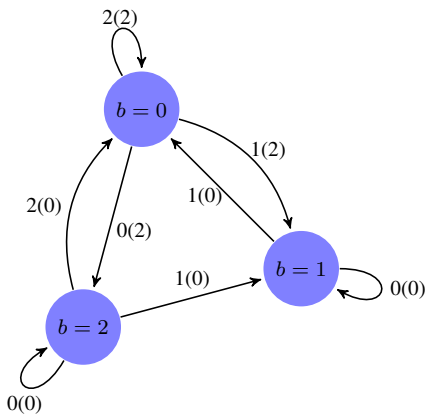
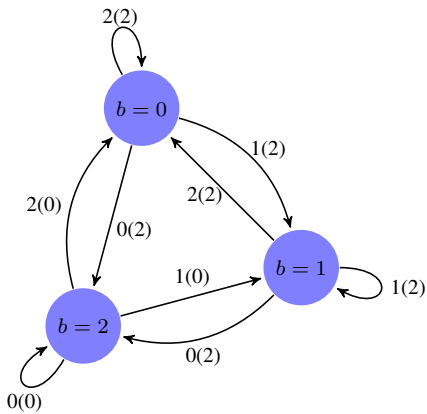
\begin{figure}
\centering
\subfloat[$0<\Gamma\leq2/3$]{
\centering
\def \nodedistance {3cm}
\begin{tikzpicture}[->,>=stealth',shorten >=1pt,semithick,font=\footnotesize]
\tikzstyle{every state}=[draw=none,fill=blue!50]
\node[state] (b2) {$b=2$};
\node[state] (b1) at (15:\nodedistance) {$b=1$};
\node[state] (b0) at (75:\nodedistance) {$b=0$};

\path
(b0) edge [in=90,out=120,loop] node [above=-3] {2(2)} ()
	 edge [bend left] node [above right=-4] {1(2)} (b1)
	 edge node [below right=-2] {0(2)} (b2)
(b1) edge [in=-30,out=0,loop] node [right] {0(0)} ()
     edge node [below left=-4] {1(0)} (b0)
(b2) edge [in=-150,out=-120,loop] node [below] {0(0)} (b2)
	 edge [bend left] node [left] {2(0)} (b0)
	 edge node [above] {1(0)} (b1);
\end{tikzpicture}
\label{subfig:small_Gamma}
}\qquad
\subfloat[$1\leq\Gamma\leq4/3$]{
\centering
\def \nodedistance {3cm}
\begin{tikzpicture}[->,>=stealth',shorten >=1pt,semithick,font=\footnotesize]
\tikzstyle{every state}=[draw=none,fill=blue!50]
\node[state] (b2) {$b=2$};
\node[state] (b1) at (15:\nodedistance) {$b=1$};
\node[state] (b0) at (75:\nodedistance) {$b=0$};

\path
(b0) edge [in=90,out=120,loop] node [above=-3] {2(2)} ()
	 edge [bend left] node [above right=-4] {1(2)} (b1)
	 edge node [below right=-2] {0(2)} (b2)
(b1) edge [in=-30,out=0,loop] node [right] {1(2)} ()
     edge node [below left=-4] {2(2)} (b0)
     edge [bend left] node [below] {0(2)} (b2)
(b2) edge [in=-150,out=-120,loop] node [below] {0(0)} (b2)
	 edge [bend left] node [left] {2(0)} (b0)
	 edge node [above] {1(0)} (b1);
\end{tikzpicture}
\label{subfig:large_Gamma}
}
\caption{Constrained systems for the transmitter's input codeword $x^n$, for the noiseless channel with $\mathcal{X}=\{0,1,2\}$ and $\mathcal{E}=\{0,2\}$ when the charger is adjacent to the transmitter.
The nodes represent the battery state $b_t$, and the edges are labeled $x_t (e_t)$, such that transmitting $x_t$ incurs a cost of $e_t$.}
\label{fig:CX_example_constrained_systems}
\end{figure}

Once the charging strategy is fixed as a function of the battery state (or equivalently the input), it is up to the transmitter to choose its codewords so that they satisfy the battery constraints and at the same time they induce no more than cost $\Gamma$ at the charger on average.
This defines a constrained system with costs for the transmitter, plotted in Fig.~\ref{fig:CX_example_constrained_systems}.
For each edge on the graph we associate a symbol $x_t$ and a cost $e_t$ which is incurred when this symbol is transmitted via the charging strategy described above, which is fixed at the charger. Note that since the optimal charging strategy exhibits  a dichotomy as a function of $\Gamma$, we have two constrained systems with costs corresponding to each strategy. An encoder for each of these systems can be designed using the state-splitting algorithm for channels with cost constraints suggested in~\cite{KhayrallahNeuhoff1996}.\footnote{In fact, these are simpler systems than the ones studied in~\cite{KhayrallahNeuhoff1996}, since the cost depends only on the current state and not on the transmitted symbol. Nevertheless, we can apply their results here.}

\begin{proposition}\label{prop:example_CM}
The capacity of the channel defined above with a fully cognitive charger 
for $0<\Gamma\leq10/9$ is given as follows:
\begin{equation}
C_M(\Gamma)=
\log(\zeta^2+2\zeta-3)+\frac{\Gamma-2}{2}\log(\zeta^2-5)-\Gamma,
\label{eq:example_CM_Gamma}
\end{equation}
where $\zeta$ is the unique real root of the cubic equation
\begin{equation}
\Gamma\zeta^3+2(\Gamma-1)\zeta^2-(3\Gamma+4)\zeta-10=0.
\label{eq:zeta_cubic_equation}
\end{equation}
Additionally, $C_M(0)=0$ and $C_M(\Gamma)=\log3$ for $\Gamma>10/9$.
\end{proposition}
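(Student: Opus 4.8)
The plan is to compute $C_M(\Gamma)$ through the Lagrangian reformulation rather than attacking the simplex-valued MDP directly. By Lemma~\ref{lemma:Lagrange_multipliers} it suffices to evaluate $J_M(\rho)$ for every $\rho\geq 0$ and then recover $C_M(\Gamma)=J_M(\rho^\star)+\rho^\star\Gamma$, where $\rho^\star=\rho^\star(\Gamma)$ is fixed by the concave-duality relation $\Gamma=-\frac{d}{d\rho}J_M(\rho)$. Thus the whole problem reduces to obtaining a closed form for $J_M(\rho)$ and eliminating $\rho$ in favor of $\Gamma$.

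First I would derive a transfer-matrix formula for $J_M(\rho)$ that bypasses the belief-state simplex of Table~\ref{tab:charger_knows_M_general_DP}. The key observation is that, since the cognitive charger may choose $e^n$ as a function of the entire message, for a fixed target sequence $x^n$ it should use the charging sequence of minimal total energy $E_{\min}(x^n)$; combined with the Gibbs free-energy identity $\sup_{p}\{H(X^n)-\rho\,\mathbb{E}[E_{\min}]\}=\log\sum_{x^n}2^{-\rho E_{\min}(x^n)}$, this gives $J_M(\rho)=\lim_n\frac1n\log\sum_{x^n\text{ feasible}}2^{-\rho E_{\min}(x^n)}$. I would then show that the greedy rule ``charge ($e=2$) exactly when the current symbol is unaffordable, else $e=0$'' attains $E_{\min}$ and, crucially, is causal and deterministic in $(b_t,x_t)$: because a charge always refills the capped battery to $\bar B=2$, an unforced charge can never reduce the number of later charges. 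This turns the weighted sum into $\mathbf{1}^\top M(\rho)^n\mathbf{1}$ for a $3\times 3$ nonnegative matrix $M(\rho)$ on the battery states $\mathcal{B}=\{0,1,2\}$ with edge weights $2^{-\rho e}$, so that $J_M(\rho)=\log\lambda(\rho)$ with $\lambda(\rho)$ the Perron eigenvalue of $M(\rho)$.

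Next I would carry out the algebra. Writing $u=2^{-\rho}$, the characteristic polynomial of $M(\rho)$ factors as $(1-\lambda)\big(\lambda^2-(2+u^2)\lambda+(1-u^4)\big)$, so $\lambda(\rho)$ solves the quadratic and carries a radical $\sqrt{4+5u^2}$. The substitution $\zeta$ defined by $u^2=4/(\zeta^2-5)$ is exactly the one that rationalizes this radical ($u\sqrt{4+5u^2}=4\zeta/(\zeta^2-5)$), yielding the clean forms $\lambda=(\zeta^2+2\zeta-3)/(\zeta^2-5)$ and $\rho=\log_2\!\big(\sqrt{\zeta^2-5}/2\big)$. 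Substituting these into $\Gamma=-J_M'(\rho)$ produces exactly the cubic~\eqref{eq:zeta_cubic_equation}, while $C_M(\Gamma)=\log_2\lambda+\rho\Gamma$ becomes~\eqref{eq:example_CM_Gamma}. The endpoints follow by inspection: $\rho\to 0$ (i.e. $u\to 1$) gives $\lambda\to 3$ and energy rate $-J_M'(0)=10/9$, so the cost constraint becomes inactive and $C_M=\log 3$ for $\Gamma\geq 10/9$; and $\Gamma=0$ forces the all-zero codeword, giving $C_M(0)=0$.

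The main obstacle is the reduction in the second paragraph: justifying that minimizing energy per codeword and then applying the Gibbs identity is lossless (so that the simplex-valued MDP of Section~\ref{subsec:charger_knows_M_noiseless} collapses to a finite Perron problem), and proving that the greedy charging rule is simultaneously optimal and deterministic, so that the partition function is a genuine transfer-matrix power rather than an over-count of distinct $(x^n,e^n)$ pairs sharing the same $x^n$. To be safe I would cross-check optimality of the resulting stationary policy against the Bellman equation~\eqref{eq:CM_DP_Bellman}. The remaining work is the routine but delicate selection of the correct real root $\zeta$ of~\eqref{eq:zeta_cubic_equation} and the verification that $u^2=4/(\zeta^2-5)\in(0,1]$ over the relevant range $0<\Gamma\leq 10/9$.
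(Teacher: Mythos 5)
Your proposal is correct and reaches the paper's formulas, but the computational core takes a different (and arguably cleaner) route. Both arguments share the same essential reduction: the cognitive charger should attach to each codeword $x^n$ a \emph{minimum}-cost feasible $e^n$, and this minimizer can be taken greedy and causal, $e_t=e^\star(b_t,x_t)$ with $e_t=2$ iff $\phi(x_t)>b_t$. You justify this by an exchange argument (a charge always refills to the cap, so an unforced charge never saves more than one later charge); the paper proves the same fact by a value-function recursion $V_b(x_t^n)$ and a sequence of graph-pruning steps establishing $V_0\leq V_2+2$ and $V_1\leq V_0\leq V_1+2$, which is essentially a formalization of your sketch --- you correctly flag this as the step needing the most care. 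Where you diverge is in evaluating $J_M(\rho)$: the paper collapses the belief-state MDP to a two-state Bellman equation, solves it via Lemma~\ref{lemma:max_entropy_plus_linear}, and then computes $\Gamma(\rho)$ from the stationary distribution of $B_t$; you instead write $J_M(\rho)=\lim_n\frac1n\log\sum_{x^n}2^{-\rho E_{\min}(x^n)}$ (the same maximum-entropy identity, applied once at the $n$-letter level) and read off $\log$ of the Perron eigenvalue of a $3\times3$ transfer matrix, recovering $\Gamma$ as $-J_M'(\rho)$. These are the same computation in different clothing --- exponentiating the Bellman equation gives exactly your eigenvalue equation, the factor $(1-\lambda)$ in your characteristic polynomial is the transient state $b=2$ that the paper discards, and your quadratic factor $\lambda^2-(2+u^2)\lambda+(1-u^4)$ with $u=2^{-\rho}$ reproduces the paper's $J_M(\rho)$ in~\eqref{eq:Jrho_CM}; likewise $-\frac{d}{d\rho}\log\lambda(\rho)$ equals the stationary expected energy rate, so both routes land on the same cubic~\eqref{eq:zeta_cubic_equation}. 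What your approach buys is a shorter endgame (no stationary-distribution computation, and the $\Gamma\to 10/9$ endpoint falls out of $-J_M'(0)$ without the $0/0$ limit the paper must evaluate); what the paper's buys is an explicit optimal input distribution $p^\star(x|b)$ and a verified stationary policy, which your plan would recover only through the suggested cross-check against the Bellman equation~\eqref{eq:CM_DP_Bellman}. Two small points to nail down: the identification $\frac1n\log(\mathbf{v}^\top M(\rho)^n\mathbf{1})\to\log\lambda_{\max}$ should be stated for the actual initial vector $\mathbf{v}=\mathbf{e}_{b=2}$ (harmless, since the recurrent block on $\{0,1\}$ is primitive), and the interchange of supremum and limit implicit in your first display is exactly what Lemma~\ref{lemma:Lagrange_multipliers} and Appendix~\ref{sec:Lagrangian_converges} supply, so it should be cited rather than assumed.
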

In this case, the transmitter and the charger code jointly according to the message. The set of possible codewords consists of all sequences that can be transmitted under some energy sequence that satisfy the average cost constraint. We show in Appendix~\ref{sec:example_derivations} that this scenario reduces to a single constrained system with a cost constraint, plotted in Fig.~\ref{fig:CM_example_constrained_system}.
As before, we can construct an encoder following the state-splitting algorithm in~\cite{KhayrallahNeuhoff1996}.

\begin{figure}
\centering
\def \nodedistance {3cm}
\begin{tikzpicture}[->,>=stealth',shorten >=1pt,semithick,font=\footnotesize]
\tikzstyle{every state}=[draw=none,fill=blue!50]
\node[state] (b0) at (0,0) {$b=0$};
\node[state] (b1) at (\nodedistance,0) {$b=1$};

\path
(b0) edge [in=135,out=165,loop] node [left] {2(2)} ()
	 edge [in=195,out=225,loop] node [left] {0(0)} ()
	 edge [bend left] node [above] {1(2)} (b1)
(b1) edge [in=-15,out=15,loop] node [right] {0(0)} ()
     edge node [below=-3] {2(2)} (b0)
     edge [bend left] node [below] {1(0)} (b0);
\end{tikzpicture}
\caption{Constrained systems for the transmitter-charger codeword pair $(x^n,e^n)$, for the noiseless channel with $\mathcal{X}=\{0,1,2\}$ and $\mathcal{E}=\{0,2\}$ when the charger is fully cognitive.
The nodes represent the battery state $b_t$, and the edges are labeled $x_t (e_t)$, such that transmitting $x_t$ incurs a cost of $e_t$.}
\label{fig:CM_example_constrained_system}
\end{figure}
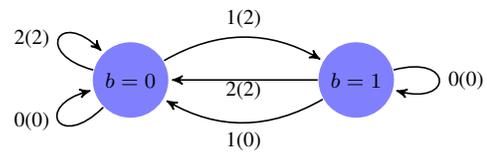

The above capacity expressions are plotted in Fig.~\ref{fig:ternary_channel_binary_energy_capacities} along with the average power upper bound $C_{\mathrm{ub}}(\Gamma)$.
Observe that, at least for some values of $\Gamma$, the capacities are distinct: 
\[C_\emptyset(\Gamma)<C_X(\Gamma)<C_M(\Gamma)<\Cub(\Gamma).\]
Consider the following simple example, which can provide some insight as to why these capacities can be different when the channel does not have a precision charger.
Let $n=3$, and suppose the codebook consists of only two codewords, $\mathbf{x}_1=(1,1,2)$ and $\mathbf{x}_2=(1,2,0)$.
When the charger is fully cognitive ($C_M$), it can set the following charging sequences: $\mathbf{e}_1=(0,0,2)$ and $\mathbf{e}_2=(0,2,0)$, for $\mathbf{x}_1$ and $\mathbf{x}_2$ respectively (recall $B_1=2$).
The total energy cost for this code is $2$.
Now, in the case when the charger is adjacent to the transmitter and observes only the previous input symbols ($C_X$), the charger must have a set of functions $e_t(x^{t-1})$, $t=1,2,3$, such that the energy constraints will be satisfied for both codewords.
Obviously, $e_1=0$ since $B_1=2$. For $t=2$, the charger will observe $x_1=1$ if either codeword was transmitted.
It knows that $B_2=1$, however it can only set $e_2=2$ to be able to support both codewords, even though energy will be clearly wasted.
If $\mathbf{x}_1$ was transmitted this will require $e_3=2$, which ultimately results in total energy cost $4$.

\section{Conclusion}
\label{sec:conclusion}

A general model for remotely powered communication was formulated, and $n$-letter capacity expressions were derived for several cases of interest, based on the availability of side information at the charger. For channels with a precision charger, we showed that side information enables achieving the single-letter capacity under a simple average transmit energy constraint. For noiseless channels, we formulated the capacity in each case as an MDP. This enables the use of the value iteration algorithm for efficient computation of the capacity. Moreover, we showed that the Bellman equation can be explicitly solved for a specific example yielding an analytic expression for the capacities in each case.


\begin{figure}
\centering
\includegraphics[width=\graphicswidth]{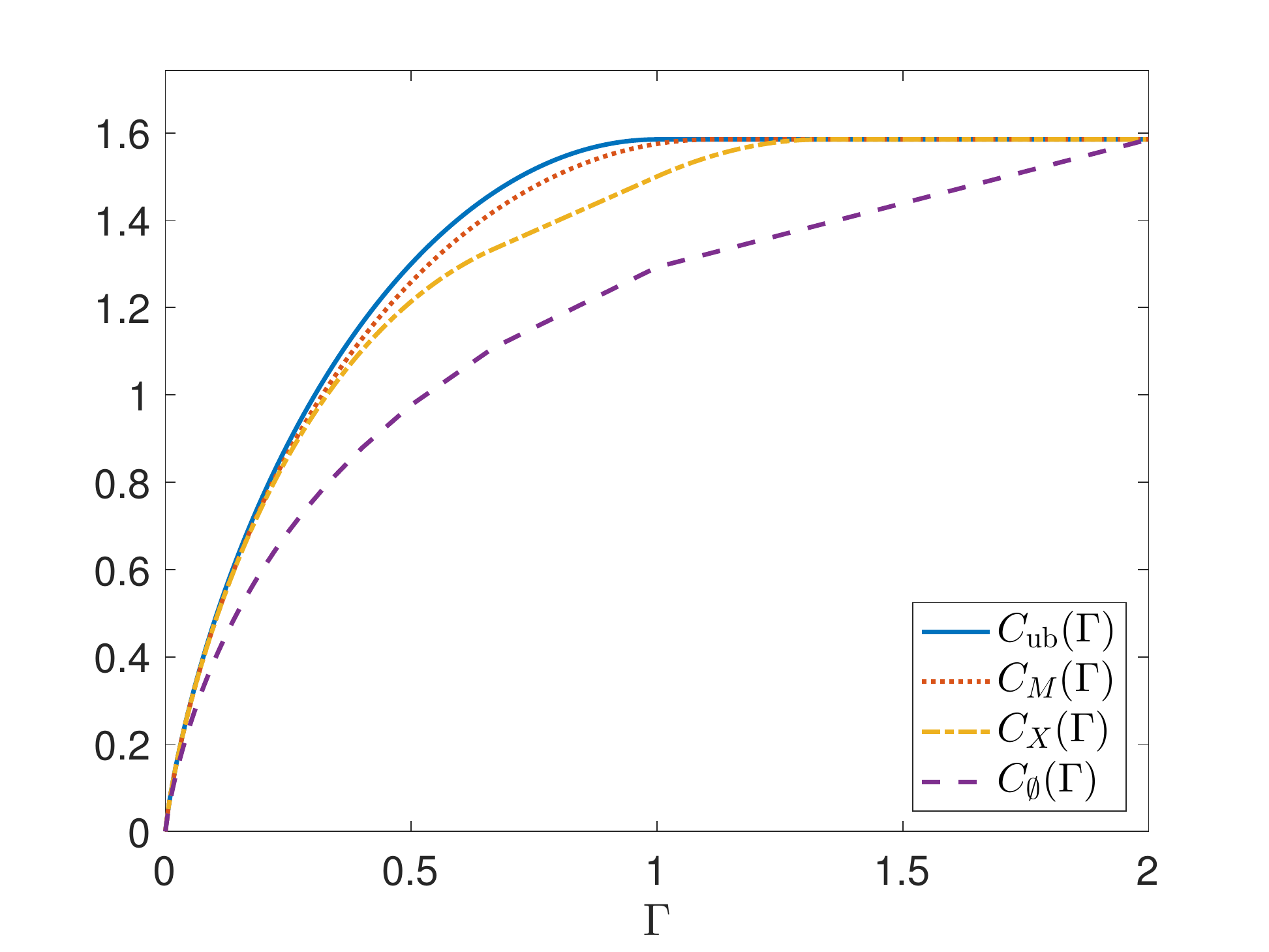}
\caption{Capacity of the noiseless ternary channel $\mathcal{X}=\{0,1,2\}$ with binary energy alphabet $\mathcal{E}=\{0,2\}$ for various levels of charger cognition.}
\label{fig:ternary_channel_binary_energy_capacities}
\end{figure}

An interesting question that remains open is to find single-letter capacity expressions or MDP formulations when the channel is noisy. Note that while our results for the precision charger hold equally well for noisy and noiseless channels, the development of the MDP formulations was restricted to the noiseless case. A particularly interesting scenario, which remains largely open, is when the receiver charges the transmitter, $C_Y(\Gamma)$. In this case, the charger can convey not only energy but also feedback information with its actions to the transmitter. A first order question towards addressing this case is whether feedback information can increase the capacity of a discrete memoryless channel with battery constraints as given in \eqref{eq:EH_constraint} and \eqref{eq:EH_battery}. Interestingly, this question relates to an old claim by Shannon from his 1956 paper \cite{Shannon1956}, which says that feedback would be useless for increasing the capacity of such channels.  We show in~\cite{ShavivOzgurPermuter2015} that feedback can indeed increase the capacity of these channels, providing a counter-example to Shannon's claim.  
Additional future directions include incorporating processing cost or battery leakage in the model.

\appendices

\section{A General Capacity Theorem}
\label{sec:strategies_capacity_and_special_cases}

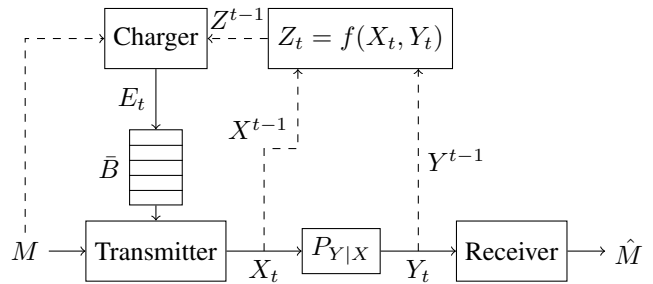
\begin{figure}
\centering
\begin{tikzpicture}
	\def \arlen {0.8cm};
	\def \blockwidth {1cm}
	\def \blockheight {0.8cm}

	\node[draw,rectangle,minimum width=\blockwidth,
		minimum height=\blockheight] 
		(Tx) at (0,0) {Transmitter};
		
	\node[draw,rectangle,right]
		(Channel) at ($(Tx.east)+(1,0)$) {$P_{Y|X}$};
	
	\node[draw,rectangle,right,minimum width=\blockwidth,
		minimum height=\blockheight]
		(Rx) at ($(Channel.east)+(1,0)$) {Receiver};
	
	\node[draw,rectangle,above,minimum width=0.7cm,minimum height=1cm] (Battery) at ($(Tx.north)+(0,0.2)$) {};
	\foreach \y in {0.2,0.4,0.6,0.8}
	{
		\draw[-] ($(Battery.south west)+(0,\y)$) -- 
			($(Battery.south east)+(0,\y)$);
	}
	\node[left] at ($(Battery.west)$) {$\bar{B}$};
	
	\node[draw,rectangle,above,minimum width=\blockwidth,
		minimum height=\blockheight]
		(Charger) at ($(Battery.north)+(0,\arlen)$)
		{Charger};
	
	\node[draw,rectangle,right,minimum height=\blockheight]
		(FeedbackSampler) at ($(Charger.east)+(\arlen,0)$)
		{$Z_t=f(X_t,Y_t)$};
	
	\node (M) at ($(Tx.west)-(\arlen,0)$) {$M$};
	
	\tikzstyle{every path}=[->]
	\draw (Battery) -- (Tx);
	\draw (Tx) -- node[below] {$X_t$} (Channel);
	\draw (Channel) -- node[below] {$Y_t$} (Rx);
	\draw (Charger) -- node[left] {$E_t$} (Battery);
	\draw (M) -- (Tx);
	\draw (Rx) -- node[right,pos=1] {$\hat{M}$} 
		($(Rx.east)+(0.5,0)$);
	
	\tikzstyle{every path}=[->,dashed];
	
	\draw (M) |- (Charger);
	
	\draw (FeedbackSampler) -- 
		node[above] {$Z^{t-1}$} (Charger);
	
	\coordinate (ChannelRx) at 
		($(Channel.east) !.5! (Rx.west)$);
		
	\coordinate (TxChannel) at 
		($(Tx.east) !.5! (Channel.west)$);
	
	\coordinate (FeedbackSamplerXConnection) at
		($(FeedbackSampler.south west)+(0.4,0)$);
	
	\coordinate (XLineMidpoint) at
		($ (FeedbackSampler.south) !.5! (Channel.north) $);	
	
	\draw (TxChannel) |- 
		(XLineMidpoint -| FeedbackSamplerXConnection)
		-- node[left,pos=0.25] {$X^{t-1}$}
		(FeedbackSamplerXConnection);
	
	\draw (ChannelRx) -- 
		node[right,pos=0.5] {$Y^{t-1}$}
		(ChannelRx |- FeedbackSampler.south);
	
\end{tikzpicture}
\caption{Model of energy harvesting communication with a charger and generalized feedback signal.}
\label{fig:channel_model_general}
\end{figure}

In order to prove Theorem~\ref{thm:capacity}, we first consider a slightly more general system model. See Fig.~\ref{fig:channel_model_general}.
At time $t$, the charger observes a \emph{feedback signal} $Z^{t-1}$.
The feedback signal is the output of some deterministic function 
\[f:\mathcal{X}\times\mathcal{Y}\to\mathcal{Z},\]
where $\mathcal{Z}$ is the alphabet of the feedback signal.
This function is part of the channel parameters, i.e. it is determined by nature.
By considering different functions, such as $f(x,y)=0$, $f(x,y)=x$, and $f(x,y)=y$, we can recover the scenarios discussed in Section~\ref{sec:system_model}.

The transmitter encoding functions and receiver decoding function remain the same as in \eqref{eq:channel_encoding} and \eqref{eq:decoding}.
For the charger encoding functions, we will consider two cases:
\begin{description}[\IEEEsetlabelwidth{Case II.}]
\item[Case I.] The charger does not observe the message $M$.
The encoding functions are
\begin{align}
f_t^{\mathrm{C}}&:\mathcal{Z}^{t-1}
	\to\mathcal{E}
	,&t=1,\ldots,n.\label{eq:charger_encoding_noM}
\end{align}
\item[Case II.] The charger observes the message $M$. The encoding functions are
\begin{align}
f_t^{\mathrm{C}}&:\mathcal{M}\times\mathcal{Z}^{t-1}\to\mathcal{E},&&
t=1,\ldots,n.
\label{eq:charger_encoding}
\end{align}
\end{description}

We use the notion of Shannon strategies~\cite{Shannon1958} in a similar manner to~\cite{MaoHassibi2013,ShavivNguyenOzgur2016}.
The channel can be converted into an equivalent channel with no side information at the transmitter or at the charger, but with a different input alphabet.
Define two types of strategies: encoder strategies $u^n$ and charger strategies $v^n$.
An encoder strategy at time $t$ is a mapping $u_t:\mathcal{E}^t\to\mathcal{X}$, and the appropriate alphabet for blocklength $n$ is
$\mathcal{U}^n=\{u^n|\ u_t:\mathcal{E}^t\to\mathcal{X},
\ t=1,\ldots,n\}$.
A charger strategy is $v_t:\mathcal{Z}^{t-1}\to\mathcal{E}$, with alphabet
$\mathcal{V}^n=\{v^n|\ v_t:\mathcal{Z}^{t-1}\to\mathcal{E},
\ t=1,\ldots,n\}$.
Note that $\mathcal{U}^n$ and $\mathcal{V}^n$ are not Cartesian products of $n$ copies of a single alphabet, but a set of $n$-tuples where each element is defined above.
At time $t$, given the transmitter's observations $E^t$, the input symbol $X_t=U_t(E^t)$ is transmitted over the original channel. 
Similarly, given its own observations $Z^{t-1}$, the charger sets $E_t=V_t(Z^{t-1})$.
It is easy to see that the capacity of this channel is equal to that of our original channel, as coding strategies for one can be immediately translated to the other.

The strategies must satisfy the constraints as in the original channel, namely $U^n$ must satisfy the energy constraints~\eqref{eq:EH_constraint} and~\eqref{eq:EH_battery}:
\begin{align}
\phi(U_t(E^t))&\leq \min\{B_t+E_t,\bar{B}\},\label{eq:EH_constraint_strategies}\\*
B_t&=\min\{B_{t-1}+E_{t-1},\bar{B}\}-\phi(U_{t-1}(E^{t-1})),
\label{eq:EH_battery_strategies}
\end{align}
and $V_t$ must satisfy the cost constraint~\eqref{eq:cost_constraint}:
\begin{equation}
\frac{1}{n}\sum_{t=1}^{n}V_t(Z^{t-1})\leq\Gamma.
\label{eq:cost_constraint_strategies}
\end{equation}
As in Section~\ref{sec:capacity}, this is equivalent to writing $(X^n,E^n)\in\mathcal{A}_n(\Gamma)$ a.s., where
$X^n=U^n(E^n)$, $E^n=V^n(Z^{n-1})$, and $\mathcal{A}_n(\Gamma)$ was defined in~\eqref{eq:A_def}.

We will derive a capacity formula for each of the two cases using the above strategies.
Then we will derive the capacity expressions in Theorem~\ref{thm:capacity} by considering different functions for the feedback signal.
\begin{theorem}\label{thm:capacity_strategies}
The capacity of the channel in Fig.~\ref{fig:channel_model_general} when the charger does not observe the message (Case I) is given by:
\begin{equation}\label{eq:capacity_noM}
C_{\mathrm{I}}(\Gamma)=\lim_{n\to\infty}\frac{1}{n}
\max_{\substack{p(u^n),\ v^n:\\ (X^n,E^n)\in\mathcal{A}_n(\Gamma)\text{ a.s.}}}
I(U^n;Y^n),
\end{equation}
where the maximum is over all distributions $p(u^n)$ and \emph{deterministic} strategies $v^n$ such that 
$(X^n,E^n)=(U^n(E^n),v^n(Z^{n-1}))\in\mathcal{A}_n(\Gamma)$ a.s.\footnote{Note that even though $v^n$ is fixed, the energy sequence $E^n$ can be random, since it is a function of the RV $Z^{n-1}$.}

The capacity of the same channel when the charger observes the message (Case II) is given by:
\begin{equation}\label{eq:capacity_M}
C_{\mathrm{II}}(\Gamma)=\lim_{n\to\infty}\frac{1}{n}
\max_{\substack{p(u^n,v^n):\\ (X^n,E^n)\in\mathcal{A}_n(\Gamma)\text{ a.s.}}}
I(U^n,V^n;Y^n).
\end{equation}
\end{theorem}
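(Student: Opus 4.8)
The plan is to prove achievability and a matching converse for each of the two expressions, working throughout with the Shannon-strategy channel introduced above, in which the super-symbol $U^n$ (Case~I), or the pair $(U^n,V^n)$ (Case~II), plays the role of the input to a channel defined over a block of $n$ uses. Denote by $c_n$ the value of the maximization inside the limit in \eqref{eq:capacity_noM} (resp.\ \eqref{eq:capacity_M}) at blocklength $n$; the goal is to show that the operational capacity equals $\lim_{n\to\infty}\tfrac1n c_n$.

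The converse is the routine direction. Given a length-$N$ code with vanishing error probability, note that the encoder strategy $U^N$ is a deterministic function of $M$, and in Case~II so is the charger strategy $V^N$; hence $M\to U^N\to Y^N$ (resp.\ $M\to(U^N,V^N)\to Y^N$) is a Markov chain. Fano's inequality together with the data-processing inequality gives $NR\le I(U^N;Y^N)+N\epsilon_N$ (resp.\ $NR\le I(U^N,V^N;Y^N)+N\epsilon_N$), with $\epsilon_N\to0$. Since any admissible code obeys the energy and cost constraints, the induced $(X^N,E^N)$ lies in $\mathcal{A}_N(\Gamma)$ a.s., so the mutual information on the right is at most $c_N$. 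Thus $R\le \tfrac1N c_N+\epsilon_N$, and letting $N\to\infty$ yields $R\le\liminf_{N\to\infty}\tfrac1N c_N$.

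For achievability I would code over $K$ super-blocks, each comprising $n$ active slots followed by $\ell$ silent slots. In the silent slots the transmitter sends the zero symbol and the charger applies a fixed $e_0\in\mathcal{E}$ with $e_0>0$ (which exists by the standing assumptions); choosing $\ell$ so that $\ell e_0\ge\bar B$ ensures the battery is recharged to $\bar B$ at the start of every active block, irrespective of the preceding block. With this full-battery reset, and letting each block's strategies use only the side information generated within that block, the map from the active-block strategy to $Y^n$ is a single fixed DMC-super-channel, identical and independent across super-blocks. Fixing $p(u^n),v^n$ (resp.\ $p(u^n,v^n)$) that attains $c_n$ within $\epsilon$, I would draw the super-codebook i.i.d.\ from this distribution and decode by joint typicality over super-symbols; the standard argument for a DMC with a finite super-alphabet then gives reliable decoding at any super-rate below $I(U^n;Y^n)$ (resp.\ $I(U^n,V^n;Y^n)$). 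A key simplification is that the energy budget is a hard, almost-sure constraint rather than an expected one: the choice of $v^n$ forces $\sum_{t=1}^n e_t\le n\Gamma$ for every realization of $Z^{n-1}$, so each super-block consumes at most $n\Gamma+\ell e_0$ units deterministically and the average cost is at most $\tfrac{n\Gamma+\ell e_0}{n+\ell}$. The per-channel-use rate achieved is therefore $\tfrac{c_n}{n+\ell}-\epsilon$ at average cost approaching $\Gamma$.

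Since $\ell$ depends only on $\bar B$ and $e_0$ and not on $n$, this holds for every $n$, so the capacity is at least $\sup_n\tfrac{c_n}{n+\ell}\ge\limsup_{n\to\infty}\tfrac1n c_n$ (invoking continuity of $C(\Gamma)$ in $\Gamma$ to absorb the $O(1/n)$ cost excess). Combined with the converse bound $R\le\liminf_n\tfrac1n c_n$, the $\limsup$ and $\liminf$ coincide, which simultaneously shows that $\lim_n\tfrac1n c_n$ exists and equals the capacity in both cases. The main obstacle I anticipate is the achievability bookkeeping rather than any single hard estimate: one must verify that the silent-time padding genuinely renders the super-channel memoryless through the battery reset, while keeping both the rate loss $\ell/(n+\ell)$ and the excess cost $\ell e_0/(n+\ell)$ vanishing as $n\to\infty$, and one must accommodate the fact that $\mathcal{E}$ need not contain a convenient recharge level --- handled by padding with enough silent slots at the guaranteed symbol $e_0>0$. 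Case~II differs from Case~I only in that $V^n$ is promoted from a fixed part of the channel to part of the jointly designed input, so the same construction and bounds apply verbatim with $(U^n,V^n)$ in place of $U^n$.
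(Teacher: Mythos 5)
Your proposal is correct and follows essentially the same route as the paper: the converse via Fano and data processing applied to the strategy variables, and achievability via concatenating $k$ super-blocks of $n$ active slots padded with $\ell=\lceil\bar B/e_0\rceil$ silent recharge slots at a fixed $e_0>0$, which resets the battery and renders the super-channel memoryless, with the excess cost $\ell e_0/(n+\ell)$ absorbed by continuity of $C(\Gamma)$. The only cosmetic difference is that the paper spells out the per-block strategy-vector notation and the deterministic cost bookkeeping in more detail, but the ideas and estimates coincide.
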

\begin{proof}
We begin with the proof of achievability for Case I, where the charger does not observe the message.
The proof follows similar ideas as in~\cite{ShavivNguyenOzgur2016}. In fact, it more closely resembles the preceding~\cite{JogAnantharam2014}.

Recall the model in Fig.~\ref{fig:channel_model_general}.
We construct an achievable scheme composed of $k$ blocks with each block containing a transmitter strategy vector of length $n$ which is an element of $\mathcal{U}^n$, and a single charger strategy vector $v^n$, which will be used for all blocks.
As such, each codeword is a function of only the past $n$ energy arrivals and feedback samples, which means we ignore information regarding the previous blocks.
These codewords are designed to satisfy the energy constraints for initial battery level $B_1=\bar{B}$, in accordance with the assumption of Section~\ref{sec:system_model}. For this reason, we must ensure that the battery is completely full in the beginning of each block.
To achieve this, we allow the battery to ``recharge'' after we transmit each codeword by waiting a sufficient amount of time ($\ell$ time slots), during which the transmitter remains silent, and the charger transmits some fixed charging sequence.
By choosing $\ell$ large enough and an appropriate charging sequence, we can ensure the battery will be completely recharged at the beginning of the next block.

Fix integers $n$ and $\ell$, and fix a deterministic strategy vector $\mathbf{v}=v^n\in\mathcal{V}^n$ and a distribution $p(u^n)$ such that $(X^n,E^n)=(U^n(E^n),v^n(Z^{n-1}))\in\mathcal{A}_n(\Gamma)$ a.s.
For each message $m$, generate $k$ random strategy vectors independently $\mathbf{u}_i\sim p(u^n)$, $i=1,\ldots,k$. Recall that each $\mathbf{u}_i$ is a function on $\mathcal{E}^n$, and $\mathbf{v}$ is a function on $\mathcal{Z}^{n-1}$. The chosen message $m$ will be transmitted over $k$ blocks, each of size $n+\ell$, for a total transmit time of $k(n+\ell)$.
Hence, we will define codewords $u^{k(n+\ell)}\in\mathcal{U}^{k(n+\ell)}$ and $v^{k(n+\ell)}\in\mathcal{V}^{k(n+\ell)}$ using the above $\mathbf{u}_i$ and $\mathbf{v}$.

Fix $e_0\in\mathcal{E}$, $e_0>0$. By assumption, at least one such $e_0$ must exist (see Section~\ref{sec:system_model}).
This energy symbol will be used repeatedly by the charger to recharge the battery at the end of each block.
The scheme operates as follows:
consider block $i$, $1\leq i\leq k$, which takes place during times $t=(i-1)(n+\ell)+1$ to $t=i(n+\ell)$.
For the first part of the block, which consists of the first $n$ symbols, the transmitter sends the codeword $\mathbf{u}_i$, i.e. $u_{(i-1)(n+\ell)+1}^{(i-1)(n+\ell)+n}(e^{(i-1)(n+\ell)+n})=\mathbf{u}_i(\mathbf{e}_i)$, where $\mathbf{e}_i\triangleq e_{(i-1)(n+\ell)+1}^{(i-1)(n+\ell)+n}$ is the energy sequence during the first part of the block.
The charger uses the single fixed strategy vector $\mathbf{v}$, i.e. $v_{(i-1)(n+\ell)+1}^{(i-1)(n+\ell)+n}(z^{(i-1)(n+\ell)+n-1})=\mathbf{v}(\mathbf{z}_i)$, where $\mathbf{z}_i\triangleq z_{(i-1)(n+\ell)+1}^{(i-1)(n+\ell)+n-1}$ is the feedback signal observed by the charger during the first part of the block.
For the second part of the block, which consists of the remaining $\ell$ symbols, the transmitter remains silent, i.e. it simply sends zeros (recall the definition of the zero symbol in Section~\ref{sec:system_model}).
The charger sends the energy symbol $e_0$ for the duration of the second part.
To summarize, the transmitted block is
\begin{align}
	u_{(i-1)(n+\ell)+1}^{i(n+\ell)}\big(e^{i(n+\ell)}\big)
	&=\big[\mathbf{u}_i(\mathbf{e}_i),\ \mathbf{0}\big],
	\label{eq:u_block_def}\\*
	v_{(i-1)(n+\ell)+1}^{i(n+\ell)}\big(z^{i(n+\ell)-1}\big)
	&=\big[\mathbf{v}(\mathbf{z}_i),\ e_0\cdot\mathbf{1}\big],
	\label{eq:v_block_def}
\end{align}
where $\mathbf{0}$ is a vector of $\ell$ zeros and $\mathbf{1}$ is a vector of $\ell$ ones.

Observe that $u^{k(n+\ell)}(e^{k(n+\ell)})$ defined in~\eqref{eq:u_block_def} and $v^{k(n+\ell)}(z^{k(n+\ell)-1})$ defined in~\eqref{eq:v_block_def} are well-defined elements in $\mathcal{U}^{k(n+\ell)}$ and $\mathcal{V}^{k(n+\ell)}$, respectively.
Moreover, by choosing $\ell=\lceil\tfrac{\bar{B}}{e_0}\rceil$, we can ensure that the battery level at the beginning of each block will always be $\bar{B}$, even if the battery is empty at the end of the first part of the previous block. This implies that the energy constraints \eqref{eq:EH_constraint_strategies} and \eqref{eq:EH_battery_strategies} are satisfied.
Next, the average energy cost can be upper bounded as follows:
\begin{align*}
\hspace{2em}\lefteqn{\hspace{-2em}\frac{1}{k(n+\ell)}\sum_{t=1}^{k(n+\ell)}V_t(Z^{t-1})}\nonumber\\*
&=\frac{1}{k}\sum_{i=1}^{k}\frac{1}{n+\ell}\left(\sum_{j=1}^{n}v_j\big(Z_{(i-1)(n+\ell)+1}^{(i-1)(n+\ell)+j}\big)+\ell \cdot e_0\right)\\
&\overset{\text{(i)}}{\leq} \frac{1}{k}\sum_{i=1}^{k}\frac{1}{n+\ell}\left(n\Gamma+\ell e_0\right)\\
&=\Gamma+\frac{\ell}{n+\ell}(e_0-\Gamma)\\
&\overset{\text{(ii)}}{\leq}\Gamma+\epsilon,
\end{align*}
where (i) is by construction of the strategy $\mathbf{v}$, and (ii) is true for any $\epsilon>0$ if $n$ is large enough (recall $\ell=\lceil\bar{B}/e_0\rceil$ is fixed).
Hence, the average cost is less than $\Gamma+\epsilon$.
By continuity of $C_I(\Gamma)$, this implies~\eqref{eq:cost_constraint_strategies} is satisfied.

Denote the channel output during the first part of block $i$ by $\mathbf{y}_i=y_{(i-1)(n+\ell)+1}^{(i-1)(n+\ell)+n}$.
The receiver observes $y^{k(n+\ell)}$ but makes use only of $\mathbf{y}^k=(\mathbf{y}_1,\ldots,\mathbf{y}_k)$ for decoding, by applying standard jointly typical decoding with~$\mathbf{u}^k$.
By construction, $\mathbf{y}^k$ is essentially $k$ samples of the output of a  memoryless vector channel from $U^n$ to $Y^n$.
Taking $k\to\infty$, we get by standard joint typicality arguments that the rate $\frac{1}{n+\ell}I(\mathbf{U};\mathbf{Y})$ is achievable.
Therefore, for $n$ large enough, we have
\[
C_{\mathrm{I}}(\Gamma)\geq\frac{1}{n+\ell}I(U^n;Y^n).
\]
Since $p(u^n)$ and $v^n$ were arbitrary, we can maximize over them and take $n\to\infty$, yielding:\footnote{It can be argued that the maximum exists: the cardinality of the set $\mathcal{V}^n$ is finite, and for every $v^n\in\mathcal{V}^n$ the set of $p(u^n)$ that satisfy the constraint is convex, and mutual information is continuous in $p(u^n)$.}
\begin{equation}
C_{\mathrm{I}}(\Gamma)\geq\limsup_{n\to\infty}\frac{1}{n}
\max_{\substack{p(u^n),\ v^n:\\ (X^n,E^n)\in\mathcal{A}_n(\Gamma)\text{ a.s.}}}I(U^n;Y^n).
\label{eq:CaseI_achievability}
\end{equation}

For the converse part, consider an $(R,n)$ code for the channel.
This consists of a set of functions $f_t^{\mathrm{C}}:\mathcal{Z}^{t-1}\to\mathcal{E}$, $t=1,\ldots,n$, as in~\eqref{eq:charger_encoding_noM}, such that the average cost constraint is satisfied.
These functions constitute one fixed strategy vector $v^n$;
namely $v_t=f_t^{\mathcal{C}}$ for $t=1,\ldots,n$. 
Moreover, the code consists of a set of strategy vectors $u^n(m)$, $m\in\mathcal{M}$, where each strategy vector is again a set of functions $u_t:\mathcal{E}^t\to\mathcal{X}$.
By Fano's inequality, we have:
\[
	H(M|Y^n)\leq H_2(P_e)+P_e\cdot nR.
\]
By the data processing inequality:
\begin{align*}
R&\leq \frac{1}{(1-P_e)n}[I(M;Y^n)+H_2(P_e)]\\
&\leq\frac{1}{(1-P_e)n}[I(U^n;Y^n)+H_2(P_e)],
\end{align*}
where $I(U^n;Y^n)$ is the mutual information evaluated for $p(u^n)$ induced by the code.
Since all codewords must satisfy the input constraints~\eqref{eq:EH_constraint_strategies}--\eqref{eq:cost_constraint_strategies}, this implies $(X^n,E^n)\in\mathcal{A}_n(\Gamma)$ a.s.
Therefore
\[
	R\leq\frac{1}{(1-P_e)n}
	\left[\max_{\substack{p(u^n),\ v^n:\\ (X^n,E^n)\in\mathcal{A}_n(\Gamma)\text{ a.s.}}}
	I(U^n;Y^n)+H_2(P_e)\right].
\]
If $R$ is achievable, there exists a sequence of $(R,n)$ codes for which $P_e\to0$ as $n\to\infty$, hence
\begin{equation*}
	R\leq\liminf_{n\to\infty}\frac{1}{n}
	\max_{\substack{p(u^n),\ v^n:\\ (X^n,E^n)\in\mathcal{A}_n(\Gamma)\text{ a.s.}}}
	I(U^n;Y^n),
\end{equation*}
which implies
\begin{equation}
	C_{\mathrm{I}}(\Gamma)\leq\liminf_{n\to\infty}\frac{1}{n}
	\max_{\substack{p(u^n),\ v^n:\\ (X^n,E^n)\in\mathcal{A}_n(\Gamma)\text{ a.s.}}}
	I(U^n;Y^n).
	\label{eq:CaseI_converse}
\end{equation}
Together with~\eqref{eq:CaseI_achievability}, this implies that the limit exists and is given by~\eqref{eq:capacity_noM}.

In Case II, where the charger observes the message, the transmitter and the charger can agree on a joint codebook.
A codebook is a list of $2^{nR}$ pairs of strategy vectors $(u^n,v^n)$, where each message $m$ is assigned a pair $(u^n(m),v^n(m))$.
Achievability follows by similar arguments as before.
In this case, we generate random strategies according to some $p(u^n,v^n)$ such that $(X^n,E^n)\in\mathcal{A}_n(\Gamma)$.
We transmit $k$ blocks of size $n+\ell$, where $\ell=\lceil\frac{\bar{B}}{e_0}\rceil$ as before, for some fixed $e_0\in\mathcal{E}$, $e_0>0$.
The first part of the block consists of the $n$ symbols of the random codeword, i.e. the transmitter outputs $u^n$ and the charger outputs $v^n$.
During the second part, the transmitter outputs zeros and the charger outputs $e_0>0$.
As before, we can write the codewords $u^{k(n+\ell)}$ and $v^{k(n+l)}$ as follows:
\begin{align*}
	u_{(i-1)(n+\ell)+1}^{i(n+\ell)}\big(e^{i(n+\ell)}\big)
	&=\big[\mathbf{u}_i(\mathbf{e}_i),\ \mathbf{0}\big],\\
	v_{(i-1)(n+\ell)+1}^{i(n+\ell)}\big(z^{i(n+\ell)-1}\big)
	&=\big[\mathbf{v}_i(\mathbf{z}_i),\ e_0\cdot\mathbf{1}\big].
\end{align*}
The battery will be full at the beginning of each block, and the energy constraints are satisfied.
The average energy cost is again upper bounded by $\Gamma+\epsilon$ for any small $\epsilon$ if $n$ is large enough.
The receiver observes $\mathbf{y}^k$ and applies standard joint typical decoding with $(\mathbf{u}^k,\mathbf{v}^k)$.
When the number of blocks $k\to\infty$, we can achieve the rate $\frac{1}{n+\ell}I(U^n,V^n;Y^n)$.
Then:
\[
C_{\mathrm{II}}(\Gamma)\geq\limsup_{n\to\infty}\frac{1}{n}
\max_{\substack{p(u^n,v^n):\\ (X^n,E^n)\in\mathcal{A}_n(\Gamma)\text{ a.s.}}}
I(U^n,V^n;Y^n).
\]
The converse follows from Fano's inequality similarly to the previous case.
We conclude that~\eqref{eq:capacity_M} holds.
\end{proof}


In what follows, we apply Theorem~\ref{thm:capacity_strategies} to derive the results of Theorem~\ref{thm:capacity}.

\begin{proof}[Proof of Theorem~\ref{thm:capacity}]

\ 
\subsubsection{Generic Charger}

When the charger does not observe any side information, we consider the model in Case I and apply Theorem~\ref{thm:capacity_strategies} with feedback signal $Z_t=0$.
Since the strategy $v^n$ is essentially a function over an empty set, it can be replaced with a fixed sequence $e^n$, hence \eqref{eq:capacity_noM} reduces to:
\[
C_{\emptyset}(\Gamma)
=\lim_{n\to\infty}\frac{1}{n}
\max_{\substack{p(u^n),\ e^n:\\ (X^n,e^n)\in\mathcal{A}_n(\Gamma)\text{ a.s.}}}
I(U^n;Y^n).
\]
Since $e^n$ is deterministic, we have:
\begin{align*}
I(U^n;Y^n)
&=I(U^n,X^n;Y^n)\\
&=I(X^n;Y^n),
\end{align*}
where the second line is due to the Markov chain $U^n-X^n-Y^n$.
Observe that for fixed $e^n$, $p(u^n)$ induces a distribution $p(x^n)$, and since the objective does not depend on $U^n$, this yields \eqref{eq:C0}.

\subsubsection{Receiver Charges Transmitter}

Here the charger observes $Y^{t-1}$.
This scenario is realized by considering Case I, where the charger does not observe the message, and setting the feedback signal to be $Y_t$, i.e. $Z_t=f(X_t,Y_t)=Y_t$.
Theorem~\ref{thm:capacity_strategies} gives:
\[
C_Y(\Gamma)=\lim_{n\to\infty}\frac{1}{n}
\max_{\substack{p(u^n),\ v^n:\\ (X^n,E^n)\in\mathcal{A}_n(\Gamma)\text{ a.s.}}}
I(U^n;Y^n),
\]
where the deterministic charger strategies are $v_t:\mathcal{Y}^{t-1}\to\mathcal{E}$, and the encoder strategies are $U_t:\mathcal{E}^t\to\mathcal{X}$, for $t=1,\ldots,n$.
For fixed $v^n$ and $p(u^n)$:
\begin{align*}
I(U^n;Y^n)
&=\sum_{t=1}^{n}I(U^n;Y_t|Y^{t-1})\\
&\overset{\text{(i)}}{=}\sum_{t=1}^{n}I(U^n;Y_t|Y^{t-1},E^t)\\
&\overset{\text{(ii)}}{=}\sum_{t=1}^{n}I(U^n,X^t;Y_t|Y^{t-1},E^t)\\
&\overset{\text{(iii)}}{=}\sum_{t=1}^nI(X^t;Y_t|Y^{t-1},E^t)\\
&\overset{\text{(iv)}}{=}\sum_{t=1}^{n}I(X^t;Y_t|Y^{t-1})\\
&=I(X^n\to Y^n),
\end{align*}
where (i) is since $E^t=v^t(Y^{t-1})$, a deterministic function of $Y^{t-1}$;
(ii) is because $X^t=U^t(E^t)$;
(iii) is due to the Markov chain 
$Y_t-X_t-(Y^{t-1},E^t,X^{t-1},U^n)$
which implies the Markov chain $Y_t-(X^t,E^t,Y^{t-1})-U^n$;
and (iv) is again because $E^t$ is a deterministic function of $Y^{t-1}$.
A distribution $p(u^n)$ induces a causally conditioned distribution $p(x^n\|e^n)$,
and since the objective does not depend on $U^n$, we can optimize over $p(x^n\|e^n)$ to yield~\eqref{eq:CY}.

\subsubsection{Charger Adjacent to Transmitter}

When the charger observes the input $X^{t-1}$,
we set the feedback signal to $Z_t=f(X_t,Y_t)=X_t$
and apply Theorem~\ref{thm:capacity_strategies} for Case I:
\[
C_X(\Gamma)=\lim_{n\to\infty}\frac{1}{n}
\max_{\substack{p(u^n),\ v^n:\\ (X^n,E^n)\in\mathcal{A}_n(\Gamma)\text{ a.s.}}}
I(U^n;Y^n),
\]
where the encoder strategies are $U_t:\mathcal{E}^t\to\mathcal{X}$ and the charger strategies are $v_t:\mathcal{X}^{t-1}\to\mathcal{E}$.
For any $p(u^n)$ and $v^n$ we have
\begin{align*}
I(U^n;Y^n)
&\overset{\text{(i)}}{=}I(U^n,E_1;Y^n)\\
&\overset{\text{(ii)}}{=}I(U^n,E_1,X_1;Y^n)\\
&\overset{\text{(iii)}}{=}I(U^n,E^2,X_1;Y^n)\\
&=\ldots\\
&\overset{\text{(iv)}}{=}I(U^n,E^n,X^n;Y^n)\\
&\overset{\text{(v)}}{=}I(X^n;Y^n),
\end{align*}
where (i) is because $v_1$ is a strategy over an empty set, therefore $E_1=v_1$ is fixed;
(ii) is because $X_1=U_1(E_1)$;
(iii) is because $E_2=v_2(X_1)$;
and so forth.
In general, $E_t=v_t(X^{t-1})$ and $X_t=U_t(E^t)$ for $t=1,\ldots,n$, giving equality~(iv).
Finally, (v) is due to the Markov chain $(U^n,E^n)-X^n-Y^n$.

Observe that $p(u^n)$ induces a causally conditioned distribution on $p(x^n\|e^n)$.
Moreover, since $E^t$ is a deterministic function of $X^{t-1}$, we have the Markov chain $X_t-X^{t-1}-E^t$, which implies $p(x^n\|e^n)=p(x^n)$.
Since the objective depends only on $p(x^n)$, capacity can be written as~\eqref{eq:CX}.

\subsubsection{Fully Cognitive Charger}

When the charger observes the message, we consider Case II and set the feedback signal to zero: $Z_t=0$.
We apply Theorem~\ref{thm:capacity_strategies}. As in the derivation of $C_\emptyset(\Gamma)$, the strategies $V^n$ are over an empty set, therefore we can code directly over $E^n$:
\[
C_M(\Gamma)=\lim_{n\to\infty}\frac{1}{n}
\max_{\substack{p(u^n,e^n):\\ (X^n,E^n)\in\mathcal{A}_n(\Gamma)\text{ a.s.}}}
I(U^n,E^n;Y^n).
\]
We have:
\begin{align*}
I(U^n,E^n;Y^n)
&\overset{\text{(i)}}{=}I(U^n,E^n,X^n;Y^n)\\*
&\overset{\text{(ii)}}{=}I(X^n;Y^n),
\end{align*}
where (i) is because $X^n=U^n(E^n)$ and (ii) is due to the Markov chain $(U^n,E^n)-X^n-Y^n$.
Since the objective does not depend on $U^n$, we can equivalently optimize over all distributions $p(x^n,e^n)$, which gives~\eqref{eq:CM}.
\end{proof}

\section{Proof of Proposition~\ref{prop:average_power_upper_bound}}
\label{sec:average_power_upper_bound_proof}

First, observe that by standard arguments, the function $\Cub(\Gamma)$ is concave and non-decreasing in $\Gamma$ (see e.g.~\cite[Lemma 10.4.1]{CoverThomas2006}).
Fix an $(R,n)$ code for the channel.
From the energy constraints~\eqref{eq:EH_constraint} and~\eqref{eq:EH_battery}:
\begin{align}
\phi(X_t)&\leq B_t+E_t,\label{eq:X_power_constraint}\\
\phi(X_{t-1})+B_t&\leq B_{t-1}+E_{t-1}.\label{eq:X_power_battery}
\end{align}
Summing up~\eqref{eq:X_power_constraint} for $t=n$ with~\eqref{eq:X_power_battery} for $t=2,\ldots,n$ yields:
\[
\sum_{t=1}^{n}\phi(X_t)+\sum_{t=2}^{n}B_t
\leq \sum_{t=1}^{n}B_t+\sum_{t=1}^{n}E_t,
\]
which, along with the energy cost constraint~\eqref{eq:cost_constraint}, gives
\begin{align}
\frac{1}{n}\sum_{t=1}^{n}\phi(X_t)
&\leq\frac{1}{n}B_1+\frac{1}{n}\sum_{t=1}^{n}E_t\nonumber\\
&\leq\frac{1}{n}\bar{B}+\Gamma.\label{eq:average_power_constraint}
\end{align}

By Fano's inequality, $H(M|Y^n)\leq n\epsilon_n$, where $\epsilon_n\to0$ as the probability or error $P_e\to0$.
We have:
\begin{align*}
nR=H(M)
&\leq I(M;Y^n)+n\epsilon_n\\
&\leq \sum_{t=1}^{n}I(M,E^t;Y_t|Y^{t-1})+n\epsilon_n\\
&\overset{\text{(i)}}{\leq} 
	\sum_{t=1}^{n}I(X_t;Y_t|Y^{t-1})+n\epsilon_n\\
&\overset{\text{(ii)}}{\leq} 
	\sum_{t=1}^{n}I(X_t;Y_t)+n\epsilon_n\\
&\overset{\text{(iii)}}{\leq} \sum_{t=1}^{n}\Cub(\mathbb{E}[\phi(X_t)])+n\epsilon_n\\
&\overset{\text{(iv)}}{\leq} n\Cub\left(\frac{1}{n}\sum_{t=1}^{n}\mathbb{E}[\phi(X_t)]\right)
	+n\epsilon_n\\
&\overset{\text{(v)}}{\leq} n\Cub\left(\Gamma+\tfrac{1}{n}\bar{B}\right)+n\epsilon_n,
\end{align*}
where (i) is due to~\eqref{eq:channel_encoding} and the data processing inequality;
(ii) is because the channel is memoryless;
(iii) is by the definition of $\Cub(\Gamma)$;
(iv) is by concavity of $\Cub(\Gamma)$;
and (v) is by~\eqref{eq:average_power_constraint} and because $\Cub(\Gamma)$ is non-decreasing.
Finally, since concavity implies continuity, taking $n\to\infty$ gives $R\leq\Cub(\Gamma)$.\qed

\section{Proof of Lemma~\ref{lemma:relaxed_power_constraint}}
\label{sec:expected_average_power_constraint}

First, since $\sum_{t=1}^{n}E_t\leq n\Gamma$ a.s. implies $\sum_{t=1}^{n}\mathbb{E}[E_t]\leq n\Gamma$, from~\eqref{eq:CX} and~\eqref{eq:CM} we clearly have
\begin{align*}
C_X(\Gamma)&\leq\liminf_{n\to\infty}\frac{1}{n}
	\max_{\substack{p(x^n),\ \{e_t(x^{t-1})\}_{t=1}^{n}:\\
		(X^n,e^n(X^{n-1}))\in\mathcal{A}'_n\text{ a.s.}\\
	\sum_{t=1}^{n}\mathbb{E}[E_t]\leq n\Gamma}}
	H(X^n),\\
C_M(\Gamma)&\leq\liminf_{n\to\infty}\frac{1}{n}
	\max_{\substack{p(x^n,e^n):\\ 
	(X^n,E^n)\in\mathcal{A}'_n\text{ a.s.}\\
	\sum_{t=1}^{n}\mathbb{E}[E_t]\leq n\Gamma}}
	H(X^n).
\end{align*}
To show the other direction, we propose an achievable scheme, which is similar to the one in Appendix~\ref{sec:strategies_capacity_and_special_cases}.
Fix $n$ and $\ell$, and generate i.i.d. codewords of length $n$ from the appropriate distribution in
the above capacity expressions.
Although these codewords clearly do not satisfy the a.s. energy constraint imposed by our system, we will use them to construct a new codebook that does satisfy it.
As in Appendix~\ref{sec:strategies_capacity_and_special_cases}, concatenate $k$ such codewords along with recharge times of length $\ell$, in which the transmitter remains silent and the charger sends a fixed positive energy symbol $e_0\in\mathcal{E}$.
Denote $N=k(n+\ell)$.

If the channel did not have a total energy cost constraint $\sum_{t=1}^{N}E_t\leq N\Gamma$, clearly this strategy would be admissible. By applying joint typicality decoding at the receiver, it would yield a vanishing probability of error if $R\geq\frac{1}{n+\ell}H(X^n)$.
To be precise, if we denote the error event of this code by $\mathcal{E}'$, we have $\Pr(\mathcal{E}')\leq\varepsilon_k$, where $\varepsilon_k\to0$ when $k\to\infty$.
This follows from an analysis similar to the one in Appendix~\ref{sec:strategies_capacity_and_special_cases}.

However, indeed the charging sequence $e^{N}$ may not satisfy the cost constraint $\sum_{t=1}^{N}E_t\leq N\Gamma$. To rectify this, first observe that for $C_X(\Gamma)$ and $C_M(\Gamma)$, the transmitter knows the entire charging sequence $e^{N}$ ahead of time, based on the message $m$.
It will then transmit $x^{N}(m)$ if $\sum_{t=1}^{N}e_t(m)\leq N(\Gamma+\epsilon)$ for some fixed $\epsilon>0$, and the all-zeros codeword otherwise. 
The decoder performs joint typicality decoding as in Appendix~\ref{sec:strategies_capacity_and_special_cases}.

To analyze the probability of error, denote the event of decoding error at the receiver by $\mathcal{E}$, and let
\[
\mathcal{E}_0=\big\{\sum_{t=1}^{N}E_t>N(\Gamma+\epsilon)\big\}.
\]
The probability of error can be upper bounded as follows:
\begin{align*}
\Pr(\mathcal{E})
&=\Pr(\mathcal{E}\cap\mathcal{E}_0^c)
	+\Pr(\mathcal{E}\cap\mathcal{E}_0)\\
&\leq\Pr(\mathcal{E}\cap\mathcal{E}_0^c)
	+\Pr(\mathcal{E}_0)\\
&\overset{\text{(i)}}{=}\Pr(\mathcal{E}'\cap\mathcal{E}_0^c)
	+\Pr(\mathcal{E}_0)\\
&\leq\Pr(\mathcal{E}')+\Pr(\mathcal{E}_0)\\
&\leq\varepsilon_k+\Pr(\mathcal{E}_0),
\end{align*}
where (i) follows from the fact that when the energy cost constraint is satisfied, the original codeword is transmitted as if there was no cost constraint.
For the second term, denote the average energy of the first $n$ samples of block $i$ by $\Gamma_i$:
\[
\Gamma_i=\frac{1}{n}\sum_{t=(i-1)(n+\ell)+1}^{(i-1)(n+\ell)+n}E_t.
\]
Then
\begin{align*}
\Pr(\mathcal{E}_0)&=\Pr\Big\{\sum_{i=1}^{k}(n\Gamma_i+\ell e_0)>k(n+\ell)(\Gamma+\epsilon)\Big\}\\
&=\Pr\Big\{\tfrac{1}{k}\sum_{i=1}^{k}\Gamma_i>\tfrac{n+\ell}{n}(\Gamma+\epsilon)-\tfrac{\ell e_0}{n}\Big\}
\end{align*}
Since $\ell=\lceil\bar{B}/e_0\rceil$ is fixed, for $n$ large enough we can write
\[
\Pr(\mathcal{E}_0)=\Pr\Big\{\tfrac{1}{k}\sum_{i=1}^{k}\Gamma_i>\Gamma+\epsilon'\Big\}
\]
for some small $\epsilon'>0$.
Since by construction the $\Gamma_i$'s are i.i.d. with mean $\Gamma$, we have by the law of large numbers that $\Pr(\mathcal{E}_0)\to0$ when $k\to\infty$.

Taking $n\to\infty$, we get
\begin{align*}
C_X(\Gamma)&\geq\limsup_{n\to\infty}\frac{1}{n}
	\max_{\substack{p(x^n),\ \{e_t(x^{t-1})\}_{t=1}^{n}:\\
		(X^n,e^n(X^{n-1}))\in\mathcal{A}'_n\text{ a.s.}\\
	\sum_{t=1}^{n}\mathbb{E}[E_t]\leq n\Gamma}}
	H(X^n),\\
C_M(\Gamma)&\geq\limsup_{n\to\infty}\frac{1}{n}
	\max_{\substack{p(x^n,e^n):\\ 
	(X^n,E^n)\in\mathcal{A}'_n\text{ a.s.}\\
	\sum_{t=1}^{n}\mathbb{E}[E_t]\leq n\Gamma}}
	H(X^n),
\end{align*}
which concludes the proof.\qed

\section{Proof of Lemma~\ref{lemma:Lagrange_multipliers}}
\label{sec:Lagrange_multipliers_proof}

To simplify the exposition,
observe that each one of the capacity expressions \eqref{eq:C0_relaxed}--\eqref{eq:CM_relaxed} can be expressed as 
\begin{equation}
\label{eq:CD}
C_{\mathcal{D}}(\Gamma)=\lim_{n\to\infty}\frac{1}{n}\max_{\substack{p(x^n,e^n)\in\mathcal{D}_n:\\
(X^n,E^n)\in\mathcal{A}'_n\text{ a.s.}\\
\sum_{t=1}^{n}\mathbb{E}[E_t]\leq n\Gamma}}H(X^n),
\end{equation}
where the sets $\mathcal{D}_n$ are:
\[
\mathcal{D}_n=\{p(x^n,e^n)\ |\ p(x^n,e^n)=1(e^n)p(x^n)\}
\]
for $C_\emptyset(\Gamma)$;
\[
\mathcal{D}_n=\{p(x^n,e^n)\ |\ p(x^n,e^n)=\prod_{t=1}^{n}p(x_t|x^{t-1})1(e_t|x^{t-1})\}
\]
for $C_X(\Gamma)$;
and for $C_M(\Gamma)$ we take $\mathcal{D}_n$ to be simply the set of all probability distributions over $\mathcal{X}^n\times\mathcal{E}^n$.

Consider the following optimization problem:
\begin{equation}
J_{\mathcal{D}}(\rho)=\sup\liminf_{n\to\infty}\frac{1}{n}
\big(H(X^n)-\rho\sum_{t=1}^{n}\mathbb{E}[E_t]\big),
\label{eq:Jrho_def}
\end{equation}
where the supremum is over all probability distributions $\{p(x_t,e_t|x^{t-1},e^{t-1})\}_{t=1}^{\infty}\in\mathcal{D}_\infty$ s.t. $(\mathbf{X},\mathbf{E})\in\mathcal{A}'_\infty\text{ a.s.}$
We will show that if the process $(\mathbf{X}^\star,\mathbf{E}^\star)$ approaches the supremum in~\eqref{eq:Jrho_def} up to $\epsilon>0$ with
\begin{align*}
\underline{\Gamma}&=\liminf_{n\to\infty}\frac{1}{n}\sum_{t=1}^{n}\mathbb{E}[E_t^\star],\\
\overline{\Gamma}&=\limsup_{n\to\infty}\frac{1}{n}\sum_{t=1}^{n}\mathbb{E}[E_t^\star],
\end{align*}
and $\Gamma^\alpha=\alpha\underline{\Gamma}+(1-\alpha)\overline{\Gamma}$,
then for any $0\leq\alpha\leq 1$, the capacity in~\eqref{eq:CD} is bounded by
\begin{equation}
J_{\mathcal{D}}(\rho)+\rho\Gamma^\alpha-\epsilon\leq
C_{\mathcal{D}}(\Gamma^\alpha)\leq J_{\mathcal{D}}(\rho)+\rho\Gamma^\alpha.
\label{eq:CD_bounds}
\end{equation}

We start with the upper bound.
Fix $\Gamma\geq0$,
and let $(\tilde{X}^n,\tilde{E}^n)\sim\tilde{p}(x^n,e^n)$ attain the maximum in \eqref{eq:CD}.
We construct a process $(\mathbf{X},\mathbf{E})$ in a similar manner to the construction in Appendix~\ref{sec:strategies_capacity_and_special_cases}:
concatenate i.i.d. copies of $(\tilde{X}^n,\tilde{E}^n)$, where each $n$-block is followed by $\ell$ time slots in which $X=0$ and $E=e_0$, for some positive $e_0\in\mathcal{E}$.
More precisely, for $t=(i-1)(n+\ell)+j$, $i\geq1$, $1\leq j\leq n+\ell$, set
\[
p(x_t,e_t|x^{t-1},e^{t-1})=\tilde{p}(x_t,e_t|x_{(i-1)(n+\ell)+1}^{t-1},e_{(i-1)(n+\ell)+1}^{t-1})
\]
if $1\leq j\leq n$, and
\[
p(x_t,e_t|x^{t-1},e^{t-1})=1\{x_t=0\}\cdot 1\{e_t=e_0\}
\]
if $n+1\leq j\leq n+\ell$.
By choosing $\ell=\lceil\bar{B}/e_0\rceil$, we guarantee $(\mathbf{X},\mathbf{E})\in\mathcal{A}'_{\infty}$.
Under this distribution, we have:
\begin{align}
&\lim_{k\to\infty}\frac{1}{k}\big(H(X^k)-\rho\sum_{t=1}^{k}\mathbb{E}[E_t]\big)\nonumber\\*
&\qquad=\frac{1}{n+\ell}\big(H(\tilde{X}^n)-\rho\sum_{t=1}^{n}\mathbb{E}[\tilde{E}_t]-\rho\ell e_0\big).
\label{eq:Jrho_suboptimal_pmf}
\end{align}
Define
\[
C_{\mathcal{D}}^{(n)}(\Gamma)\triangleq \frac{1}{n}\max_{\substack{p(x^n,e^n)\in\mathcal{D}_n:\\
(X^n,E^n)\in\mathcal{A}'_n\text{ a.s.}\\
\sum_{t=1}^{n}\mathbb{E}[E_t]\leq n\Gamma}}H(X^n).
\]
Then $H(\tilde{X}^n)=n\cdot C_{\mathcal{D}}^{(n)}(\Gamma)$, and we have
\begin{align*}
&\frac{1}{n+\ell}\big(H(\tilde{X}^n)-\rho\sum_{t=1}^{n}\mathbb{E}[\tilde{E}_t]-\rho\ell e_0\big)\nonumber\\*
&\qquad\geq
\frac{n}{n+\ell}\big(C_{\mathcal{D}}^{(n)}(\Gamma)-\rho(\Gamma+\tfrac{\ell e_0}{n})\big).
\end{align*}
Substituting in \eqref{eq:Jrho_suboptimal_pmf} and taking $n\to\infty$:
\begin{equation}
J_{\mathcal{D}}(\rho)\geq C_{\mathcal{D}}(\Gamma)-\rho\Gamma.
\label{eq:Jrho_lower_bound}
\end{equation}
Note that this holds for \emph{any} $\Gamma,\rho\geq0$ for which there is at least one feasible solution to \eqref{eq:CD}.

For the lower bound, let $(\mathbf{X}^\star,\mathbf{E}^\star)$ be such that
\begin{equation}
\liminf_{n\to\infty}\frac{1}{n}\big(H(X^{\star n})-\rho\sum_{t=1}^{n}\mathbb{E}[E_t^\star]\big)\geq J_{\mathcal{D}}(\rho)-\epsilon.
\label{eq:process_sup_epsilon_inequality}
\end{equation}
For $n\geq1$, let $p^\star(x^n,e^n)$ be the corresponding marginal, and denote $\Gamma_n\triangleq\frac{1}{n}\sum_{t=1}^{n}\mathbb{E}[E_t^\star]$.
We have:
\[
\frac{1}{n}\big(H(X^{\star n})-\rho\sum_{t=1}^{n}\mathbb{E}[E^\star_t]\big)
\leq C_{\mathcal{D}}^{(n)}(\Gamma_n)-\rho\Gamma_n.
\]
Consider $\underline{\Gamma}=\liminf_{n\to\infty}\Gamma_n$. For any $\delta>0$, there exists a subsequence $\{n_k\}_{k=1}^{\infty}$ such that $|\Gamma_{n_k}-\underline{\Gamma}|\leq\delta$.
Since $C_{\mathcal{D}}^{(n)}(\Gamma)$ is non-decreasing in $\Gamma$, we get:
\[
\frac{1}{n_k}\big(H(X^{\star n_k})-\rho\sum_{t=1}^{n_k}\mathbb{E}[E^\star_t]\big)
\leq C_{\mathcal{D}}^{(n_k)}(\underline{\Gamma}+\delta)-\rho(\underline{\Gamma}-\delta).
\]
Taking $k\to\infty$:
\begin{align*}
&\liminf_{k\to\infty}\frac{1}{n_k}\big(H(X^{\star n_k})-\rho\sum_{t=1}^{n_k}\mathbb{E}[E^\star_t]\big)\nonumber\\*
&\qquad\leq \liminf_{k\to\infty}\big\{C_{\mathcal{D}}^{(n_k)}(\underline{\Gamma}+\delta)-\rho(\underline{\Gamma}-\delta)\big\}.
\end{align*}
Since $C_{\mathcal{D}}^{(n)}(\Gamma)$ converges to a finite limit, any subsequence also converges to the same limit.
For the LHS, the $\liminf$ of any subsequence is bounded below by the $\liminf$ of the sequence itself. Therefore:
\[
\liminf_{n\to\infty}\frac{1}{n}\big(H(X^{\star n})-\rho\sum_{t=1}^{n}\mathbb{E}[E^\star_t]\big)
\leq C_{\mathcal{D}}(\underline{\Gamma}+\delta)-\rho(\underline{\Gamma}-\delta).
\]
By standard time-sharing arguments, $C_{\mathcal{D}}(\Gamma)$ is concave and therefore also continuous.
Hence, substituting in~\eqref{eq:process_sup_epsilon_inequality} and taking $\delta\to0$ we obtain:
\begin{equation}
J_{\mathcal{D}}(\rho)-\epsilon\leq C_{\mathcal{D}}(\underline{\Gamma})-\rho\underline{\Gamma}.
\label{eq:Jrho_upper_bound_liminf}
\end{equation}
Next, we can repeat exactly the same steps for $\overline{\Gamma}=\limsup_{n\to\infty}\Gamma_n$:
\begin{equation}
J_{\mathcal{D}}(\rho)-\epsilon\leq C_{\mathcal{D}}(\overline{\Gamma})-\rho\overline{\Gamma}.
\label{eq:Jrho_upper_bound_limsup}
\end{equation}
Combining \eqref{eq:Jrho_upper_bound_liminf} and \eqref{eq:Jrho_upper_bound_limsup}, we have for any $0\leq\alpha\leq1$:
\begin{align}
J_{\mathcal{D}}(\rho)-\epsilon
&\leq \alpha \big(C_{\mathcal{D}}(\underline{\Gamma})
	-\rho\underline{\Gamma}\big)
	+(1-\alpha)\big( C_{\mathcal{D}}(\overline{\Gamma})
	-\rho\overline{\Gamma}\big)\nonumber\\
&\leq C_{\mathcal{D}}(\Gamma^\alpha)-\rho\Gamma^\alpha,
\label{eq:Jrho_upper_bound}
\end{align}
where 
the second inequality is due to concavity of $C_{\mathcal{D}}(\Gamma)$.
Finally, since \eqref{eq:Jrho_lower_bound} holds for any $\Gamma$, we get \eqref{eq:CD_bounds}.\qed

\section{Equivalent Expression for $J_\emptyset(\rho)$ via Sequence Counting}
\label{sec:generic_charger_num_of_seq}

Recall the definition of $J_\emptyset(\rho)$ and $\mathbf{N}_n$ in Section~\ref{subsec:universal_charger_noiseless}.
For any $p(x^n)$ and $e^n$ s.t. $(X^n,e^n)\in\mathcal{A}'_n$ a.s., we clearly have
$H(X^n)\leq \log(\mathbf{N}_n\cdot\mathbf{1})$.
Therefore:
\begin{align}
J_\emptyset(\rho)&=\sup_{\{e_t\}_{t=1}^{\infty}}
\sup_{\substack{\{p(x_t|x^{t-1})\}_{t=1}^{\infty}:\\
(\mathbf{X},\mathbf{e})\in\mathcal{A}'_\infty\text{ a.s.}}}
\liminf_{n\to\infty}\frac{1}{n}\big(H(X^n)-\rho\sum_{t=1}^{n}e_t\big)\nonumber\\*
&\leq\sup_{\{e_t\}_{t=1}^{\infty}}\liminf_{n\to\infty}\frac{1}{n}\big(\log(\mathbf{N}_n\cdot\mathbf{1})-\rho\sum_{t=1}^{n}e_t\big),
\label{eq:J0_dir1}
\end{align}
because $(\mathbf{X},\mathbf{e})\in\mathcal{A}'_\infty$ implies $(X^n,e^n)\in\mathcal{A}'_n$ for every $n$.

To show the other direction, take an arbitrary sequence $\{\tilde{e}_t\}_{t=1}^{\infty}$ and fix an integer $n\geq1$.
Consider $\tilde{e}^n$, the first $n$ symbols of the sequence,
and let $\tilde{X}^n\sim\tilde{p}(x^n)$, where
\[
\tilde{p}(x^n)=\argmax_{\substack{p(x^n):\\ (X^n,\tilde{e}^n)\in\mathcal{A}'_n}}H(X^n).
\]
Clearly $H(\tilde{X}^n)=\log(\mathbf{N}_n\cdot\mathbf{1})$.
For a fixed positive energy symbol $e_0\in\mathcal{E}$ and $\ell=\lceil\bar{B}/e_0\rceil$, we construct an  energy sequence $\mathbf{e}$ by replicating copies of $\tilde{e}^n$, followed by $\ell$ times the symbol $e_0$.
We similarly construct a process $\mathbf{X}$ by replicating i.i.d. copies of $\tilde{X}^n$ followed by $\ell$ zero symbols.
To be precise, for any $t=(i-1)(n+\ell)+j$, $i\geq1$, $1\leq j\leq n+\ell$:
\begin{align*}
e_t&=\begin{cases}
\tilde{e}_j&,1\leq j\leq n\\
e_0&,n+1\leq j\leq n+\ell
\end{cases}\\
p(x_t|x^{t-1})&=\begin{cases}
\tilde{p}(x_t|x_{(i-1)(n+\ell)+1}^{t-1})&,1\leq j\leq n\\
1(x_t=0)&,n+1\leq j\leq n+\ell
\end{cases}
\end{align*}
This construction guarantees that the energy constraints are satisfied: $(\mathbf{X},\mathbf{e})\in\mathcal{A}'_\infty$ a.s.
For this construction, we compute:
\begin{align}
J_\emptyset(\rho)&\geq
\lim_{k\to\infty}\frac{1}{k}\big(H(X^k)-\rho\sum_{t=1}^{k}e_t\big)\nonumber\\*
&=\frac{1}{n+\ell}\big(\log(\mathbf{N}_n\cdot\mathbf{1})-\rho\sum_{t=1}^{n}\tilde{e}_t-\rho\ell e_0\big).
\end{align}
The LHS does not depend on $n$ nor on the choice of sequence $\{\tilde{e}_t\}_{t=1}^{\infty}$, hence we can take $n\to\infty$ and take supremum over all sequence $\{{e}_t\}_{t=1}^{\infty}$:
\begin{equation}
J_\emptyset(\rho)\geq\sup_{\{e_t\}_{t=1}^{\infty}}\liminf_{n\to\infty}\frac{1}{n}\big(\log(\mathbf{N}_n\cdot\mathbf{1})-\rho\sum_{t=1}^{n}e_t\big).
\label{eq:J0_dir2}
\end{equation}
This inequality along with \eqref{eq:J0_dir1} implies \eqref{eq:J0}.

\section{Convergence of Optimal Finite-Horizon Rewards}
\label{sec:Lagrangian_converges}

We adopt the notation of Appendix~\ref{sec:Lagrange_multipliers_proof} to show all three cases, namely $J_\emptyset(\rho)$, $J_X(\rho)$, and $J_M(\rho)$, simultaneously.
Specifically, for every set $\mathcal{D}$ we wish to prove:
\begin{equation}
J_{\mathcal{D}}(\rho)=\lim_{n\to\infty}\frac{1}{n}\tilde{J}_{\mathcal{D}}^{(n)}(\rho),
\label{eq:Jrho_lim}
\end{equation}
where $J_{\mathcal{D}}(\rho)$ is given by~\eqref{eq:Jrho_def} and
\begin{equation}
\tilde{J}_{\mathcal{D}}^{(n)}(\rho)\triangleq
\max_{\substack{p(x^n,e^n)\in\mathcal{D}_n:\\ (X^n,E^n)\in\mathcal{A}'_n\text{ a.s.}}} \big\{H(X^n)-\rho\sum_{t=1}^{n}\mathbb{E}[E_t]\big\}.
\label{eq:Jrho_n}
\end{equation}

We prove this in a similar manner to Appendix~\ref{sec:Lagrange_multipliers_proof}.
Let $(\tilde{X}^n,\tilde{E}^n)\sim \tilde{p}(x^n,e^n)$ be the maximizing distribution in~\eqref{eq:Jrho_n},
and construct a process $(\mathbf{X},\mathbf{E})$ exactly as in Appendix~\ref{sec:Lagrange_multipliers_proof}: concatenate i.i.d. copies of $(\tilde{X}^n,\tilde{E}^n)$ followed by $\ell$ time slots with $X=0$ and $E=e_0$, for some positive $e_0\in\mathcal{E}$ and $\ell=\lceil\bar{B}/e_0\rceil$.
Then
\begin{align*}
J_{\mathcal{D}}(\rho)&\geq
\frac{1}{n+\ell}\big(H(\tilde{X}^n)-\rho\sum_{t=1}^{n}\mathbb{E}[\tilde{E}_t]-\rho\ell e_0\big)\\*
&=\frac{1}{n+\ell}({\tilde{J}_\mathcal{D}^{(n)}(\rho)}-{\rho\ell e_0})
\end{align*}
Taking $n\to\infty$ yields:
\begin{equation}
J_{\mathcal{D}}(\rho)\geq \limsup_{n\to\infty}\frac{1}{n}
{\tilde{J}_{\mathcal{D}}^{(n)}(\rho)}.
\label{eq:Jrho_n_limsup}
\end{equation}

Next, consider an arbitrary process $(\mathbf{X},\mathbf{E})$ such that $\{p(x_t,e_t|x^{t-1},e^{t-1})\}_{t=1}^{\infty}\in\mathcal{D}_\infty$ and $(\mathbf{X},\mathbf{E})\in\mathcal{A}'_{\infty}$ a.s.,
and let $(X^n,E^n)$ denote the first $n$ symbols of this process.
Clearly,
\begin{align*}
H(X^n)-\rho\sum_{t=1}^{n}\mathbb{E}[E_t]&\leq \tilde{J}_{\mathcal{D}}^{(n)}(\rho),\\
\liminf_{n\to\infty}\frac{1}{n}\big(H(X^n)-\rho\sum_{t=1}^{n}\mathbb{E}[E_t]\big)&\leq\liminf_{n\to\infty}\frac{1}{n}{\tilde{J}_{\mathcal{D}}^{(n)}(\rho)}.
\end{align*}
Since the RHS does not depend on the process $(\mathbf{X},\mathbf{E})$, we can take supremum over all such processes to obtain:
\begin{equation}
J_{\mathcal{D}}(\rho)\leq\liminf_{n\to\infty}\frac{1}{n}{\tilde{J}_{\mathcal{D}}^{(n)}(\rho)}.
\label{eq:Jrho_n_liminf}
\end{equation}
Combining~\eqref{eq:Jrho_n_limsup} and~\eqref{eq:Jrho_n_liminf} implies the limit in~\eqref{eq:Jrho_lim} exists and is equal to $J_\mathcal{D}(\rho)$.

\section{Example: Derivation of Capacity Expressions}
\label{sec:example_derivations}

\begin{figure*}[!t]
\normalsize
\setcounter{MYtempeqncnt}{\value{equation}}
\setcounter{equation}{98}
\begin{equation}
\lambda+h(b)=
\max_{e\in\{0,2\}}
\max_{\substack{p(x):\\ X\leq\min\{b+e,2\}\text{ a.s.}}}
\Big\{H(X)-\rho e+\mathbb{E}\big[h(\min\{b+e,2\}-X)\big]\Big\}
\qquad
,b=0,1,2.
\label{eq:example_CX_Bellman}
\end{equation}
\setcounter{equation}{\value{MYtempeqncnt}}
\hrulefill
\vspace*{4pt}
\end{figure*}

\subsection{Proof of Proposition~\ref{prop:example_C0}: Generic Charger}

To apply the results of Section~\ref{subsec:universal_charger_noiseless}, 
we explicitly write the MDP parameters in Table~\ref{tab:universal_charger_general_DP}.
First, observe that $\mathcal{B}=\{0,1,2\}$ and the battery state graphs are given by $G_0$ and $G_2$ in Fig.~\ref{fig:battery_state_graphs}, for $e=0$ and $e=2$ respectively. The adjacency matrices are given by
\begin{align*}
&\mathbf{A}_0=\begin{bmatrix}
1&0&0\\
1&1&0\\
1&1&1
\end{bmatrix},
\qquad
\mathbf{A}_2=\begin{bmatrix}
1&1&1\\
1&1&1\\
1&1&1
\end{bmatrix}.
\end{align*}
Writing the state vector at time $t$ as 
$\mathbf{s}_t=\big(s_t(0),\ s_t(1),\ s_t(2)\big)$,
we have the following state evolution function:
\begin{align}
\mathbf{s}_{t+1}&=f(\mathbf{s}_t,e_t)\nonumber\\*
&=
\begin{cases}
\tfrac{1}{1+s_t(1)+2s_t(2)}\big(1,\ s_t(1)+s_t(2),\ 
s_t(2)\big)&,e=0\\
\big(\tfrac{1}{3},\ \tfrac{1}{3},\ \tfrac{1}{3}\big)
&,e=2
\end{cases}
\end{align}
with the initial state being $\mathbf{s}_1=\big(0,\ 0,\ 1\big)$.
Now, unless the charging sequence is all-zeros (which will yield zero capacity), there is a finite time $t$ in which necessarily $e_t=2$.
Since the system will then move to state $\mathbf{s}=\big(\tfrac{1}{3},\ \tfrac{1}{3},\ \tfrac{1}{3}\big)$,
it is evident that beyond this point the system can only visit the following set of states: $\{\mathbf{s}^{(k)}\}_{k=1}^{\infty}$, where
\begin{align}
\mathbf{s}^{(1)}&=\big(\tfrac{1}{3},\ \tfrac{1}{3},\ \tfrac{1}{3}\big),\nonumber\\
\mathbf{s}^{(k+1)}&=f(\mathbf{s}^{(k)},0)
\qquad ,k\geq2.
\end{align}
Since the reward function is bounded, we may ignore the transient phase until the system reaches this state,
and assume the initial state is $\mathbf{s}_1=\mathbf{s}^{(1)}$ and the state space is $\mathcal{S}=\{\mathbf{s}^{(k)}\}_{k=1}^{\infty}$.
This observation will greatly simplify the solution of the Bellman equation, since we will only need to find a function $h$ with a countable domain, instead of the entire probability simplex in $\mathbb{R}^3$.

It can be easily verified that $\mathbf{s}^{(k)}\in\mathcal{S}$ can be written as
\begin{equation}
\mathbf{s}^{(k)}=\left(\frac{k}{k+2},\ \frac{2k}{(k+1)(k+2)},\ \frac{2}{(k+1)(k+2)}\right).
\end{equation}
The reward function for our reduced state space is given by
\begin{align}
g(\mathbf{s}^{(k)},e)&=\begin{cases}
\log\tfrac{k+3}{k+1}&,e=0\\
\log3-2\rho&,e=2
\end{cases}
\end{align}
To write the Bellman equation~\eqref{eq:Bellman_universal_charger}, observe that the function $h:\mathcal{S}\to\mathbb{R}$ is in fact a sequence $\{h_k\}_{k=1}^{\infty}$, where $h_k=h(\mathbf{s}^{(k)})$.
The Bellman equation can be written as follows:
\begin{equation}
\lambda+h_k=\max\{
\log\tfrac{k+3}{k+1}+h_{k+1},\ 
\log3-2\rho+h_1
\},
\qquad k\geq1,
\label{eq:Bellman_generic_charger_example}
\end{equation}
where the first term in the maximum corresponds to $e=0$ and the second term corresponds to $e=2$.

We will now find a solution to this equation.
Note that any stationary policy $e(\mathbf{s})$ consists of some (possibly empty) finite sequence of zeros, until $e=2$ for some state. Then, since the state will revert to $\mathbf{s}^{(1)}$, this pattern will repeat indefinitely.
Hence, fix an integer $\ell\geq1$. We guess a solution of the following form:
\begin{align}
\lambda+h_k&=\log\tfrac{k+3}{k+1}+h_{k+1}
,&&k=1,\ldots,\ell-1,\label{eq:advance_state}\\
\lambda+h_{\ell}&=\log3-2\rho+h_1,\label{eq:return_to_inital_state}
\end{align}
where $h_k$ for $k>\ell$ will be determined later.
The system will cycle through states $\{\mathbf{s}^{(k)}\}_{k=1}^{\ell}$.
This corresponds to the following charging sequence:
\begin{equation}
e^{(\ell)}_t=\begin{cases}
2&,t=1\mod \ell\\
0&,\text{otherwise}
\end{cases}
\label{eq:periodic_charging_sequence_l}
\end{equation}

Without loss of generality, we choose $h_1=0$.
Equations \eqref{eq:advance_state} and \eqref{eq:return_to_inital_state} are then a system of $\ell$ equations with $\ell$ variables, the solution of which is given by:
\begin{align}
\lambda&=\frac{1}{\ell}\left[\log\frac{(\ell+1)(\ell+2)}{2}-2\rho\right],\label{eq:generic_charger_example_lambda}\\
h_k&=(k-1)\lambda-\log\frac{(k+1)(k+2)}{6}
,\qquad k=1,\ldots,\ell.
\end{align}
Additionally, we choose $h_k=h_\ell$, $k>\ell$, which implies~\eqref{eq:return_to_inital_state} becomes 
\[
\lambda+h_k=\log3-2\rho+h_1,
\qquad k\geq\ell.
\]

It is left to verify that this indeed solves \eqref{eq:Bellman_generic_charger_example}. To do this, we need to show:
\begin{align}
\log\tfrac{k+3}{k+1}+h_{k+1}
&\geq\log3-2\rho+h_1,
&k&=1,\ldots,\ell-1,\\
\log\tfrac{k+3}{k+1}+h_{k+1}
&\leq\log3-2\rho+h_1,
&k&\geq\ell.
\end{align}
This gives:
\begin{align}
\rho&\lefteqn{\geq \tfrac{1}{2}\tfrac{\ell}{\ell-k}\log\tfrac{(k+1)(k+2)}{2}-\tfrac{1}{2}\tfrac{k}{\ell-k}\log\tfrac{(\ell+1)(\ell+2)}{2},}\nonumber\\*
&&k&=1,\ldots,\ell-1,
\label{eq:rho_lower_bounds_k}\\
\rho&\leq\tfrac{1}{2}\log\tfrac{(\ell+1)(\ell+2)}{2}-\tfrac{\ell}{2}\log\tfrac{k+3}{k+1},
&k&\geq\ell.
\label{eq:rho_upper_bounds_k}
\end{align}
The RHS in \eqref{eq:rho_lower_bounds_k} is increasing in $k$, whereas the RHS in \eqref{eq:rho_upper_bounds_k} is decreasing in $k$. Hence, this yields the following bounds on $\rho$:
\begin{equation}
\tfrac{1}{2}\log\tfrac{(\ell+1)(\ell+2)}{2}
-\tfrac{\ell}{2}\log\tfrac{\ell+2}{\ell}
\leq\rho\leq
\tfrac{1}{2}\log\tfrac{(\ell+1)(\ell+2)}{2}
-\tfrac{\ell}{2}\log\tfrac{\ell+3}{\ell+1}.
\label{eq:generic_charger_example_rho_bounds}
\end{equation}
For any $\rho$ that satisfies these inequalities, the charging sequence $\{e_t^{(\ell)}\}_{t\geq1}$ in~\eqref{eq:periodic_charging_sequence_l} is optimal.
By Lemma~\ref{lemma:Lagrange_multipliers},
this yields $\Gamma=\frac{2}{\ell}$, and capacity is given by
\begin{align}
C_{\emptyset}(\tfrac{2}{\ell})&=J_\emptyset(\rho)+\rho\Gamma\nonumber\\*
&=\frac{1}{\ell}\log\frac{(\ell+1)(\ell+2)}{2}
\label{eq:generic_charger_Cl}
\end{align}
where $J_\emptyset(\rho)=\lambda$ is given by~\eqref{eq:generic_charger_example_lambda}.
This yields~\eqref{eq:example_C0_integer}.

To obtain $C_{\emptyset}(\Gamma)$ for all $0\leq\Gamma\leq2$, let $\rho=\tfrac{1}{2}\log\tfrac{(\ell+1)(\ell+2)}{2}
-\tfrac{\ell}{2}\log\tfrac{\ell+3}{\ell+1}$, the upper bound in~\eqref{eq:generic_charger_example_rho_bounds}, for some integer $\ell\geq1$.
It can be verified from \eqref{eq:generic_charger_example_rho_bounds} that in this case, both sequences $\{e_t^{(\ell)}\}_{t\geq1}$ and $\{e_t^{(\ell+1)}\}_{t\geq1}$ are solutions to the Bellman equation \eqref{eq:Bellman_generic_charger_example}.
Hence, time-sharing between these two sequences yields an optimal solution.
To apply the result of Lemma~\ref{lemma:Lagrange_multipliers}, this corresponds to $\underline{\Gamma}=\frac{2}{\ell+1}$ and $\overline{\Gamma}=\frac{2}{\ell}$.
For any $0\leq\alpha\leq1$, this yields 
\[
\Gamma=\alpha\frac{2}{\ell+1}+(1-{\alpha})\frac{2}{\ell},
\]
\[
J_\emptyset(\rho)=\log\tfrac{\ell+3}{\ell+1},
\]
\begin{align*}
C_\emptyset(\Gamma)
&=J_\emptyset(\rho)+\rho(\alpha\underline{\Gamma}+(1-\alpha)\overline{\Gamma})\\
&=\alpha\frac{1}{\ell+1}\log\frac{(\ell+2)(\ell+3)}{2}\\*
&\qquad+(1-\alpha)\frac{1}{\ell}\log\frac{(\ell+1)(\ell+2)}{2}\\
&=\alpha C_\emptyset(\tfrac{2}{\ell+1})+(1-\alpha)C_\emptyset(\tfrac{2}{\ell}).
\end{align*}
This yields~\eqref{eq:example_C0_Gamma}.
For $\Gamma=0$, the capacity is trivially $C_\emptyset(0)=0$, and for $\Gamma>2$ we can achieve the unconstrained capacity $\log3$ by letting $e_t=2$ for all $t$.
\qed

\subsection{Proof of Proposition~\ref{prop:example_CX}: Charger Adjacent to Transmitter}
\label{subsec:ternary_channel_binary_energy_CX}

\begin{figure*}[!b]
\vspace*{4pt}
\hrulefill
\normalsize
\setcounter{MYtempeqncnt}{\value{equation}}
\setcounter{equation}{106}
\begin{align}
p^\star(x|b=0)=p^\star(x|b=2)
&=\frac{1}{{1+2^{2\rho+1}+\sqrt{1+2^{2\rho+2}}}}\left(
{2^{2\rho+1}},\ 
{-1+\sqrt{1+2^{2\rho+2}}},\ {2}\right),
\label{eq:example_CX_largerho_pb02}\\
p^\star(x|b=1)&=\frac{1}{1+\sqrt{1+2^{2\rho+2}}}\left(
-1+\sqrt{1+2^{2\rho+2}},\ 2,\ 0\right).
\label{eq:example_CX_largerho_pb1}
\end{align}
\setcounter{equation}{\value{MYtempeqncnt}}
\end{figure*}

We compute the capacity using the result of Section \ref{subsec:charger_adj_Tx_noiseless}.
To this end, we will need to find a solution for \eqref{eq:DP_Bellman_CX_general}, which in this case is given by~\eqref{eq:example_CX_Bellman} at the top of the page.
\addtocounter{equation}{1}

The maximization over $p(x)$ can be easily solved using Lemma~\ref{lemma:max_entropy_plus_linear} in Appendix \ref{sec:proof_of_lemma_max_entropy_plus_linear}.
Without loss of generality we choose $h(0)=0$, leaving 3 equations for $\lambda$, $h(1)$, $h(2)$:
\begin{align}
\lambda&=\max\big\{0,\ \log(1+2^{h(1)}+2^{h(2)})-2\rho\big\},
	\label{eq:DP_Bellman_CX_h0}\\
\lambda+h(1)&=\max\big\{\log(1+2^{h(1)}),\nonumber\\*
	&\hspace{4.5em} \log(1+2^{h(1)}+2^{h(2)})-2\rho\big\},
	\label{eq:DP_Bellman_CX_h1}\\
\lambda+h(2)&=\log(1+2^{h(1)}+2^{h(2)}),
	\label{eq:DP_Bellman_CX_h2}
\end{align}
where in \eqref{eq:DP_Bellman_CX_h2} it is evident that $e^\star(2)=0$ regardless of $\rho$.

The solution can be separated into 3 different regions of the parameter $\rho$. Each region determines the maximum in \eqref{eq:DP_Bellman_CX_h0} and \eqref{eq:DP_Bellman_CX_h1}, and therefore also $e^\star(b)$. The solution $\lambda$, $h(b)$ follows.
\begin{enumerate}
\item $0\leq\rho<\frac{1}{2}\log\left(\frac{1+2^{h(1)}+2^{h(2)}}{1+2^{h(1)}}\right)$\\[\medskipamount]
The optimal solution is
$e^\star(b)=(2,\ 2,\ 0)$,
$h(b)=(0,\ 0,\ 2\rho)$,
and $\lambda=\log(2+2^{2\rho})-2\rho$.
It can be seen that the equivalent range of $\rho$ is $0\leq \rho<1/2$.
%

\item $\frac{1}{2}\log\left(\frac{1+2^{h(1)}+2^{h(2)}}{1+2^{h(1)}}\right)\leq\rho<\frac{1}{2}\log(1+2^{h(1)}+2^{h(2)})$\\[\medskipamount]
In this case we get $e^\star(b)=(2,\ 0,\ 0)$,
$h(b)=(0,\ \log(-1+\sqrt{1+2^{2\rho+2}})-1,\ 2\rho)$,
and $\lambda=2\log(1+\sqrt{1+2^{2\rho+2}})-2\rho-2$.
Substituting in the range of $\rho$ under consideration, it becomes simply $\rho\geq1/2$.
%

\item $\frac{1}{2}\log(1+2^{h(1)}+2^{h(2)})\leq\rho$\\[\medskipamount]
There is no feasible solution for values of $\rho$ in this range.
\end{enumerate}
In conclusion, we get the following expressions for $J(\rho)$:
\begin{equation}
J(\rho)=\begin{cases}
\log(2+2^{2\rho})-2\rho&,0\leq\rho<1/2\\
2\log(1+\sqrt{1+2^{2\rho+2}})-2\rho-2&,1/2\leq\rho
\end{cases}
\label{eq:J_rho_final}
\end{equation}

Next, we wish to find $C_X(\Gamma)$.
We do so for each range of values of $\rho$ separately:
\begin{enumerate}
\item $0\leq\rho<1/2$\\[\medskipamount]
Since the optimal strategy is to always charge when the battery is not full (i.e. when $b<2$),
the optimal input distribution for each $b$ is the same, and is given by (see Appendix~\ref{sec:proof_of_lemma_max_entropy_plus_linear}):
\begin{align*}
p^\star(x)=\left(\tfrac{2^{2\rho}}{2+2^{2\rho}},\tfrac{1}{2+2^{2\rho}}
,\tfrac{1}{2+2^{2\rho}}\right).
\end{align*}
Since the $X_t$'s are i.i.d., we can readily compute the rate:
\begin{align}
C_X(\rho)
&=H(X)\nonumber\\
&=\log(2+2^{2\rho})-2\rho\cdot\frac{2^{2\rho}}{2+2^{2\rho}}.
\label{eq:CX_rho_smallrho}
\end{align}
Now, from~\eqref{eq:J_rho_final} and $J_X(\rho)={C}_X(\rho)-\rho\Gamma(\rho)$:
\begin{equation}
\Gamma(\rho)=\frac{4}{2+2^{2\rho}}.
\label{eq:Gamma_rho_smallrho}
\end{equation}
First, observe that $0\leq\rho<1/2$ translates to $1<\Gamma\leq4/3$.
Next, solving for $\rho$ 
and substituting in \eqref{eq:CX_rho_smallrho}
yields:
\begin{equation}
{C}_X(\Gamma)=\tfrac{\Gamma}{2}+H_2\left(\tfrac{\Gamma}{2}\right).
\label{eq:CX_Gamma_smallrho}
\end{equation}

\item $1/2<\rho$\\[\medskipamount]
The optimal input distribution in this case is given by~\eqref{eq:example_CX_largerho_pb02} and~\eqref{eq:example_CX_largerho_pb1} at the bottom of the page.\addtocounter{equation}{2}
We will use $p^\star(x|b)$ and the battery evolution equation~\eqref{eq:EH_battery} to find the stationary distribution $p^\star(b)$ of the Markov chain $B_t$.
We have a set of linear equations for $p^\star(b)$, $b=0,1,2$:
\begin{align*}
p^\star(b)&=\sum_{b'\in\mathcal{B}}p^\star(b')
	\sum_{x\in\mathcal{X}}p^\star(x|b')
	1_{\{b=\min\{b'+e^\star(b'),2\}-x\}}\\
&=\sum_{b'\in\mathcal{B}}p^\star(b')
	p(x=\min\{b'+e^\star(b'),2\}-b\ |b').
\end{align*}
Solving for $p^\star(b=0)$ yields:
\begin{align*}
p^\star(b=0)
&=\frac{1}{\sqrt{1+2^{2\rho+2}}}.
\end{align*}
Since $E_t$ is a deterministic function of $B_t$, we can readily compute
\begin{align}
\Gamma(\rho)
&=\lim_{n\to\infty}\frac{1}{n}\sum_{t=1}^{n}\mathbb{E}[E_t]\nonumber\\*
&=\sum_{b\in\mathcal{B}}p^\star(b)e^\star(b)\nonumber\\
&=2\cdot p^\star(b=0)\nonumber\\
&=\frac{2}{\sqrt{1+2^{2\rho+2}}}.
\label{eq:Gamma_rho_largerho}
\end{align}
The resulting range of $\Gamma$ values for $\rho>1/2$ is $0<\Gamma<2/3$.
Solving \eqref{eq:Gamma_rho_largerho} for $\rho$ 
and substituting in~\eqref{eq:J_rho_final} together with ${C}_X(\rho)=J_X(\rho)+\rho\Gamma(\rho)$ we obtain:
\begin{equation}
{C}_X(\Gamma)=
\big(1+\tfrac{\Gamma}{2}\big)\log\tfrac{2+\Gamma}{2\Gamma}
	-\big(1-\tfrac{\Gamma}{2}\big)\log\tfrac{2-\Gamma}{2\Gamma}.
\label{eq:CX_Gamma_largerho}
\end{equation}

\item $\rho=1/2$\\[\medskipamount]
In the two previous cases we derived ${C}_X(\Gamma)$ for $0<\Gamma<2/3$ and $1<\Gamma\leq 4/3$.
The remaining range, $2/3\leq\Gamma\leq1$, can be attained with $\rho=1/2$.
To see this, note that the optimal $e^\star(b)$ for either one of the previous cases is optimal here.
Therefore, time-sharing is optimal here, and we apply the result of Lemma~\ref{lemma:Lagrange_multipliers} with $\underline{\Gamma}=2/3$ and $\overline{\Gamma}=1$.
From \eqref{eq:J_rho_final} we have $J_X(\rho)=1$.
Hence, for $\Gamma=\alpha\underline{\Gamma}+(1-\alpha)\overline{\Gamma}$ with $0\leq\alpha\leq 1$:
\begin{align}
{C}_X(\Gamma)
&=J_X(\rho)+\rho\Gamma\nonumber\\
&=\frac{1}{2}\Gamma+1.
\label{eq:CX_Gamma_timesharing}
\end{align}
\end{enumerate}
Finally, we obviously have ${C}_X(0)=0$, and for any $\Gamma\geq4/3$ we can achieve ${C}_X(\Gamma)=\log 3$, which is the capacity without an input constraint, by using the optimal policy for $\rho=0$.
Collecting~\eqref{eq:CX_Gamma_smallrho}, \eqref{eq:CX_Gamma_largerho}, and~\eqref{eq:CX_Gamma_timesharing}, we obtain~\eqref{eq:example_CX_Gamma}.\qed

\subsection{Proof of Proposition~\ref{prop:example_CM}: Fully Cognitive Charger}
\label{subsec:CM_example_derivation}

Before applying the results of Section~\ref{subsec:charger_knows_M_noiseless}, we make use of a unique structure of this channel to simplify the MDP formulation.
We return to the original capacity expression in \eqref{eq:CM}.
First, we claim
that the optimal input distribution is of the form $p(x^n,e^n)=p(x^n)1(e^n|x^n)$, i.e. $e^n$ is a deterministic function of $x^n$.
In other words, we claim that capacity can equivalently be written as follows:
\begin{equation}
C_M(\Gamma)=\lim_{n\to\infty}\frac{1}{n}
\max_{\substack{p(x^n),\ f:\mathcal{X}^n\to\mathcal{E}^n:\\
	(X^n,f(X^n))\in\mathcal{A}_n(\Gamma)\text{ a.s.}}}
H(X^n).
\label{eq:CM_original}
\end{equation}
\begin{proof}
We will show for every $n\geq1$:
\begin{equation}
\max_{\substack{p(x^n,e^n):\\ 
	(X^n,E^n)\in\mathcal{A}_n(\Gamma)\text{ a.s.}}}
	H(X^n)
=\max_{\substack{p(x^n),\ f:\mathcal{X}^n\to\mathcal{E}^n:\\
	(X^n,f(X^n))\in\mathcal{A}_n(\Gamma)\text{ a.s.}}}
	H(X^n).
\end{equation}

Clearly,
\[
\max_{\substack{p(x^n,e^n):\\ 
	(X^n,E^n)\in\mathcal{A}_n(\Gamma)\text{ a.s.}}}
	H(X^n)
\geq
\max_{\substack{p(x^n),\ f:\mathcal{X}^n\to\mathcal{E}^n:\\
	(X^n,f(X^n))\in\mathcal{A}_n(\Gamma)\text{ a.s.}}}
	H(X^n).
\]
To show the other direction, fix $p(x^n,e^n)$ such that $(X^n,E^n)\in\mathcal{A}_n(\Gamma)$ a.s.,
and let $p(x^n)=\sum_{e^n}p(x^n,e^n)$ be the marginal.
Then, for any $x^n$ with positive probability $p(x^n)>0$, there exists at least one $e^n$ such that $(x^n,e^n)\in\mathcal{A}_n(\Gamma)$.
Otherwise, $p(x^n,e^n)=0$ for every $e^n$ which will imply $p(x^n)=0$.
Define a function $f$ by mapping every such $x^n$ to some $e^n$ for which $(x^n,e^n)\in\mathcal{A}_n(\Gamma)$.
The mapping $f$ for $x^n$ that have zero probability can be set to any arbitrary $e^n$.
Using $f$, define a joint probability distribution
$\tilde{p}(x^n,e^n)=p(x^n)1_{\{e^n=f(x^n)\}}$.
Clearly, the marginal is
\[
\tilde{p}(x^n)=\sum_{e^n}\tilde{p}(x^n,e^n)=p(x^n).
\]
Since $H(X^n)$ depends only on the marginal $p(x^n)$, we have shown that any distribution $p(x^n,e^n)$ can be replaced by a distribution $p(x^n)1_{\{e^n=f(x^n)\}}$ without changing the objective.
This concludes the proof.
\end{proof}

From now on, we will only be concerned with functions $f$ that satisfy $(x^n,f(x^n))\in\mathcal{A}'_n$ for all $x^n\in\mathcal{X}^n$.
This is because for any pair $(f,p(x^n))$ that satisfies $(X^n,f(X^n))\in\mathcal{A}_n(\Gamma)$ a.s., the only way a particular $x^n$ will have $(x^n,f(x^n))\not\in\mathcal{A}'_n$ is if $p(x^n)=0$. Since this will not affect the probability $\Pr\left(\sum_{t=1}^{n}E_t\leq n\Gamma\right)$, we can equivalently choose $f(x^n)=(2,\ldots,2)$ for any such~$x^n$.

Denote the $\ell_1\text{-norm}$ of an energy sequence $\|f(x^n)\|_1=\sum_{t=1}^{n}e_t$, where $e^n=f(x^n)$.
We call a function $f^\star:\mathcal{X}^n\to\mathcal{E}^n$ \emph{minimal}, if, for any other function $f:\mathcal{X}^n\to\mathcal{E}^n$, we have
\[
\|f^\star(x^n)\|_1\leq \|f(x^n)\|_1
\qquad\forall x^n\in\mathcal{X}^n,
\]
In other words, a minimal $f^\star$ will give an energy sequence with the smallest possible cost for all $x^n\in\mathcal{X}^n$, among all functions that satisfy the energy constraints.
Note that there may be more than one such minimal function.

We make the following claim:
\begin{lemma}
\label{lemma:minimal_function_capacity}
Let $f^{(n)}:\mathcal{X}^n\to\mathcal{E}^n$, $n\geq1$, be a sequence of minimal functions that satisfy the energy constraints for every $x^n$. Then $C_M(\Gamma)=C^\star_M(\Gamma)$, where
\begin{equation}
{C}^\star_M(\Gamma)=\lim_{n\to\infty}\frac{1}{n}
\max_{\substack{p(x^n):\\ 
	\|f^{(n)}(X^n)\|_1\leq n\Gamma\text{ a.s.}}}
H(X^n).
\label{eq:CM_minimal}
\end{equation}
\end{lemma}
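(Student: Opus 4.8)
The plan is to prove the stronger finite-blocklength statement that the two maximizations coincide term by term, i.e.\ that for every $n\ge 1$
\[
\max_{\substack{p(x^n),\ f:\mathcal{X}^n\to\mathcal{E}^n:\\ (X^n,f(X^n))\in\mathcal{A}_n(\Gamma)\text{ a.s.}}} H(X^n)
=\max_{\substack{p(x^n):\\ \|f^{(n)}(X^n)\|_1\le n\Gamma\text{ a.s.}}} H(X^n),
\]
after which dividing by $n$ and letting $n\to\infty$ immediately gives $C_M(\Gamma)=C^\star_M(\Gamma)$ by comparing \eqref{eq:CM_original} and \eqref{eq:CM_minimal}. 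The whole argument rests on the identity $\mathcal{A}_n(\Gamma)=\{(x^n,e^n)\in\mathcal{A}'_n:\ \sum_{t=1}^n e_t\le n\Gamma\}$, which splits membership into an energy-feasibility part (captured by $\mathcal{A}'_n$) and a cost part; the minimal function $f^{(n)}$ is precisely what decouples these, since it attains the smallest possible cost $\|\cdot\|_1$ among all energy-feasible charging sequences, independently for each $x^n$.

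First I would establish the inequality ``$\le$'' from right to left: any $p(x^n)$ feasible on the right-hand side can be paired with the fixed charging function $f^{(n)}$ to produce a feasible point on the left. Indeed, since $f^{(n)}$ satisfies the energy constraints for every $x^n$, we have $(x^n,f^{(n)}(x^n))\in\mathcal{A}'_n$ for all $x^n$, and the right-hand constraint $\|f^{(n)}(X^n)\|_1\le n\Gamma$ a.s.\ supplies the missing cost part, so $(X^n,f^{(n)}(X^n))\in\mathcal{A}_n(\Gamma)$ a.s.\ with the same value of $H(X^n)$.

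For the reverse inequality I would start from an arbitrary feasible pair $(p(x^n),f)$ on the left. For each $x^n$ with $p(x^n)>0$ we have $(x^n,f(x^n))\in\mathcal{A}_n(\Gamma)$, hence $f(x^n)$ is energy-feasible and $\|f(x^n)\|_1\le n\Gamma$. Minimality of $f^{(n)}$ then yields $\|f^{(n)}(x^n)\|_1\le\|f(x^n)\|_1\le n\Gamma$ for every such $x^n$, i.e.\ $\|f^{(n)}(X^n)\|_1\le n\Gamma$ a.s.; thus the same $p(x^n)$ is feasible on the right with identical objective $H(X^n)$. Combining the two inequalities gives the claimed per-$n$ equality.

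The main thing to be careful about—rather than a genuine obstacle—is the well-definedness of the minimal function and the translation of the ``almost surely'' conditions into pointwise statements over the support of $p(x^n)$. Because $\mathcal{E}^n$ is finite and the constraint $(x^n,e^n)\in\mathcal{A}'_n$ depends only on the single pair $(x^n,e^n)$, the minimizations are independent across $x^n$, so $f^{(n)}(x^n)\in\argmin_{e^n:(x^n,e^n)\in\mathcal{A}'_n}\|e^n\|_1$ is a bona fide minimal function whenever each $x^n$ admits at least one energy-feasible assignment, which is exactly the existence hypothesis of the lemma (and holds under the standing assumptions, e.g.\ by charging a sufficiently large symbol each slot). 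I would also note that ties in the $\argmin$ are immaterial, since only the value $\|f^{(n)}(x^n)\|_1$ enters the argument, and that the existence of the limit defining $C^\star_M(\Gamma)$ follows for free once the per-$n$ maxima are shown equal to those defining $C_M(\Gamma)$.
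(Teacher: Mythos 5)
Your proposal is correct and follows essentially the same route as the paper: one direction is the trivial domain inclusion obtained by pairing any feasible $p(x^n)$ with $f^{(n)}$, and the other replaces an arbitrary feasible $f$ by the minimal $f^{(n)}$, which preserves energy feasibility while only decreasing the cost $\|f(x^n)\|_1$ pointwise on the support of $p(x^n)$. Your added care about the pointwise $\argmin$ characterization of minimality and the a.s.\ qualifiers just makes explicit what the paper handles in the remarks preceding the lemma.
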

\begin{proof}
Clearly $C^\star_M(\Gamma)\leq C_M(\Gamma)$, since the optimization domain in~\eqref{eq:CM_minimal} is a subset of that in~\eqref{eq:CM_original}.
For the other direction, fix $n$ and let $f$ and $p(x^n)$ be the optimal pair in~\eqref{eq:CM_original}. By changing $f$ to a minimal $f^{(n)}$, the energy constraints are still satisfied while the total energy cost $\sum_{t=1}^{n}E_t$ can only decrease.
Therefore $C_M(\Gamma)\leq C_M^\star(\Gamma)$.
\end{proof}

In what follows we will describe a way to find one such minimal $f$.
It turns out that, in this case, we can find a minimal $f$ which depends on $x^n$ in a causal manner, namely $e_t=f_t(x^t)$.
In fact, this minimal $f$ will depend only on $x_t$ and the current battery state $b_t$ (which is a function of $b_{t-1}$, $x_{t-1}$, and $e_{t-1}$).
This will be used to simplify the MDP formulation. Specifically, it will allow us to consider a finite action space, instead of the entire probability simplex in $\mathbb{R}^{|\mathcal{B}|}$.

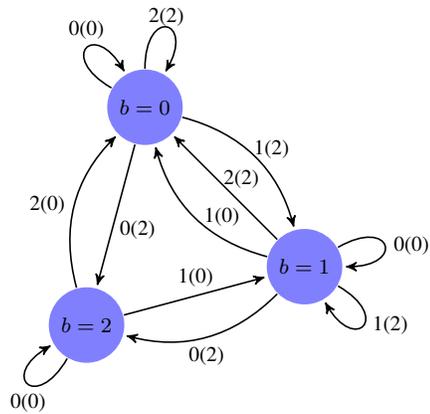
\begin{figure}
\centering
\def \nodedistance {3cm}
\begin{tikzpicture}[->,>=stealth',shorten >=1pt,semithick,font=\footnotesize]
\tikzstyle{every state}=[draw=none,fill=blue!50]
\node[state] (b2) {$b=2$};
\node[state] (b1) at (15:\nodedistance) {$b=1$};
\node[state] (b0) at (75:\nodedistance) {$b=0$};

\path
(b0) edge [in=60,out=90,loop] node [above=-3] {2(2)} ()
	 edge [in=120,out=150,loop] node [above] {0(0)} ()
	 edge [bend left] node [above right=-4] {1(2)} (b1)
	 edge node [below right=-2] {0(2)} (b2)
(b1) edge [in=-60,out=-30,loop] node [right] {1(2)} ()
	 edge [in=0,out=30,loop] node [right] {0(0)} ()
     edge node [above right=-4] {2(2)} (b0)
     edge [bend left] node [right] {1(0)} (b0)
     edge [bend left] node [below] {0(2)} (b2)
(b2) edge [in=-150,out=-120,loop] node [below] {0(0)} (b2)
	 edge [bend left] node [left] {2(0)} (b0)
	 edge node [above] {1(0)} (b1);
\end{tikzpicture}
\caption{Graph for the construction of codeword pairs $(x^n,e^n)$ for the noiseless channel with $\mathcal{X}=\{0,1,2\}$ and $\mathcal{E}=\{0,2\}$.
The nodes represent the battery state $b_t$, and the edges are labeled $x_t (e_t)$.
Any pair $(x^n,e^n)$ that satisfies the energy constraints can be produced by a length $n$ walk on the graph.
}
\label{fig:CM_original_graph}
\end{figure}
\begin{figure}
\centering
\def \nodedistance {3cm}
\begin{tikzpicture}[->,>=stealth',shorten >=1pt,semithick,font=\footnotesize]
\tikzstyle{every state}=[draw=none,fill=blue!50]
\node[state] (b2) {$b=2$};
\node[state] (b1) at (15:\nodedistance) {$b=1$};
\node[state] (b0) at (75:\nodedistance) {$b=0$};

\path
(b0) edge [in=60,out=90,loop] node [above] {2(2)} ()
	 edge [in=120,out=150,loop] node [left] {0(0)} ()
	 edge [bend left] node [above right=-4] {1(2)} (b1)
(b1) edge [in=-30,out=0,loop] node [below] {0(0)} ()
     edge node [below left=-5] {2(2)} (b0)
     edge [bend left] node [below left=-1] {1(0)} (b0)
(b2) edge [in=-150,out=-120,loop] node [below] {0(0)} (b2)
	 edge [bend left] node [left] {2(0)} (b0)
	 edge [bend right] node [below] {1(0)} (b1);
\end{tikzpicture}
\caption{Pruned graphs for the noiseless channel with $\mathcal{X}=\{0,1,2\}$ and $\mathcal{E}=\{0,2\}$.
The nodes represent the battery state $b_t$, and the edges are labeled $x_t (e_t)$.
Any input sequence $x^n$ and an energy sequence $e^n=f(x^n)$, where $f$ is minimal, can be produced by walking $n$ steps on the graph.
}
\label{fig:CM_pruned_graph}
\end{figure}
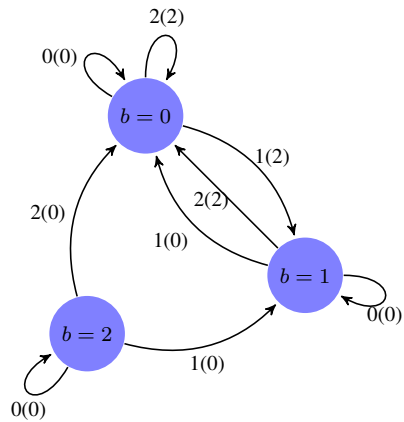

See Fig.~\ref{fig:CM_original_graph}.
Any pair $(x^n,e^n)$ can be produced by a length $n$ walk on this graph.
Notice that we have excluded edges coming out of state $b=2$ with cost $e=2$, since clearly it is never optimal to charge the battery if it is already fully charged.
We will show by a series of pruning steps, i.e. removing certain edges of the graph, that we can reduce it to the graph in Fig.~\ref{fig:CM_pruned_graph} while maintaining the fact that $e^n=f(x^n)$ with $f$ minimal.

For $x^n\in\mathcal{X}^n$ and $b\in\{0,1,2\}$,
define $V_b(x^n)$ as the minimal cost $\sum_{t=1}^{n}e_t$ of an energy sequence $e^n$ such that $(x^n,e^n)\in\mathcal{A}'_n$ and the initial battery level is $b_1=b$:
\[
V_b(x^n)=\min_{\substack{e^n:\\
	 (x^n,e^n)\in\mathcal{A}'_n\\ b_1=b}}
\sum_{t=1}^{n}e_t.
\]
The minimal cost $V_b(x^n)$ can be computed recursively using Bellman's principle of optimality~\cite[Ch.~1.3]{Bertsekas2001vol1}:
\[
V_b(x_t^n)=\min_{\substack{e\in\mathcal{E}:\\
	x_t\leq\min\{b+e,\bar{B}\}}}
\big\{e+V_{b_{t+1}}(x_{t+1}^n)\big\},
\qquad b=0,1,2,
\]
where $b_{t+1}=\min\{b+e,\bar{B}\}-x_t$, a deterministic function of $b$, $e$, and $x_t$.
This means that given the current battery level $b_t$, the desired input symbol $x_t$, and the optimal cost-to-go $\sum_{i=t+1}^{n}e_i$ for every possible value of $b_{t+1}$,  
we need to choose $e_t$ such that the total cost $\sum_{i=t}^{n}e_i$ is minimized.
This gives the following equations for $V_b(x_t^n)$ (we adopt the notation $x_t^n=x_tx_{t+1}^n$):
\begin{align}
x_t=0:
&&	V_0(0x_{t+1}^n)&=\min\big\{V_0(x_{t+1}^n),\ 
		V_2(x_{t+1}^n)+2\big\}
		\label{eq:V_b0_x0}\\
&&	V_1(0x_{t+1}^n)&=\min\big\{V_1(x_{t+1}^n),\ 
		V_2(x_{t+1}^n)+2\big\}
		\label{eq:V_b1_x0}\\
&&	V_2(0x_{t+1}^n)&=V_2(x_{t+1}^n)
		\label{eq:V_b2_x0}\\[\medskipamount]
x_t=1:
&&	V_0(1x_{t+1}^n)&=V_1(x_{t+1}^n)+2
		\label{eq:V_b0_x1}\\
&&	V_1(1x_{t+1}^n)&=\min\big\{V_0(x_{t+1}^n),\ 
		V_1(x_{t+1}^n)+2\big\}
		\label{eq:V_b1_x1}\\
&&	V_2(1x_{t+1}^n)&=V_1(x_{t+1}^n)
		\label{eq:V_b2_x1}\\[\medskipamount]
x_t=2:
&&	V_0(2x_{t+1}^n)&=V_0(x_{t+1}^n)+2
		\label{eq:V_b0_x2}\\
&&	V_1(2x_{t+1}^n)&=V_0(x_{t+1}^n)+2
		\label{eq:V_b1_x2}\\
&&	V_2(2x_{t+1}^n)&=V_0(x_{t+1}^n)
		\label{eq:V_b2_x2}
\end{align}

We start with eq.~\eqref{eq:V_b0_x0}.
We will show that $V_0(x_{t+1}^n)\leq V_2(x_{t+1}^n)+2$ for any value of $x_{t+1}^n$, which will allow us to remove an edge from the graph (specifically the edge labeled 0(2) going from $b=0$ to $b=2$).
First, from~\eqref{eq:V_b0_x1} and~\eqref{eq:V_b2_x1},
$V_0(1x_{t+2}^n)=V_2(1x_{t+2}^n)+2$.
Second, from~\eqref{eq:V_b0_x2} and~\eqref{eq:V_b2_x2},
$V_0(2x_{t+2}^n)=V_2(2x_{t+2}^n)+2$.
Finally, from~\eqref{eq:V_b0_x0} and~\eqref{eq:V_b2_x0},
\[V_0(0x_{t+2}^n)\leq V_2(x_{t+2}^n)+2=V_2(0x_{t+2}^n)+2.\]
We conclude that $V_0(x_{t+1}^n)\leq V_2(x_{t+1}^n)+2$ for all $x_{t+1}$, hence we can replace~\eqref{eq:V_b0_x0} with
\begin{equation}
V_0(0x_{t+1}^n)=V_0(x_{t+1}^n).
\label{eq:V_b0_x0_new}
\end{equation}
We continue with eq.~\eqref{eq:V_b1_x0}.
By applying similar arguments, we can show that 
$V_1(x_{t+1}^n)\leq V_2(x_{t+1}^n)+2$ for all $x_{t+1}$, therefore~\eqref{eq:V_b1_x0} can be replaced with
\begin{equation}
V_1(0x_{t+1}^n)=V_1(x_{t+1}^n),
\label{eq:V_b1_x0_new}
\end{equation}
and the edge labeled 0(2) going from $b=1$ to $b=2$ can be removed from the graph.
We are left with eq.~\eqref{eq:V_b1_x1}; in this case we show by induction that $V_1(x_{t}^n)\leq V_0(x_{t}^n)\leq V_1(x_{t}^n)+2$ for all $t=1,\ldots,n$.
Starting with $t=n+1$, we have $V_0(\emptyset)=V_1(\emptyset)=0$.
Assume 
\[V_1(x_{t+1}^{n})\leq V_0(x_{t+1}^{n})\leq V_1(x_{t+1}^{n})+2.\]
From~\eqref{eq:V_b0_x2} and~\eqref{eq:V_b1_x2},
\[V_1(2x_{t+1}^n)=V_0(2x_{t+1}^n)\leq V_1(2x_{t+1}^n)+2.\]
From~\eqref{eq:V_b0_x0_new} and~\eqref{eq:V_b1_x0_new},
\[V_1(0x_{t+1}^n)\leq V_0(0x_{t+1}^n)\leq V_1(0x_{t+1}^n)+2.\]
Finally, since the induction assumption implies
\[
V_0(x_{t+1}^n)\leq V_1(x_{t+1}^n)+2\leq V_0(x_{t+1}^n)+2,
\]
and by~\eqref{eq:V_b1_x1}, $V_1(1x_{t+1}^n)=V_0(x_{t+1}^n)$. Along with~\eqref{eq:V_b0_x1}, we get
\[
V_1(1x_{t+1}^n)\leq V_0(1x_{t+1}^n)\leq V_1(1x_{t+1}^n)+2.
\]
We conclude that eq.~\eqref{eq:V_b1_x1} can be replaced with
\begin{equation}
V_1(1x_{t+1}^n)=V_0(x_{t+1}^n),
\label{eq:V_b1_x1_new}
\end{equation}
and the loop edge labeled 1(2) going from $b=1$ to itself can be removed.

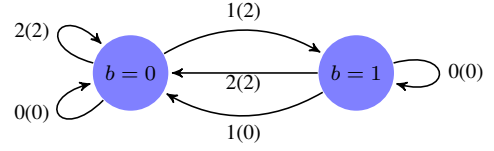
\begin{figure}
\centering
\def \nodedistance {3cm}
\begin{tikzpicture}[->,>=stealth',shorten >=1pt,semithick,font=\footnotesize]
\tikzstyle{every state}=[draw=none,fill=blue!50]
\node[state] (b0) at (0,0) {$b=0$};
\node[state] (b1) at (\nodedistance,0) {$b=1$};

\path
(b0) edge [in=135,out=165,loop] node [left] {2(2)} ()
	 edge [in=195,out=225,loop] node [left] {0(0)} ()
	 edge [bend left] node [above] {1(2)} (b1)
(b1) edge [in=-15,out=15,loop] node [right] {0(0)} ()
     edge node [below=-3] {2(2)} (b0)
     edge [bend left] node [below] {1(0)} (b0);
\end{tikzpicture}
\caption{Final two-state graph for construction of $x^n$ and minimal $e^n$.
The nodes represent the battery state $b_t$, and the edges are labeled $x_t(e_t)$.
Since each node has exactly 3 outgoing edges, one for each $x\in\mathcal{X}$, this defines a function $e^\star$ s.t. $e_t=e^\star(b_t,x_t)$.
Moreover, the next state $b_{t+1}$ is determined by $b_t$ and $x_t$, hence this defines the function $b_{t+1}=f_B(b_t,x_t)$.
}
\label{fig:CM_codeword_two_state_graph}
\end{figure}

\begin{figure*}[!b]
\vspace*{4pt}
\hrulefill
\normalsize
\setcounter{MYtempeqncnt}{\value{equation}}
\setcounter{equation}{127}
\begin{align}
\lambda+h(0)&=\max_{(p_0,p_1,p_2)}\big\{H_3(p_0,p_1,p_2)
	-2\rho(p_1+p_2)+(p_0+p_2)h(0)+p_1h(1)\big\}
	\label{eq:example_CM_Bellman_b0}\\
\lambda+h(1)&=\max_{(p_0,p_1,p_2)}\big\{H_3(p_0,p_1,p_2)
	-2\rho p_2+(p_1+p_2)h(0)+p_0h(1)\big\}
	\label{eq:example_CM_Bellman_b1}
\end{align}
\setcounter{equation}{\value{MYtempeqncnt}}
\end{figure*}

We are left with the graph in Fig.~\ref{fig:CM_pruned_graph}.
We have essentially shown that it is enough to consider codewords pairs $(x^n,e^n)$ produced by this graph without loss of optimality.
Note that each node has exactly 3 outgoing nodes, one for each $x\in\mathcal{X}$.
Therefore, it defines a \emph{minimal} function $e_t=e^\star(b_t,x_t)$.
By Lemma~\ref{lemma:minimal_function_capacity}:
\[
C_M(\Gamma)=\lim_{n\to\infty}\frac{1}{n}\
\max_{\substack{p(x^n):\\ \sum_{t=1}^{n}e^\star(B_t,X_t)\leq n\Gamma}}
H(X^n).
\]

Returning to our MDP formulation of Section~\ref{subsec:charger_knows_M_noiseless}, specifically Table~\ref{tab:charger_knows_M_general_DP}, we see that the action, which is a probability distribution $p(x_t,e_t|b_t)$, can be written as
\[
p(x_t,e_t|b_t)=p(x_t|b_t)1(e_t|b_t,x_t),
\]
where $1(e_t|b_t,x_t)=1_{\{e_t=e^\star(b_t,x_t)\}}$ is fixed.
Hence the action is simply $u_t(x_t|b_t)=p(x_t|b_t)$.
Note that the action space is the set of all stochastic matrices $p(x|b)$, since the energy constraints are satisfied completely by $e^\star(b,x)$ for every $b,x$.

With this simplified action, we will now show that we can in fact reduce the state space to a finite set of points.
Recall that we assume $B_1=2$, therefore the initial state is $\mathbf{s}_1=(0,0, 1)$.
We will show by induction that the state at any time $t$ must be a singleton, i.e. 
$s_t(b_t)=1_{\{b_t=b'\}}$ for some $b'\in\mathcal{B}$.
Assuming the above for time $t$, we have $s_{t+1}$ from~\eqref{eq:CM_DP_state_dynamics}:
\begin{align*}
s_{t+1}(b_{t+1})
&=\frac{\sum_{e_t,b_t}s_t(b_t)u_t(x_t|b_t)1(e_t|b_t,x_t)1(b_{t+1}|b_t,x_t,e_t)}
	{\sum_{e_t,b_t}s_t(b_t)u_t(x_t|b_t)1(e_t|b_t,x_t)}\\*
&=\frac{u_t(x_t|b')1(b_{t+1}|b',x_t,e^\star(b',x_t))}{u_t(x_t|b')}\\
&=1(b_{t+1}|b',x_t,e^\star(b',x_t))\\
&=1_{\{b_{t+1}=b''\}},
\end{align*}
where $b''=\min\{b'+e^\star(b',x_t),2\}-x_t$.
We conclude that the state space can be reduced to the finite set
\[
\mathcal{S}=\big\{(1,0,0),\ (0,1,0),\ (0,0,1)\big\}.
\]
Now, since $\mathbf{s}_t$ is a degenerate probability distribution of $b_t$, we can replace the state with $b_t$, and the state space with $\mathcal{B}=\{0,1,2\}$.
Moreover, the action can be reduced to a probability distribution over $\mathcal{X}$: $u_t(x_t)=p(x_t)$.

The state transition function $b_{t+1}=f_B(b_t,x_t)$ depends only on the state and the disturbance, and is given by the state transition graph in Fig.~\ref{fig:CM_pruned_graph}.
Observe that the node $b=2$ has no incoming edges, i.e. it is a transient state. 
The system starts in this state,
however, unless $X_t=0$ w.p. 1 (which will give zero rate), it will stay in this state for a finite amount of time. We can therefore remove this state without decreasing capacity, and the remaining state diagram is depicted in Fig.~\ref{fig:CM_codeword_two_state_graph}.

The simplified MDP formulation is summarized in Table \ref{tab:CM_example_DP}.
We are now ready to apply Theorem~\ref{thm:Bellman}, obtaining the following Bellman equation for $b=0,1$:
\begin{equation}
\lambda+h(b)=\max_{p(x)}\big\{
H(X)-\rho\mathbb{E}[e^\star(b,X)]
+h(f_B(b,X))
\big\}.
\label{eq:example_CM_Belmman}
\end{equation}
Let $(p_0,p_1,p_2)$ denote a probability distribution over $\mathcal{X}=\{0,1,2\}$ and 
$H_3(p_0,p_1,p_2)\triangleq-\sum_{i=0}^{2}p_i\log p_i$
denote the ternary entropy function.
Then \eqref{eq:example_CM_Belmman} yields \eqref{eq:example_CM_Bellman_b0} and~\eqref{eq:example_CM_Bellman_b1} at the bottom of the page.\addtocounter{equation}{2}

\begin{table}
\renewcommand{\arraystretch}{1.5}
\caption{Simplified MDP Formulation of Capacity with Fully Cognitive Charger, for input and energy alphabets $\mathcal{X}=\{0,1,2\}$ and $\mathcal{E}=\{0,2\}$}
\label{tab:CM_example_DP}
\centering
\begin{tabular}{|l|p{4.5cm}|}
\hline
state
&$b_t$, the state of the transmitter's battery\\
\hline
state space& $\mathcal{S}=\{0,1\}$\\
\hline
action&$\mathbf{u}_t=[u_t(x):x\in\mathcal{X}]$, a probability distribution of the input\\
\hline
action space&$\mathcal{U}$, the probability simplex in $\mathbb{R}^{|\mathcal{X}|}$\\
\hline
reward&$g(b,\mathbf{u})=H(X)-\rho\mathbb{E}[e^\star(b,X)]$,
where $e^\star(b,x)$ is given by Fig.~\ref{fig:CM_codeword_two_state_graph}\\
\hline
disturbance&$x_t$, the channel input\\
\hline
disturbance distribution&$p(x_t)$, determined by $\mathbf{u}_t$\\
\hline
state dynamics&
$b_{t+1}=f_B(b_t,x_t)$ according to Fig.~\ref{fig:CM_codeword_two_state_graph}\\
\hline
\end{tabular}
\end{table}

Choosing $h(0)=0$ and applying Lemma~\ref{lemma:max_entropy_plus_linear} in Appendix~\ref{sec:proof_of_lemma_max_entropy_plus_linear}:
\begin{align*}
\lambda&=\log\big(1+2^{h(1)-2\rho}+2^{-2\rho}\big),\\
\lambda+h(1)&=\log\big(2^{h(1)}+1+2^{-2\rho}\big),
\end{align*}
which gives
\begin{align}
J_M(\rho)=\lambda&=\log\big(2+2^{-2\rho}+2^{-2\rho}\sqrt{5+2^{2\rho+2}}\big)-1,
	\label{eq:Jrho_CM}\\
h(1)&=\log\big(-1+\sqrt{5+2^{2\rho+2}}\big)-1,
\end{align}
and the maximizing probability distributions $(p_0^\star,p_1^\star,p_2^\star)$ are
\begin{align}
p^\star(x|b=0)&=
	\frac{\left(2^{1+2\rho},\ 
		-1+\sqrt{5+2^{2+2\rho}},\ 
		2\right)}
	{1+2^{1+2\rho}+\sqrt{5+2^{2+2\rho}}}
	,\label{eq:P_X_B0}\\
p^\star(x|b=1)&=\frac{\left(-1+\sqrt{5+2^{2+2\rho}},\ 
		2,\ 2^{1-2\rho}\right)}
	{1+2^{1-2\rho}+\sqrt{5+2^{2+2\rho}}}.\label{eq:P_X_B1}
\end{align}

Next, we will compute $\underline{\Gamma}=\liminf_{n\to\infty}\frac{1}{n}\sum_{t=1}^{n}\mathbb{E}[E_t]$ and $\overline{\Gamma}=\limsup_{n\to\infty}\frac{1}{n}\sum_{t=1}^{n}\mathbb{E}[E_t]$.
We will do so by first finding the steady state distribution of $B_t$ using the state transitions depicted in Fig.~\ref{fig:CM_codeword_two_state_graph}:
\begin{align*}
p^\star(b=1)&=p^\star(x=1|b=0)\cdot (1-p^\star(b=1))\nonumber\\*
&\qquad +p^\star(x=0|b=1)\cdot p^\star(b=1).
\end{align*}
Solving for $p^\star(b=1)$, substituting~\eqref{eq:P_X_B0} and~\eqref{eq:P_X_B1}, and simplifying:
\begin{equation}
p^\star(b=1)=\frac{1}{2}\left(1-\frac{1}{\sqrt{5+2^{2\rho+2}}}\right).
\label{eq:PB1_steady_state}
\end{equation}
Next, according to Fig.~\ref{fig:CM_codeword_two_state_graph}, we have $\mathbb{E}[E_t|B_t=0]=2\big(1-p^\star(x=0|b=0)\big)$ and $\mathbb{E}[E_t|B_t=1]=2p^\star(x=2|b=1)$.
Substituting~\eqref{eq:P_X_B0}--\eqref{eq:PB1_steady_state} and simplifying yields $\underline{\Gamma}=\overline{\Gamma}=\Gamma(\rho)$, where:
\begin{align}
\Gamma(\rho)
&=\sum_{b=0,1}p^\star(b)\mathbb{E}[E_t|B_t=b]\nonumber\\
&=\frac{1}{2^{4\rho}-1}\left(
	\frac{2^{2\rho+1}(2^{2\rho}+2)}{\sqrt{5+2^{2\rho+2}}}
	-2\right).
\label{eq:Gamma_rho_CM}
\end{align}
Observe that for $\rho\geq0$ we have $0<\Gamma(\rho)\leq10/9$.

To find the capacity as a function of $\Gamma$, it remains to solve for $\rho(\Gamma)$ and substitute in~\eqref{eq:Jrho_CM}.
Letting $\zeta=\sqrt{5+2^{2\rho+2}}$, eq.~\eqref{eq:Gamma_rho_CM} becomes:
\[
\Gamma=\frac{2(\zeta^2+2\zeta+5)}{\zeta(\zeta-1)(\zeta+3)}.
\]
Rearranging, we get the cubic equation~\eqref{eq:zeta_cubic_equation} for $\zeta$.
For $0<\Gamma\leq10/9$, there is exactly one real root to this equation.
Substituting~\eqref{eq:Jrho_CM} in $C_M(\Gamma)=J_M(\rho)+\rho\Gamma(\rho)$, we get~\eqref{eq:example_CM_Gamma}.
Finally, for the remaining values of $\Gamma$: Clearly $C_M(0)=0$.
For $\Gamma>10/9$, we apply the optimal solution obtained from $\rho=0$.
This gives $C_M(10/9)=\log 3$, which is the unconstrained capacity.\qed

\section{Maximization of Sum of Entropy and Linear Function}
\label{sec:proof_of_lemma_max_entropy_plus_linear}

\begin{lemma}
\label{lemma:max_entropy_plus_linear}
Let $X\in\mathcal{X}$ be a discrete finite-alphabet random variable, and $a(x)$ be a real function. Then
\[
\max_{p(x)}\{H(X)-\mathbb{E}[a(X)]\}=\log\Xi,
\]
where $\Xi=\sum_{x\in\mathcal{X}}2^{-a(x)}$
and the optimal distribution is 
$p(x)=2^{-a(x)}/\Xi$.
\end{lemma}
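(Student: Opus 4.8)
The plan is to reduce the claim to the nonnegativity of relative entropy (Gibbs' inequality), which avoids any explicit calculus. First I would introduce the candidate distribution $q(x)=2^{-a(x)}/\Xi$ announced in the statement, and observe that it is indeed a valid probability distribution since $\Xi=\sum_{x}2^{-a(x)}>0$ (the alphabet is finite and each $a(x)$ is real) and each $q(x)\geq 0$ with $\sum_x q(x)=1$. The key algebraic move is to rewrite $a(x)$ in terms of $q$: taking base-$2$ logarithms of $q(x)=2^{-a(x)}/\Xi$ gives $a(x)=-\log q(x)-\log\Xi$.

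Next I would substitute this into the objective. For an arbitrary distribution $p(x)$ on $\mathcal{X}$,
\begin{align*}
H(X)-\mathbb{E}[a(X)]
&=-\sum_{x}p(x)\log p(x)-\sum_{x}p(x)a(x)\\
&=-\sum_{x}p(x)\log p(x)+\sum_{x}p(x)\log q(x)+\log\Xi\\
&=-\sum_{x}p(x)\log\frac{p(x)}{q(x)}+\log\Xi\\
&=-D(p\,\|\,q)+\log\Xi.
\end{align*}
Since $D(p\,\|\,q)\geq 0$ for any two distributions, with equality if and only if $p=q$, this immediately yields $H(X)-\mathbb{E}[a(X)]\leq\log\Xi$, and the bound is attained exactly when $p(x)=q(x)=2^{-a(x)}/\Xi$. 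This simultaneously establishes the value of the maximum and identifies the optimizer, so the two assertions of the lemma fall out of a single computation.

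Honestly there is no serious obstacle here; the only points requiring a word of care are (i) ensuring $\Xi$ is finite and positive so that $q$ is well defined, which is guaranteed by finiteness of $\mathcal{X}$, and (ii) remembering that all logarithms in the paper are base $2$, so that $2^{-a(x)}$ and $\log$ are consistent and no stray constant factors appear. I would therefore present the relative-entropy argument as the main proof. As an alternative one could solve the concave program directly via a Lagrange multiplier for the normalization constraint $\sum_x p(x)=1$, differentiate $-\sum_x p(x)\log p(x)-\sum_x p(x)a(x)-\nu\big(\sum_x p(x)-1\big)$, and recover $p(x)\propto 2^{-a(x)}$; but this requires separately invoking concavity to conclude that the stationary point is the global maximum, so the Gibbs' inequality route is cleaner and I would prefer it.
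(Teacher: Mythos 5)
Your proof is correct, but it takes a genuinely different route from the paper's. The paper proves Lemma~\ref{lemma:max_entropy_plus_linear} by the Lagrangian/KKT computation you mention only as an alternative: it forms $L=\sum_{x}p(x)[-\log p(x)-a(x)+\lambda(x)-\nu]$, differentiates with respect to $p(x)$, uses complementary slackness to drop the multipliers $\lambda(x)$, normalizes to get $p(x)=2^{-a(x)}/\Xi$, and substitutes back into the objective. Your Gibbs'-inequality decomposition $H(X)-\mathbb{E}[a(X)]=-D(p\,\|\,q)+\log\Xi$ reaches the same conclusion but has two small advantages: it delivers the global optimality of the candidate in the same line as the stationarity argument (the paper's KKT step identifies a stationary point and implicitly relies on concavity of the objective to conclude it is the global maximum, a point left unstated there), and it yields uniqueness of the optimizer for free from the equality condition $D(p\,\|\,q)=0\iff p=q$. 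The paper's route, on the other hand, is the standard mechanical recipe and generalizes directly to settings with additional inequality constraints where a closed-form tilted distribution is not immediately guessable. Both are complete proofs; your observation about needing $\Xi$ finite and positive is correctly disposed of by finiteness of $\mathcal{X}$.
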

\begin{proof}
Applying KKT conditions, the Lagrangian is:
\[
L=\sum_{x\in\mathcal{X}}p(x)[-\log p(x)-a(x)+\lambda(x)-\nu].
\]
Taking derivative w.r.t. $p(x)$:
\[
\frac{\partial L}{\partial p(x)}=
-\log p(x)-a(x)+\lambda(x)-\nu-\log e=0,
\]
\[
p(x)=2^{-a(x)+\lambda(x)-{\nu}-\log e}.
\]
Since $p(x)>0$, by complementary slackness $\lambda(x)=0$.
Since $\sum_{x\in\mathcal{X}}p(x)=1$, we get
$p(x)=2^{-a(x)}/\Xi$, where $\Xi=\sum_{x\in\mathcal{X}}2^{-a(x)}$.
Substituting the optimal $p(x)$ in the objective, we get the result of Lemma~\ref{lemma:max_entropy_plus_linear}.
\end{proof}

\bibliographystyle{IEEEtran}
\bibliography{remotely_powered_comm}

\begin{IEEEbiographynophoto}{Dor Shaviv}
(S'13) received his B.Sc. (summa cum laude) degrees in electrical engineering and physics from the Technion---Israel Institute of Technology, Haifa, Israel, in 2007.
He received his M.Sc. degree in electrical engineering from the same institution in 2012.
During 2007--2013 he worked as an R\&D engineer in the Israel Defense Forces.
He is currently a Ph.D. candidate in the Electrical Engineering Department at Stanford University, and is a recipient of a Robert Bosch Stanford Graduate Fellowship.
\end{IEEEbiographynophoto}

\begin{IEEEbiographynophoto}{Ayfer \"Ozg\"ur}
(M'06) received her B.Sc. degrees in electrical engineering and physics from Middle East Technical University, Turkey, in 2001 and the M.Sc. degree in communications from the same university in 2004. From 2001 to 2004, she worked as hardware engineer for the Defense Industries Development Institute in Turkey. She received her Ph.D. degree in 2009 from the Information Processing Group at EPFL, Switzerland. In 2010 and 2011, she was a post-doctoral scholar with the Algorithmic Research in Network Information Group at EPFL. She is currently an Assistant Professor in the Electrical Engineering Department at Stanford University. Her research interests include network communications, wireless systems, and information and coding theory. Dr.~\"Ozg\"ur received the EPFL Best Ph.D. Thesis Award in 2010 and a NSF CAREER award in 2013.
\end{IEEEbiographynophoto}

\begin{IEEEbiographynophoto}{Haim H. Permuter}
(M'08-SM'13) received his B.Sc.\@ (summa cum laude) and M.Sc.\@
(summa cum laude) degrees in Electrical and Computer Engineering
from the Ben-Gurion University, Israel, in 1997 and 2003,
respectively, and the Ph.D. degree in Electrical Engineering from
Stanford University, California in 2008.

Between 1997 and 2004, he was an officer at a research and
development unit of the Israeli Defense Forces. Since 2009 he is
with the department of Electrical and Computer Engineering at
Ben-Gurion University where he is currently an associate professor.

Prof. Permuter is a recipient of several awards, among them the
Fullbright Fellowship, the Stanford Graduate Fellowship (SGF), Allon
Fellowship, and the U.S.-Israel Binational Science Foundation
Bergmann Memorial Award. Haim is currently serving on the editorial
boards of the IEEE Transactions on Information Theory.
\end{IEEEbiographynophoto}

\end{document}